\newtheorem{theorem}{Theorem}[section]
\newtheorem{lemma}[theorem]{Lemma}
\newtheorem{proposition}[theorem]{Proposition}
\newtheorem{corollary}[theorem]{Corollary}
\newtheorem{definition}[theorem]{Definition}
\newtheorem{example}[theorem]{Example}
\newtheorem{remark}[theorem]{Remark}
\def\<{{\langle}} \def\>{{\rangle}}       
\title{Sum of squares bounds for the ordering principle}
\author{
Aaron Potechin \\
University of Chicago \\
potechin@uchicago.edu
}
\date{\today}
\begin{document}
\maketitle

\begin{abstract}
In this paper, we analyze the sum of squares hierarchy (SOS) on the ordering principle on $n$ elements (which has $N = \Theta(n^2)$ variables). We prove that degree $O(\sqrt{n}log(n))$ SOS can prove the ordering principle. We then show that this upper bound is essentially tight by proving that for any $\epsilon > 0$, SOS requires degree $\Omega(n^{\frac{1}{2} - \epsilon})$ to prove the ordering principle.
\end{abstract}

\thispagestyle{empty}
.\newpage

\section{Introduction}
In proof complexity, we study how easy or difficult it is to prove or refute various statements. Proof complexity is an extremely rich field, so we will not attempt to give an overview of proof complexity here (for a recent survey of proof complexity, see \cite{SIGACTproofcomplexity}). Instead, we will only describe the particular proof system and the particular statement we are considering, namely the sum of squares hierarchy (SOS), and the ordering principle. 

SOS can be described in terms of sum of squares/Positivstellensatz proofs (which we write as SOS proofs for brevity). SOS proofs have the following nice properties:
\begin{enumerate}
\item SOS proofs are broadly applicable as they are complete for systems of polynomial equations over $\mathbb{R}$. In other words, for any system of polynomial equations over $\mathbb{R}$ which is infeasible, there is an SOS proof that it is infeasible \cite{Ste74}.
\item SOS proofs are surprisingly powerful. In particular, SOS captures both spectral methods and powerful inequalities such as Cauchy-Schwarz and variants of hypercontractivity \cite{BBH+12}, which means that much of our mathematical reasoning can be captured by SOS proofs.
\item In some sense, SOS proofs are simple. In particular, SOS proofs only use polynomial equalities and the fact that squares are non-negative over $\mathbb{R}$.
\end{enumerate}
SOS has been extensively studied, so we will not give an overview of what is known about SOS here. To learn more about SOS, see the following survey on SOS \cite{BS14} and the following recent seminars/courses on SOS \cite{BSOScourse, KSSOScourse, LSOScourse, PSOScourse}. 

The ordering principle (which has $N = \Theta(n^2)$ variables) states that if we have elements $a_1,\dots,a_n$ which have an ordering and no two elements are equal then some element $a_i$ must be minimal. The ordering principle is a very interesting example in proof complexity because for several proof systems, it has a small size proof but any proof must have high width/degree.

The ordering principle was first considered by Krishnamurthy \cite{Kri85} who conjectured that it was hard for the resolution proof system. This conjecture was refuted by Stalmark \cite{Sta96}, who showed that the ordering principle has a polynomial size resolution proof based on induction. However, any resolution proof of the ordering principle must have width $\Omega(n) = \Omega(\sqrt{N})$. While this is a trivial statement because it takes width $n$ to even describe the ordering principle, Bonet and Galesi \cite{BG01} showed that the ordering principle can be modified to give a statement which can be described with constant width but resolution still requires width $\Omega(n) = \Omega(\sqrt{N})$ to prove it. This showed that the width/size tradeoff of Ben-Sasson and Wigderson \cite{BW01} (which was based on the degree/size tradeoff shown for polynomial calculus by Impagliazzo, Pudl\'{a}k, and Sgall \cite{IPS99}) is essentially tight. 

Later, $\Omega(n)$ degree lower bounds for the ordering principle were also shown for polynomial calculus \cite{GL10} and for the Sherali-Adams hierarchy. However, non-trivial SOS degree bounds for the ordering principle were previously unknown. Thus, a natural question is, does SOS also require degree $\Omega(n)$ to prove the ordering principle or is there an SOS proof of the ordering principle which has smaller degree?
\subsection{Our Results}
In this paper, we show almost tight upper and lower SOS degree bounds for the ordering principle. In particular, we show the following theorems: 
\begin{theorem}\label{thm:OPSOSupperbound}
Degree $O(\sqrt{n}log(n))$ SOS can prove the ordering principle on $n$ elements.
\end{theorem}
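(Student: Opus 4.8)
The ordering principle says: given variables encoding a total order on $\{a_1,\dots,a_n\}$ with no ties, some element is minimal. Its negation asserts that every element has a predecessor. The natural refutation strategy (mirroring Stålmarck's resolution proof) is induction: if every element has a predecessor, then by following predecessors we eventually cycle, contradicting transitivity and anti-symmetry. I want to turn this into a low-degree SOS certificate. The key quantitative gain — getting $\sqrt{n}$ rather than $n$ — must come from a divide-and-conquer / "doubling" argument rather than the naive linear chase.

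The plan is to build an explicit degree-$O(\sqrt{n}\log n)$ sum-of-squares refutation of the negation of the ordering principle, which asserts that the variables $x_{ij}$ (for $i\neq j$) satisfy the Boolean axioms $x_{ij}^2=x_{ij}$, totality $x_{ij}+x_{ji}=1$, transitivity $x_{ij}x_{jk}(1-x_{ik})=0$, and the ``no minimum'' clauses $\sum_{j\neq i}x_{ji}\geq 1$ for every $i$. I would first pass to the rank variables $d_i:=\sum_{j\neq i}x_{ji}$, the number of elements below $a_i$. From transitivity (used in the form $x_{ij}x_{ki}\leq x_{ij}x_{kj}$) and totality one derives in constant degree the robust distinctness inequalities $(d_i-d_j)^2\geq 1$ for all $i\neq j$, together with $1\leq d_i\leq n-1$ (the lower bound is exactly the no-minimum axiom) and the identity $\sum_i d_i=\binom{n}{2}$. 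This reduces the whole problem to a clean question: find a low-degree SOS refutation of ``$d_1,\dots,d_n\in[1,n-1]$ with $(d_i-d_j)^2\geq 1$ for all $i\neq j$'', which is infeasible because $n$ reals at pairwise distance $\geq 1$ need an interval of length $\geq n-1>n-2$ (equivalently, $n$ distinct values each $\geq 1$ sum to at least $\binom{n+1}{2}>\binom{n}{2}$).

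The obvious ways to finish are all degree $\Theta(n)$, and this is unavoidable for any purely linear (Sherali--Adams) argument, as noted in the introduction: mimicking St\aa lmarck's inductive resolution proof builds ``$a_i$ is minimal among $a_1,\dots,a_k$'' certificates of degree $\Theta(k)$, and the final ``some element is globally minimal'' object has degree $n-1$; the pigeonhole proof of the reduced statement wants to manipulate the exact-count quantities $\#\{i:d_i=v\}$ or the threshold counts $\#\{i:d_i\leq k\}$, but these are exact symmetric functions of the $n-1$ bits $x_{ji}$ and hence have degree $\Theta(n)$. So the quadratic saving must genuinely exploit the squares. My plan for the crux is a dyadic divide-and-conquer over the $\Theta(\log n)$ rank scales $s=2,4,8,\dots,n$: at scale $s$ one works with the downward-closed set $S_s$ of elements of rank at most $s$ --- crucially $S_s$ is itself an instance of the negated ordering principle on $|S_s|\leq s$ elements --- and one wants to handle it with degree only $O(\sqrt{n})$. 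Two ingredients should make this possible. First, replace the high-degree exact-threshold symmetric functions by degree-$O(\sqrt{n})$ ``soft threshold'' polynomials (of Chebyshev type), which agree with the true rank cutoff except on a window of $\Theta(\sqrt{n})$ ranks around it; these can be arranged to be nonnegative on the Boolean cube in the one-sided direction we need. Second, use Cauchy--Schwarz --- available to SOS --- in the form $\bigl(\sum_{i\in S}c_i\bigr)^2\leq|S|\sum_{i\in S}c_i^2$ to convert $|S|$ per-element facts about a size-$|S|$ set into a single degree-$O(\sqrt{n})$ global fact rather than paying a multiplicative factor of $|S|$. Stitching the $\Theta(\log n)$ scales together additively (each scale's error window being absorbed by the next) then yields total degree $O(\sqrt{n}\log n)$.

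The step I expect to be the main obstacle is precisely this stitching. One must guarantee that the degrees of the $\Theta(\log n)$ sub-certificates, and of the $\Theta(\sqrt{n})$ unit-length rank windows used within a single scale, add rather than multiply, and that the window errors introduced by one soft threshold do not accumulate as one moves through the scales. Concretely this means choosing the Positivstellensatz multipliers so that the final identity $-1=\sum(\text{squares})+\sum_a g_a\cdot(\text{axiom}_a)$ has every term of degree $O(\sqrt{n}\log n)$, and checking that the soft-threshold polynomials admit the required sum-of-squares-plus-axiom-multiple decomposition on the cube so that the approximation is one-sided where it matters. I would also need to argue that it is the \emph{coherent, simultaneous} use of the short-interval infeasibility across all scales --- not any single instance of it --- that is affordable, since the flat statement on $n$ points may well not have an $o(n)$-degree SOS proof on its own; exploiting that every downward-closed set is again an ordering-principle instance is what keeps the per-scale cost at $O(\sqrt{n})$.
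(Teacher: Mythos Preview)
Your proposal is a plan rather than a proof: you explicitly identify the ``stitching'' of the $\Theta(\log n)$ dyadic scales as the main obstacle and leave it unresolved. That step is not a technicality --- it is the entire content of the theorem --- so as written the proposal does not establish the result. More importantly, the divide-and-conquer architecture you sketch is unnecessary, and the paper's actual proof is dramatically simpler.

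Here is what the paper does. Let $w_1=\sum_{i\neq 1}x_{i1}-1$ (this is your $d_1-1$); the axiom encoding has an auxiliary variable $z_1$ with $z_1^2=w_1$. Under the uniform distribution on orderings, $w_1$ is uniform on $\{-1,0,1,\dots,n-2\}$. Take $g(w_1)=T_m\!\bigl(-1+\tfrac{2w_1}{n}\bigr)$ with $m=\lceil\tfrac12\sqrt{n}(\log_2 n+1)\rceil$; the standard extremal property of Chebyshev polynomials gives $|g(w_1)|\le 1$ on $[0,n-2]$ and $|g(-1)|\ge n$, so
\[
E_{U_n}\bigl[w_1\,g(w_1)^2\bigr]\;\le\;\tfrac{1}{n}\Bigl(-n^2+\textstyle\sum_{j=0}^{n-2}j\Bigr)\;<\;0.
\]
But $w_1 g(w_1)^2=(z_1 g(w_1))^2$ is a square, so the candidate pseudo-expectation $\tilde E_n$ fails at degree $O(\sqrt{n}\log n)$. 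The paper then shows (Theorem~\ref{onlyEtheorem}) that for any monomial $p$ in the $x_{ij}$, the identity $\tfrac{1}{n!}\sum_{\pi\in S_n}\pi(p)=\tilde E_n[p]$ follows from the ordering and transitivity axioms using only polynomial equalities of degree at most $2\deg(p)+2$; applying this to $p=(z_1 g(w_1))^2$ and symmetrizing yields the explicit SOS refutation $\tfrac{1}{n!}\sum_{\pi}\pi\bigl((z_1 g(w_1))^2\bigr)=\tilde E_n\bigl[(z_1 g(w_1))^2\bigr]<0$.

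So the key insight you are missing is that a \emph{single} Chebyshev polynomial in a \emph{single} rank variable already suffices. The $\sqrt{n}$ comes directly from the Chebyshev growth rate just outside $[-1,1]$ (to achieve magnitude $n$ at distance $2/n$ from the interval requires degree $\Theta(\sqrt{n}\log n)$), not from any recursive structure. Your reduction to the system $d_i\in[1,n-1]$, $(d_i-d_j)^2\ge 1$, $\sum d_i=\binom{n}{2}$ is correct in constant degree, and your instinct to use Chebyshev polynomials is right, but you deploy them as soft thresholds inside a multi-scale scheme whose coherence you cannot verify, whereas the paper uses one Chebyshev polynomial as the direct witness and then invokes symmetry once to globalize it.
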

\begin{theorem}\label{thm:OPSOSlowerbound}
For any constant $\epsilon > 0$ there is a constant $C_{\epsilon} > 0$ such that for all $n \in \mathbb{N}$, degree ${C_{\epsilon}}n^{\frac{1}{2} - \epsilon}$ SOS cannot prove the ordering principle on $n$ elements.
\end{theorem}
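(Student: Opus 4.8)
The target is a lower bound: degree $C_\epsilon n^{1/2-\epsilon}$ SOS cannot prove the ordering principle. So I need to construct a pseudo-expectation (a degree-$d$ SOS solution / moment functional) that is consistent with all the axioms of the ordering principle but assigns the "wrong" value, forcing any refutation to have degree $\omega(n^{1/2-\epsilon})$.

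Let me think about what the ordering principle is as a system of polynomial constraints. We have variables $x_{ij}$ for ordered pairs, intended to be $0/1$ indicators that $a_i < a_j$. The axioms are: $x_{ij}^2 = x_{ij}$ (Boolean), $x_{ij} + x_{ji} = 1$ (totality + antisymmetry combined — exactly one of $i<j$ or $j<i$), transitivity $x_{ij}x_{jk}(1-x_{ik}) = 0$ or the inequality form $x_{ij} + x_{jk} - x_{ik} \le 1$, and the crucial "no minimal element" axiom: for each $j$, $\prod_{i \ne j}(1 - x_{ij}) = 0$ — wait, that would be degree $n$. Actually the standard formulation uses auxiliary variables or the clause $\bigvee_{i \ne j} x_{ij}$ for each $j$ (someone is below $j$), i.e., $1 - \prod_{i\ne j}(1-x_{ij}) \ge$ ... hmm. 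Let me reconsider: actually there may be extra variables $y_{ij}$ meaning "$i$ is the predecessor witnessing $j$ is not minimal" or the principle is stated with a "pointer" structure. I should check the standard Bonet–Galesi / GL formulation. The cleanest SOS-friendly version: variables $x_{ij}$, and for each $j$ an extra set saying some $i$ is below $j$; the negation asserts a total order with a minimum element below everything, and we want to show this is consistent for low-degree pseudo-expectations even though every actual total order on a finite set has a minimum.

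Now the heart of the argument. The natural approach is the one used for Sherali–Adams / polynomial calculus lower bounds, lifted to SOS: build the pseudo-expectation from a distribution over "partial information" and show positivity of the moment matrix. A total order on $[n]$ with no minimum is impossible, but a total order on $[n]$ that is really an order on a *larger* ground set restricted down — or better, think of the elements as embedded in $\mathbb{Z}$ (or $\mathbb{Q}$) and "forget" where the bottom is. The classic trick (Stålmarck/resolution-hardness side, and the SA lower bound): consider random injections of $[n]$ into a linearly ordered set and take expectations, but twist it so that from the viewpoint of any $d$ elements, it looks like there could always be one more element below. Since this is SOS and not just SA, I expect the construction to go through a *pseudo-distribution that is actually supported on genuine linear orders of a size-$(n{+}1)$ or $\mathbb{Z}$-like structure*, where the planted "element $0$" below everything is hidden: for any set $S$ of $\le d$ variables, the marginal looks locally like a consistent order, and positivity of the moment matrix follows because locally the pseudo-expectation agrees with a genuine probability distribution over orderings (so Cauchy–Schwarz / PSD-ness is automatic on low-degree test polynomials).

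I'd organize the proof as follows. Step 1: Fix the precise polynomial formulation of the ordering principle and its negation, including any auxiliary variables; state exactly what "degree $d$ SOS refutation" means here (a Positivstellensatz certificate $\sum \sigma_i g_i = -1$ with $\deg \le d$) and its dual object, a degree-$d$ pseudo-expectation $\tilde{\mathbb{E}}$. Step 2: Define a candidate $\tilde{\mathbb{E}}$ on monomials of degree $\le d$. The natural definition: $\tilde{\mathbb{E}}$ restricted to any $\le d$ variables equals the genuine expectation over a carefully chosen random linear order on $[n] \cup \{\star\}$ (with $\star$ planted as the global minimum) where $\star$'s identity is "smeared" so no low-degree query detects it — concretely, average over which actual element plays the role of the hidden bottom, or use a random bijection to $[n{+}1]$ and project. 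Step 3: Verify $\tilde{\mathbb{E}}$ satisfies all axioms as formal identities (Booleanity, antisymmetry, transitivity) — these are local, degree $O(1)$, so this is routine given the construction respects genuine orders. Step 4: Verify the "no minimum" clauses are satisfied: for each $j$, the clause $\bigvee_i x_{ij}$ has $\tilde{\mathbb{E}}$-value making the SOS refutation impossible — this is where the hidden bottom $\star$ provides the witness $i$ locally, even though globally on $[n]$ there's a minimum, and it must hold to degree $d$. Step 5 (the crux): prove the moment matrix $M$ of $\tilde{\mathbb{E}}$, indexed by degree-$\le d/2$ monomials, is PSD. If I've built $\tilde{\mathbb{E}}$ so that on any subspace spanned by monomials touching $\le d$ variables it coincides with a true expectation $\mathbb{E}_{\sigma}[\cdot]$ over an actual distribution of $\{0,1\}$-assignments, PSD-ness is immediate there; the work is a consistency/patching argument showing these local true-expectation views glue into a single global PSD form — I'd hope to get this from a "closure under conditioning" or a symmetry (the symmetric group $S_n$ or $S_{n+1}$ acts, and one can decompose the moment matrix into isotypic blocks and bound the smallest eigenvalue of each block by direct computation, showing it stays nonnegative until $d = \Theta(\sqrt{n})$). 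Step 6: extract the bound — the construction/eigenvalue computation should break only when $d \approx \sqrt{n}$ (matching Theorem 1.1), and tracking constants gives the $C_\epsilon n^{1/2-\epsilon}$ form, with the $\epsilon$ slack absorbing lower-order $\log$ and constant losses.

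The main obstacle, clearly, is Step 5: proving PSD-ness of the pseudo-moment matrix. For Sherali–Adams one only needs each constraint to have nonnegative pseudo-value (an $\ell_1$/LP condition), but SOS demands the full matrix be PSD, which is genuinely stronger and is exactly why SOS degree lower bounds are scarce. I expect the real content to be a careful spectral analysis: decomposing $M$ under the $S_{n+1}$-symmetry into irreducible blocks indexed by partitions (à la the representation-theoretic techniques in planted-clique / Tensor-PCA SOS lower bounds), and showing that the block corresponding to each shape has smallest eigenvalue $\ge 0$ precisely when $d \lesssim \sqrt{n}$ — the $\sqrt{n}$ threshold presumably arising from a balance between the $\binom{n}{2}$ variables and the degree, i.e., from when degree-$d$ multilinear functions of the $x_{ij}$ can "see" the planted bottom. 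A secondary difficulty is pinning down the right auxiliary-variable formulation so that the "no minimum" axioms are low-degree and the hidden-bottom trick interacts cleanly with them; getting this formulation wrong would make either the axioms unsatisfiable for $\tilde{\mathbb{E}}$ or the PSD analysis intractable.
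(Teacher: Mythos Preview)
Your outline has the right skeleton (define $\tilde{E}$, check axioms, prove PSD), but Step~5 contains a genuine gap, and the mechanism you propose would not work for this problem.

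First, the ``hidden bottom / local agreement with a true distribution'' idea is not available here. In the paper's encoding the non-minimality axioms are $\sum_{i\neq j}x_{ij}=1+z_j^2$, and the candidate $\tilde{E}_n$ is simply the uniform distribution over orderings of $[n]$ (no planted extra element). Under this distribution, $w_j:=\sum_{i\neq j}x_{ij}-1$ equals $-1$ with probability $1/n$, so $\tilde{E}_n$ is \emph{not} the marginal of any genuine distribution once the $z_j$ variables are involved: you are asking $z_j^2$ to take the value $-1$. Hence your ``PSD is immediate because locally it's a true expectation'' argument breaks exactly where the content is. A planted bottom outside $[n]$ does not help either, since the axiom $\sum_{i\neq j}x_{ij}\ge 1$ ranges only over $i\in[n]$.

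Second, the PSD proof is not a representation-theoretic block-eigenvalue computation of the kind used for planted clique. The paper does use $S_n$-symmetry, but only to reduce (via a result of \cite{Pot19,RSST18}) to test polynomials of the form $g=\bigl(\prod_{j\in A}z_j\bigr)g_A(x)$ that are symmetric outside a set of $O(d)$ distinguished indices. After fixing an order on those indices and changing variables to the gap counts $u_0,\dots,u_{d_2}$, the PSD condition collapses to a \emph{single-variable} statement: for every polynomial $g_*$ of degree $\le d$, $E_{\Omega_{n,d_2}}\bigl[(u-1)g_*(u)^2\bigr]\ge 0$, where $\Omega_{n,d_2}$ is an explicit discrete distribution on $\{0,\dots,n-d_2\}$. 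The remaining work is analytic: approximate $\Omega_{n,d_2}$ by the continuous measure $xe^{-x}\,dx$ on $[0,\infty)$, expand $g_*$ in the associated orthonormal (Laguerre-type) basis to prove $\int g_2^2(x)xe^{-x}\,dx \ge 10\Delta^2 g_2^2(-\Delta)$ for $\Delta\approx d_2/n$, and then control the discretization and tail errors. The $\sqrt{n}$ threshold comes out of this inequality (it is exactly the regime where the Chebyshev construction of the upper bound stops working), not from a generic ``degree-$d$ polynomials detect the bottom'' heuristic. None of this machinery---the reduction to one variable, the $xe^{-x}$ approximation, or the orthogonal-polynomial analysis---appears in your plan, and without it Step~5 is just a restatement of the goal.
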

Theorem \ref{thm:OPSOSupperbound} shows that looking at degree, SOS is more powerful than resolution, polynomial calculus, and the Sherali-Adams hierarchy for proving the ordering principle. This also implies that the ordering principle is not a tight example for the size/degree trade-off for SOS which was recently shown by Atserias and Hakoniemi \cite{AH18}. On the other hand, Theorem \ref{thm:OPSOSlowerbound} shows that Theorem \ref{thm:OPSOSupperbound} is essentially tight and thus the ordering principle does still give an example where there is a small SOS proof yet any SOS proof must have fairly high degree.
\subsection{Outline}
The remainder of the paper is organized as follows. In Section \ref{preliminariessection} we give some prelimiaries. In Section \ref{Esection}, we give natural pseudo-expectation values for the ordering principle. In Section \ref{SOSupperboundsection}, we prove our SOS upper bound. In Sections 5-9, we describe how to prove our SOS lower bound.
\section{Preliminaries}\label{preliminariessection}
In this section, we recall the definitions of SOS proofs and pseudo-expectation values and describe the encoding of the ordering principle which we will analyze.
\subsection{Sum of Squares/Positivstellensatz Proofs and Pseudo-expectation Values}
\begin{definition}
Given a system of polynomial equations $\{s_i = 0\}$ over $\mathbb{R}$, a degree $d$ SOS proof of infeasibility is an equality of the form
\[
-1 = \sum_{i}{{f_i}s_i} + \sum_{j}{g^2_j}
\]
where
\begin{enumerate}
\item $\forall i, deg(f_i) + deg(s_i) \leq d$
\item $\forall j, deg(g_j) \leq \frac{d}{2}$
\end{enumerate}
\end{definition}
\begin{remark}
This is a proof of infeasibility because if the equations $\{s_i = 0\}$ were satisfied we would have that $\forall i, {f_i}s_i = 0$ and for all $\forall j, g^2_j \geq 0$, so we cannot possibly have that $\sum_{i}{{f_i}s_i} + \sum_{j}{g^2_j} < 0$.
\end{remark}
In order to prove that there is no degree $d$ SOS proof of infeasibility for a system of polynomial equations over $\mathbb{R}$, we use degree $d$ pseudo-expectation values, which are defined as follows.
\begin{definition}\label{Edefinition}
Given a system of polynomial equations $\{s_i = 0\}$ over $\mathbb{R}$, degree $d$ pseudo expectation values are a linear map $\tilde{E}$ from polynomials of degree at most $d$ to $\mathbb{R}$ such that
\begin{enumerate}
\item $\tilde{E}[1] = 1$
\item $\forall f,i, \tilde{E}[fs_{i}] = 0$ whenever $deg(f) + deg(s_i) \leq d$
\item $\forall g, \tilde{E}[g^2] \geq 0$ whenever $deg(g) \leq \frac{d}{2}$
\end{enumerate}
\end{definition}
As shown by the following proposition, these conditions are a weakening of the constraint that we have expected values over an actual distribution of solutions. Thus, intuitively, pseudo-expectations look like they are the expected values (up to degree $d$) over a distribution of solutions, but they may not actually correspond to a distribution of solutions.
\begin{proposition}\label{actualexpectationproposition}
If $\Omega$ is an actual distribution of solutions to the equations $\{s_i = 0\}$ over $\mathbb{R}$ then 
\begin{enumerate}
\item $E_{\Omega}[1] = 1$
\item $\forall f,i, E_{\Omega}[fs_{i}] = 0$
\item $\forall g, E_{\Omega}[g^2] \geq 0$
\end{enumerate}
\end{proposition}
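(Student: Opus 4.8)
The plan is to verify the three conditions directly from the definition of expectation over a probability distribution, using only linearity of $E_{\Omega}$, the fact that the square of a real number is non-negative, and the fact that $\Omega$ is supported on points $x \in \mathbb{R}^N$ with $s_i(x) = 0$ for all $i$. For item (1), since $\Omega$ is a probability distribution the constant function $1$ has expectation equal to the total mass, so $E_{\Omega}[1] = 1$. For item (2), fix $f$ and $i$; for every $x$ in the support of $\Omega$ we have $s_i(x) = 0$ and hence $(fs_i)(x) = f(x)s_i(x) = 0$, so the random variable $fs_i$ is almost surely $0$ and therefore $E_{\Omega}[fs_i] = 0$. For item (3), for every $x$ the quantity $g(x)$ is a real number, so $g(x)^2 \geq 0$, and the expectation of a pointwise non-negative function under a probability distribution is non-negative, giving $E_{\Omega}[g^2] \geq 0$.

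I do not expect any genuine obstacle here: the content of the proposition is precisely that the true moment functional of an actual distribution of solutions satisfies the three defining axioms of Definition \ref{Edefinition}, and each axiom follows from a one-line pointwise observation. The only thing worth a remark, if one wants to be careful about measure-theoretic issues (e.g. $\Omega$ continuous rather than finitely supported), is that all functions involved are polynomials, hence measurable and, on the relevant support, integrable, so linearity of expectation applies without fuss.

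The reason for recording this proposition is conceptual rather than technical: it shows that Definition \ref{Edefinition} is a relaxation of ``being the vector of moments, up to degree $d$, of an honest probability distribution over solutions.'' Consequently, applying any candidate degree $d$ pseudo-expectation $\tilde{E}$ to a purported SOS identity $-1 = \sum_i f_i s_i + \sum_j g_j^2$ yields $-1 = \sum_j \tilde{E}[g_j^2] \geq 0$, a contradiction; this is exactly why exhibiting degree $d$ pseudo-expectation values for the ordering principle rules out degree $d$ SOS proofs and thus underlies the strategy for Theorem \ref{thm:OPSOSlowerbound}.
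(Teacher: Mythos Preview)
Your proof is correct. The paper does not give a proof of this proposition at all (it is stated as an immediate observation and left unproved), and your direct verification of the three axioms from the definition of expectation is exactly the standard and only reasonable argument.
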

\begin{proposition}
For a given system of polynomial equations $\{s_i = 0\}$ over $\mathbb{R}$, there cannot be both degree $d$ pseudo-expectation values and a degree $d$ SOS proof of infeasibility.
\end{proposition}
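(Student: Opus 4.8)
The plan is a one-line duality argument: apply the pseudo-expectation functional to the purported SOS refutation and derive a numerical contradiction. Suppose, for contradiction, that there exist both degree $d$ pseudo-expectation values $\tilde{E}$ (satisfying the three conditions of Definition~\ref{Edefinition}) and a degree $d$ SOS proof of infeasibility $-1 = \sum_{i}{f_i s_i} + \sum_{j}{g^2_j}$ (satisfying the two degree conditions).

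First I would check that $\tilde{E}$ can legitimately be applied term-by-term to the right-hand side. Since $\deg(f_i) + \deg(s_i) \le d$, each product $f_i s_i$ is a polynomial of degree at most $d$; since $\deg(g_j) \le d/2$, each $g_j^2$ has degree at most $d$; and the constant $1$ trivially has degree $0 \le d$. Hence every polynomial appearing in the identity lies in the domain of $\tilde{E}$, and by linearity of $\tilde{E}$ we may evaluate both sides.

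Applying $\tilde{E}$: the left-hand side gives $\tilde{E}[-1] = -\tilde{E}[1] = -1$ by condition~(1). The right-hand side gives $\sum_i \tilde{E}[f_i s_i] + \sum_j \tilde{E}[g_j^2]$, where each $\tilde{E}[f_i s_i] = 0$ by condition~(2) (the degree bound $\deg(f_i) + \deg(s_i) \le d$ is exactly what is required) and each $\tilde{E}[g_j^2] \ge 0$ by condition~(3) (using $\deg(g_j) \le d/2$). Therefore the right-hand side is a sum of zeros and nonnegative reals, hence $\ge 0$, while the left-hand side equals $-1$. This yields $-1 \ge 0$, a contradiction, so no such pair can coexist.

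There is essentially no hard step here; the only thing to be careful about is matching the degree bookkeeping in the SOS-proof definition against the degree restrictions in the definition of pseudo-expectation values, so that conditions~(2) and~(3) genuinely apply to every term. (This is also why the degree parameter $d$ must be the \emph{same} in both definitions, and why condition~(2) is stated with the hypothesis $\deg(f) + \deg(s_i) \le d$ rather than $\deg(f) \le d$.) Once that alignment is observed, the contradiction is immediate. As a remark, this argument is one direction of an approximate LP/SDP duality; the converse — that absence of a degree $d$ refutation implies existence of degree $d$ pseudo-expectation values — is the content of SDP duality and is not needed here.
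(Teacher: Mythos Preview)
Your proof is correct and takes exactly the same approach as the paper: apply $\tilde{E}$ to the identity $-1 = \sum_i f_i s_i + \sum_j g_j^2$ and use conditions (1)--(3) to obtain the contradiction $-1 \ge 0$. The paper's proof is a terse one-liner; your version just makes the degree bookkeeping explicit.
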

\begin{proof}
Assume we have both degree $d$ pseudo-epxectation values and a degree $d$ SOS proof of infeasibility. Applying the pseudo-expectation values to the SOS proof gives the following contradiction:
\[
-1 = \tilde{E}[-1] = \sum_{i}{\tilde{E}[{f_i}s_i]} + \sum_{j}{\tilde{E}[g^2_j]} \geq 0
\]
\end{proof}
\subsection{Equations for the ordering principle}
For our SOS bounds, we analyze the following system of infeasible equations which corresponds to the negation of the ordering principle. 
\begin{definition}[Ordering principle equations]\label{def:OPequations}
The negation of the ordering principle says that it is possible to have distinct ordered elements $\{a_1,\dots,a_n\}$ such that no $a_j$ is the minimum element. We encode this with the folloing equations:
\begin{enumerate}
\item We have variables $x_{ij}$ where we want that $x_{ij} = 1$ if $a_i < a_j$ and $x_{ij} = 0$ if $a_i > a_j$. We also have auxiliary variables $\{z_j: j \in [n]\}$.
\item $\forall i \neq j, x^2_{ij} = x_{ij}$ and $x_{ij} = 1 - x_{ji}$ (ordering)
\item For all distinct $i,j,k$, $x_{ij}x_{jk}(1 - x_{ik}) = 0$ (transitivity)
\item $\forall j, \sum_{i \neq j}{x_{ij}} = 1 + z^2_{j}$ (for all $j \in [n]$, $a_j$ is not the minimum element of $\{a_1,\dots,a_n\}$) 
\end{enumerate}
We call this system of equations the ordering principle equations.
\end{definition}
\begin{remark}
In this encoding of the negation of the ordering principle, we use the auxiliary variables $\{z_j: j \in [n]\}$ so that we can express the statement that $\forall j, \exists i \neq j: x_{ij} = 1$ as polynomial equalities of low degree.
\end{remark}
\subsubsection{Relationship to other encodings of the negation of the ordering principle}
The equations in Definition \ref{def:OPequations} are tailored for SOS, so they are not the same as the encodings of the negation of ordering principle which were studied in prior works \cite{BG01,GL10}. We now discuss this difference and how it affects our results.

For resolution, the following axioms encode the negation of the ordering principle:
\begin{enumerate}
\item We have variables $x_{ij}$ where we want that $x_{ij}$ is true if $a_i < a_j$ and $x_{ij}$ is false if $a_i > a_j$.
\item $\forall i \neq j, x_{ij} \vee x_{ji}$ and $\neg{x_{ij}} \vee \neg{x_{ji}}$ (ordering)
\item For all distinct $i,j,k$, $\neg{x_{ij}} \vee \neg{x_{jk}} \vee x_{ik} = 0$ (transitivity)
\item $\forall j, \bigvee_{i \neq j}{x_{ij}}$ (for all $j \in [n]$, $a_j$ is not the minimum element of $\{a_1,\dots,a_n\}$) 
\end{enumerate}
Translating this into polynomial equations, this gives us the following equations for polynomial calculus:
\begin{enumerate}
\item We have variables $x_{ij}$ where we want that $x_{ij} = 1$ if $a_i < a_j$ and $x_{ij} = 0$ if $a_i > a_j$.
\item $\forall i \neq j, x^2_{ij} = x_{ij}$ and $x_{ij} = 1 - x_{ji}$ (ordering)
\item For all distinct $i,j,k$, $x_{ij}x_{jk}(1 - x_{ik}) = 0$ (transitivity)
\item $\forall j, \prod_{i \neq j}{(1-x_{ij})} = 0$ (for all $j \in [n]$, $a_j$ is not the minimum element of $\{a_1,\dots,a_n\}$) 
\end{enumerate}
However, as noted in the introduction, a width/degree lower bound of $\Omega(n)$ is trivial for these encodings as the axioms already have width/degree $n$. To handle this, Bonet and Galesi \cite{BG01} used auxiliary variables to break up the axioms into constant width axioms. Galesi and Lauria \cite{GL10} instead considered the following ordering principle on graphs.
\begin{definition}[Ordering principle on graphs]
Given a graph $G$ with $V(G) = [n]$, the ordering principle on $G$ says that if each vertex $i \in [n]$ has a value $a_i$ and all of the values are distinct then there must be some vertex whose value is less than its neighbors' values.
\end{definition}
When we take the negation of the ordering principle on a graph $G$, this changes our axioms/equations as follows:
\begin{enumerate}
\item For resolution, instead of the axioms $\forall j, \bigvee_{i \neq j}{x_{ij}}$ we have the axioms $\forall j, \bigvee_{i:(i,j) \in E(G)}{x_{ij}}$.
\item For polynomial calculus, instead of the equations $\forall j, \prod_{i \neq j}{(1-x_{ij})} = 0$ we have the equations $\forall j, \prod_{i:(i,j) \in E(G)}{(1-x_{ij})} = 0$
\end{enumerate}
\begin{remark}
If $G = K_n$ then the ordering principle on $G$ is just the ordering principle
\end{remark}
Galesi and Lauria \cite{GL10} showed that if $G$ is an expander then polynomial calculus requires degree $\Omega(n)$ to refute these equations.

The ordering principle on graphs is a weaker statement than the ordering principle, so we would expect that its negation would be easier to refute. This is indeed the case. As shown by the following lemma, we can recover the equation $\sum_{i \neq j}{x_{ij}} = 1 + z^2_{j}$ from the equation $\prod_{i:(i,j) \in E(G)}{(1-x_{ij})} = 0$, except that $z_j^2$ is replaced by a sum of squares.
\begin{lemma}
Given Boolean variables $\{x_i:i \in [k]\}$ (i.e. $\forall i \in [k], x_i^2 = x_i$) and the equation $\prod_{i=1}^{k}{(1-x_i)} = 0$, we can deduce that $\left(\sum_{i=1}^{k}{x_i}\right) - 1$ is a sum of squares.
\end{lemma}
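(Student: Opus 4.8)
The plan is to work in the polynomial ring over the Boolean variables $\{x_i\}$ and explicitly exhibit $\left(\sum_{i=1}^{k}{x_i}\right) - 1$ as a sum of squares modulo the ideal generated by $\{x_i^2 - x_i\}$ together with the single equation $\prod_{i=1}^{k}{(1-x_i)} = 0$. The natural object to consider is the partial sums $S_m = \sum_{i=1}^{m}{x_i}$, and the key identity I would reach for is that for Boolean $x_i$ we have $S_m^2 \equiv S_m + 2\sum_{i<j\le m}{x_ix_j}$, so squares of partial sums are intimately tied to the symmetric functions. A cleaner route: observe that $\prod_{i=1}^{k}{(1-x_i)} = 1 - S_k + (\text{terms that are products of at least two distinct }x_i)$, and more usefully that modulo the Boolean axioms $\prod_{i=1}^{m}{(1-x_i)}$ is itself (up to sign issues) expressible via an inclusion–exclusion in the $x_i$'s that telescopes.

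Concretely, I would proceed by induction on $k$. Define $p_m = \prod_{i=1}^{m}{(1-x_i)}$. Then $p_{m} = p_{m-1}(1-x_m) = p_{m-1} - x_m p_{m-1}$, so $p_{m-1} - p_m = x_m p_{m-1}$. Summing from $m=1$ to $k$ gives $p_0 - p_k = \sum_{m=1}^{k}{x_m p_{m-1}}$, i.e. $1 - p_k = \sum_{m=1}^{k}{x_m p_{m-1}}$. Using the hypothesis $p_k = 0$ this reads $1 = \sum_{m=1}^{k}{x_m p_{m-1}}$. Now the point is that each summand $x_m p_{m-1}$ is, modulo the Boolean axioms, equal to $x_m \prod_{i<m}{(1-x_i)}$, which is an \emph{idempotent-like} term: it is the indicator that $x_m = 1$ and all earlier $x_i = 0$. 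In particular the $k$ terms $x_m p_{m-1}$ are ``orthogonal'' in the sense that the product of any two distinct ones lies in the ideal, and each satisfies $(x_m p_{m-1})^2 \equiv x_m p_{m-1}$ modulo the Boolean axioms. So writing $q_m = x_m p_{m-1}$ we have $\sum_m q_m = 1$ and $q_m^2 \equiv q_m$, hence $\sum_m q_m^2 \equiv \sum_m q_m = 1$.

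To finish, I need $\left(\sum_{i=1}^{k}{x_i}\right) - 1$ as a sum of squares, so I want to compare $\sum_i x_i$ with $1 = \sum_m q_m$. The difference is $\sum_{i=1}^{k}{x_i} - \sum_{m=1}^{k}{q_m} = \sum_{m=1}^{k}{(x_m - q_m)} = \sum_{m=1}^{k}{x_m(1 - p_{m-1})} = \sum_{m=1}^{k}{x_m \sum_{\ell=1}^{m-1}{q_\ell}}$, reorganising via $1 - p_{m-1} = \sum_{\ell<m}{q_\ell}$ from the telescoping identity above. Each term $x_m q_\ell$ with $\ell < m$ equals $x_m x_\ell p_{\ell-1}$, which modulo the Boolean axioms is again an idempotent orthogonal family, and crucially $x_m q_\ell \equiv q_\ell$ is not right in general — instead I should note $x_m q_\ell$ is itself idempotent and these terms over all pairs $\ell<m$ are pairwise orthogonal, so their sum is idempotent, hence equal to its own square, hence a single square. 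Therefore $\sum_{i=1}^{k}{x_i} - 1 = r - s$ where both $r$ (from $\sum_m x_m$ minus the diagonal part) and the relation reorganise into a genuine sum of squares; I would verify the bookkeeping so that all residual cross terms cancel against multiples of $x_i^2 - x_i$ and of $p_k$, leaving $\sum_{i}{x_i} - 1$ equal to a sum of squares of linear-or-low-degree polynomials in the $x_i$.

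The main obstacle I anticipate is the bookkeeping of exactly which low-degree polynomials get squared and making sure the ``$\equiv$ modulo the ideal'' steps are genuinely just additions of multiples of $x_i^2 - x_i$ and of the single equation $p_k = 0$ (rather than sneaking in other ideal generators). The cleanest presentation is probably to avoid the pairwise-orthogonality argument entirely and instead directly write $\sum_{i=1}^{k}{x_i} - 1 = \left(\sum_{i=1}^{k}{x_i} - 1\right)^2 + (\text{correction})$, where the correction is manifestly a combination of $x_i^2 - x_i$ terms and $p_k$: indeed for $0/1$-valued $x_i$, if $\sum x_i \ge 1$ (which $p_k = 0$ forces) then $\sum x_i - 1 \ge 0$ and is an integer, so one expects $\sum x_i - 1 = \binom{\sum x_i - 1}{1}$ to be writable as a square plus nonnegative slack; turning this combinatorial intuition into an explicit algebraic identity with controlled degree is where the real work lies, and I would spend most of the proof there.
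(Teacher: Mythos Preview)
Your telescoping approach is correct and in fact yields a valid proof, but you fumble the finish. You correctly derive (modulo the Boolean axioms and using $p_k=0$)
\[
\sum_{i=1}^{k} x_i \;-\; 1 \;=\; \sum_{1\le \ell < m \le k} x_m\, q_\ell, \qquad q_\ell = x_\ell \prod_{i<\ell}(1-x_i),
\]
and this is already enough: each term $x_m q_\ell$ is a product of distinct Boolean factors, hence idempotent, hence equal to its own square. So the right-hand side is literally a sum of squares and you are done. Your claimed pairwise orthogonality of the family $\{x_m q_\ell\}_{\ell<m}$ is \emph{false} (for instance $x_2 q_1 \cdot x_3 q_1 = x_1 x_2 x_3 \neq 0$ modulo the axioms), but it is also entirely unnecessary: you do not need the sum to be a single square, only a sum of squares. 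The final paragraph proposing to rewrite via $(\sum x_i - 1)^2 + (\text{correction})$ is a detour you should drop.

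The paper's proof is different in organisation but rests on the same underlying idea. Instead of your $\binom{k}{2}$ idempotents $x_m q_\ell$, it uses the full partition of unity $\{e_J = \prod_{i\in J} x_i \prod_{i\notin J}(1-x_i)\}_{J\subseteq [k]}$ and writes
\[
\sum_i x_i - 1 \;=\; \sum_{J\subseteq [k]} (|J|-1)\, e_J \;=\; -\,p_k \;+\; \sum_{J\neq\emptyset} (|J|-1)\, e_J,
\]
checking the identity pointwise on $\{0,1\}^k$; the $J=\emptyset$ term is $-p_k$, which vanishes by hypothesis, and each $e_J$ is a square since $x_i=x_i^2$ and $1-x_i=(1-x_i)^2$. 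Your decomposition is more economical (polynomially many terms rather than $2^k$), while the paper's is more symmetric and requires no telescoping.
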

\begin{proof}
Observe that modulo the axioms $x_i^2 = x_i$, 
\begin{align*}
\sum_{i=1}^{k}{x_i} - 1 &= -\left(\prod_{i=1}^{k}{(1-x_i)}\right) + \sum_{J \subseteq [k]: J \neq \emptyset}{(|J|-1)\left(\prod_{i \in J}{x_i}\right)\left(\prod_{i \in [k] \setminus J}{(1-x_i)}\right)} \\
&= -\left(\prod_{i=1}^{k}{(1-x_i)}\right) + \sum_{J \subseteq [k]: J \neq \emptyset}{(|J|-1)\left(\prod_{i \in J}{x_i^2}\right)\left(\prod_{i \in [k] \setminus J}{(1-x_i)^2}\right)}
\end{align*}
To see this, observe that for any non-empty $J \subseteq [k]$, if $x_i = 1$ for all $i \in J$ and $x_i = 0$ for all $i \notin J$ then the left and right sides are both $|J| - 1$. Similarly, if all of the $x_i$ are $0$ then the left and right sides are both $-1$.
\end{proof}
This implies that our SOS upper bound holds for the graph ordering principle as well as the ordering principle. However, our SOS lower bound does not apply to the ordering principle on expander graphs. Part of the reason is that our SOS lower bound relies heavily on symmetry under permutations of $[n]$.

There is one way in which the ordering principle equations are unsatisfactory for our purposes. We want to show that the size/degree tradeoffs for SOS \cite{AH18} cannot be improved too much further. However,  the auxiliary variables in the ordering principle equations are not Boolean and this tradeoff only applies when all of the variables are Boolean. To fix this, we show that we can modify the ordering principle equations so that we only have Boolean variables but our SOS upper and lower bounds still hold. For details, see Appendix \ref{booleanvariableappendix}.
\section{Pseudo-expectation values for the ordering principle}\label{Esection}
In this section, we give natural candidate pseudo-expectation values $\tilde{E}_{n}$ for the ordering principle equations. In fact, $\tilde{E}_{n}$ is essentially the only possible symmetric pseudo-expectation values. In particular, in section \ref{SOSupperboundsection} we will show that if $\tilde{E}_{n}$ fails at degree $d$ then there is an SOS proof of degree at most $2d+2$ that these equations are infeasible.
\subsection{The candidate pseudo-expectation values $\tilde{E}_n$}
As noted in Section \ref{preliminariessection}, intuitively, pseudo-expectation values should look like the expected values over a distribution of solutions. Also, as shown by the following lemma, since the problem is symmetric under permutations of $[n]$, we can take $\tilde{E}$ to be symmetric as well.
\begin{lemma}\label{symmetrizationlemma}
If $\{s_i = 0\}$ is a system of polynomial equations which is symmetric under permutations of $[n]$ then if there are degree $d$ pseudo-expectation values $\tilde{E}$ then there are degree $d$ pseudo-expectation values $\tilde{E}'$ which are symmetric under permutations of $[n]$.
\end{lemma}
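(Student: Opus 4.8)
The plan is a standard symmetrization (group averaging) argument. Let $S_n$ denote the group of permutations of $[n]$, and for $\sigma \in S_n$ let $p \mapsto p^\sigma$ be the induced action on polynomials, i.e.\ the substitution renaming each variable according to $\sigma$ (for the ordering principle equations this sends $x_{ij} \mapsto x_{\sigma(i)\sigma(j)}$ and $z_j \mapsto z_{\sigma(j)}$). This action is a ring automorphism preserving degree, so $(fg)^\sigma = f^\sigma g^\sigma$, $(g^2)^\sigma = (g^\sigma)^2$, and $\deg(p^\sigma) = \deg(p)$. The hypothesis that $\{s_i = 0\}$ is symmetric under permutations of $[n]$ means precisely that for every $i$ and every $\sigma$, the polynomial $s_i^\sigma$ is again one of the defining polynomials $s_{i'}$ (more generally, it would suffice that $s_i^\sigma$ lie in the span of the $s_{i'}$ of degree at most $\deg(s_i)$; the argument below is unchanged).

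Given degree $d$ pseudo-expectation values $\tilde{E}$, define, for every polynomial $p$ of degree at most $d$,
\[
\tilde{E}'[p] \;=\; \frac{1}{n!}\sum_{\sigma \in S_n} \tilde{E}[p^\sigma].
\]
This is well-defined since $\deg(p^\sigma) = \deg(p) \le d$, and it is linear, being an average of the linear maps $p \mapsto \tilde{E}[p^\sigma]$. I would then check the three conditions of Definition~\ref{Edefinition} for $\tilde{E}'$. First, $1^\sigma = 1$, so $\tilde{E}'[1] = \frac{1}{n!}\sum_\sigma \tilde{E}[1] = 1$. Second, if $\deg(f) + \deg(s_i) \le d$, then for each $\sigma$ we have $(f s_i)^\sigma = f^\sigma s_i^\sigma = f^\sigma s_{i'}$ with $\deg(f^\sigma) + \deg(s_{i'}) = \deg(f) + \deg(s_i) \le d$, so $\tilde{E}[(f s_i)^\sigma] = 0$ by condition~2 for $\tilde{E}$; averaging gives $\tilde{E}'[f s_i] = 0$. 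Third, if $\deg(g) \le d/2$, then $(g^2)^\sigma = (g^\sigma)^2$ with $\deg(g^\sigma) \le d/2$, so $\tilde{E}[(g^2)^\sigma] \ge 0$ by condition~3 for $\tilde{E}$, hence $\tilde{E}'[g^2] \ge 0$.

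Finally, I would verify that $\tilde{E}'$ is symmetric under permutations of $[n]$: for any $\tau \in S_n$ and any polynomial $p$ of degree at most $d$,
\[
\tilde{E}'[p^\tau] = \frac{1}{n!}\sum_{\sigma \in S_n} \tilde{E}[(p^\tau)^\sigma] = \frac{1}{n!}\sum_{\sigma \in S_n} \tilde{E}[p^{\sigma\tau}] = \frac{1}{n!}\sum_{\rho \in S_n} \tilde{E}[p^{\rho}] = \tilde{E}'[p],
\]
where the third equality uses that $\sigma \mapsto \sigma\tau$ is a bijection of $S_n$. There is no serious obstacle here; the only points needing care are the degree bookkeeping (so that all degree restrictions in Definition~\ref{Edefinition} are respected under the substitution, which holds since $p \mapsto p^\sigma$ is degree-preserving) and pinning down the precise meaning of ``symmetric under permutations of $[n]$'' for the system so that each $s_i^\sigma$ stays in the ideal at the appropriate degree.
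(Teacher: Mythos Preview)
Your proposal is correct and follows essentially the same symmetrization (group-averaging) argument as the paper: define $\tilde{E}'[p] = \frac{1}{n!}\sum_{\sigma \in S_n}\tilde{E}[p^\sigma]$ and verify the three pseudo-expectation conditions termwise. If anything, your write-up is slightly more thorough than the paper's, since you explicitly record that the action is degree-preserving and multiplicative and you verify the $S_n$-invariance of $\tilde{E}'$ directly.
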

\begin{proof}
Given degree $d$ pseudo-expectation values $\tilde{E}$, take $\tilde{E}'$ to be the linear map from polynomials of degree at most $d$ to $\mathbb{R}$ such that for all monomials $p$, $\tilde{E}'[p] = E_{\pi \in S_n}[\tilde{E}[\pi(p)]]$. Now observe that 
\begin{enumerate}
\item $\tilde{E}'[1] = E_{\pi \in S_n}[\tilde{E}[1]] = 1$
\item $\forall f,i: deg(f) + deg(s_i) \leq d, \tilde{E}'[f{s_i}] = E_{\pi \in S_n}[\tilde{E}[\pi(f)\pi(s_i)]] = 0$ because the system of equations $\{s_i = 0\}$ is symmetric under permutations of $[n]$.
\item $\forall g: deg(g) \leq \frac{d}{2}, \tilde{E}'[g^2] = E_{\pi \in S_n}[\tilde{E}[\pi(g)^2]] \geq 0$
\end{enumerate}
\end{proof}
Guided by this, we take the expected values over the uniform distribution over orderings of $x_1,\ldots,x_n$. These orderings are not solutions to the equations because each ordering causes one equation $\sum_{i \neq j}{x_{ij}} = 1 + z^2_{j}$ to fail. However, a random ordering makes each individual equation $\sum_{i \neq j}{x_{ij}} = 1 + z^2_{j}$ true with high probability, so the intuition is that low degree SOS will not be able to detect that there is always one equation which fails. 
\begin{definition}
We define $U_n$ to be the uniform distribution over orderings of $x_1,\ldots,x_n$, i.e. for each permutation $\pi:[n] \to [n]$ we have that with probability $\frac{1}{n!}$, 
$x_{\pi(1)} < x_{\pi(2)} < \ldots < x_{\pi(n)}$ and thus for all $i < j$, $x_{\pi(i)\pi(j)} = 1$ and $x_{\pi(j)\pi(i)} = 0$.
\end{definition}
\begin{definition}
Given a polynomial $f(\{x_{ij}: i \neq j\})$, we define 
\[
\tilde{E}_{n}[f(\{x_{ij}: i \neq j\})] = E_{U_n}[f]
\]
\end{definition}
\begin{example}
$\forall i \neq j, \tilde{E}_{n}[x_{ij}] = \frac{1}{2}$ because there is a $\frac{1}{2}$ chance that $i$ comes before $j$ in a random ordering.
\end{example}
\begin{example}
For all distinct $i,j,k$, $\tilde{E_{n}}[x_{ij}x_{jk}] = \frac{1}{6}$ because there is a $\frac{1}{6}$ chance that $i < j < k$ in a random ordering.
\end{example}
However, in order to fully define $\tilde{E}_{n}$ we have to define $\tilde{E}_{n}[p]$ for monomials $p$ involving the $z$ variables. We can do this as follows.
\begin{definition}[Candidate pseudo-expectation values]\label{candidateEdefinition} \ 
\begin{enumerate}
\item For all monomials $p(\{x_{ij}: i,j \in [n], i \neq j\})$, we take $\tilde{E}_{n}[p] = E_{U_n}[p]$
\item For all monomials $p(\{x_{ij}: i,j \in [n], i \neq j\})$, we take $\tilde{E}_{n}\left[\left(\prod_{j \in A}{z_j}\right)p\right] = 0$ whenever $A \subseteq [n]$ is non-empty because each $z_j$ could be positive or negative.
\item For all monomials $p(\{x_{ij}: i,j \in [n], i \neq j\}, \{z_j: j \in [n]\})$ and all $j \in [n]$, we take $\tilde{E}_{n}[{z^2_j}p] = \tilde{E}_{n}\left[\left(\sum_{i \neq j}{x_{ij}} - 1\right)p\right]$ because we have that for all $j$, 
$z^2_j = \sum_{i \neq j}{x_{ij}} - 1$.
\end{enumerate}
\end{definition}
\subsection{Checking if $\tilde{E}_{n}$ are pseudo-expectation values}\label{possiblefailuresubsection}
We now discuss what needs to be checked in order to determine whether our candidate pseudo-expectation values $\tilde{E}_n$ are actually degree $d$ pseudo-expectation values. To analyze $\tilde{E}_n$, it is convenient to create a new variable $w_j$ which is equal to $z^2_j$.
\begin{definition}
Define $w_j = \sum_{i \neq j}{x_{ij}} - 1$.
\end{definition}
Observe that viewing everything in terms of the variables $\{x_{ij}\}$ and $\{w_{j}\}$, $\tilde{E}_{n}$ is the expected values over a distribution of solutions. This implies that the polynomial equalities obtained by multiplying one of the ordering principle equations in Definition \ref{def:OPequations} by a monomial will be satisfied at all degrees, not just up to degree $d$. However, each $w_j$ is supposed to be a square but this is not actually the case for this distribution, so $\tilde{E}_{n}$ may fail to give valid pseudo-expectation values. In fact, this is the only way in which $\tilde{E}_{n}$ can fail to give valid pseudo-expectation values. We make this observation precise with the following lemma:
\begin{lemma}\label{decompositionlemma}
If $\tilde{E}_{n}\left[\left(\prod_{j \in A}{w_j}\right)g_A((\{x_{ij}: i,j \in [n], i \neq j\})^2\right] \geq 0$ whenever $A \subseteq [n]$ and $|A| + deg(g_A) \leq \frac{d}{2}$ then $\tilde{E}_{n}$ gives degree $d$ pseudo-expectation values.
\end{lemma}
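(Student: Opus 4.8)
The plan is to reformulate $\tilde{E}_n$ so that the first two conditions in Definition~\ref{Edefinition} become immediate while the third reduces to the hypothesis of the lemma. Working with the variable $w_j = \sum_{i \neq j}{x_{ij}} - 1$ introduced just above, define a linear map $\rho$ from polynomials in the $x_{ij}$ and $z_j$ to polynomials in the $x_{ij}$ alone by the following action on a monomial $\left(\prod_{j}{z_j^{e_j}}\right)p$ (with $p$ a monomial in the $x_{ij}$): set $\rho\left[\left(\prod_{j}{z_j^{e_j}}\right)p\right] = 0$ if some $e_j$ is odd, and $\rho\left[\left(\prod_{j}{z_j^{e_j}}\right)p\right] = \left(\prod_{j}{w_j^{e_j/2}}\right)p$ if every $e_j$ is even. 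A straightforward induction on the $z$-degree of a monomial, using rules (2) and (3) of Definition~\ref{candidateEdefinition}, shows $\tilde{E}_n[f] = E_{U_n}[\rho(f)]$ for every polynomial $f$. The first condition is then immediate: $\tilde{E}_n[1] = E_{U_n}[1] = 1$. For the second, note that multiplication by a polynomial in the $x_{ij}$ commutes with $\rho$, so for each ordering, transitivity, or Boolean equation $s_i$ (none of which involves the $z_j$) we get $\rho(fs_i) = s_i\,\rho(f)$, and $E_{U_n}[s_i\,\rho(f)] = 0$ because every such $s_i$ vanishes at each ordering; for $s = \sum_{i \neq j}{x_{ij}} - 1 - z_j^2$, a direct check of the definition of $\rho$ gives $\rho(f z_j^2) = w_j\,\rho(f) = \rho\left(f\left(\sum_{i \neq j}{x_{ij}} - 1\right)\right)$, hence $\rho(fs) = 0$ and $\tilde{E}_n[fs] = 0$.

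It remains to verify the third condition, $\tilde{E}_n[g^2] \geq 0$ for $\deg(g) \leq \frac{d}{2}$. Write $g = \sum_{\tau \in \{0,1\}^n}{g_\tau}$, where $g_\tau$ is the sum of the monomials of $g$ whose vector of $z$-exponent parities equals $\tau$. For a monomial $m$ of $g_\tau$ and a monomial $m'$ of $g_{\tau'}$, the product $mm'$ has $z$-exponent parity vector $\tau + \tau' \bmod 2$, so $\rho(mm') = 0$ unless $\tau = \tau'$; hence $\rho(g^2) = \sum_{\tau}{\rho(g_\tau^2)}$, and it suffices to show $E_{U_n}[\rho(g_\tau^2)] \geq 0$ for each fixed $\tau$. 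Fix $\tau$, set $A = \{j : \tau_j = 1\}$, and factor $g_\tau = \left(\prod_{j \in A}{z_j}\right)h$, where every monomial of $h$ has only even $z$-exponents. Then $g_\tau^2 = \left(\prod_{j \in A}{z_j^2}\right)h^2$; since $h^2$ also has only even $z$-exponents, $\rho$ is multiplicative on such polynomials (so $\rho(h^2) = \rho(h)^2$), and $\rho$ sends each factor $z_j^2$ with $j \in A$ to a factor $w_j$, we obtain $\rho(g_\tau^2) = \left(\prod_{j \in A}{w_j}\right)\rho(h)^2$, a polynomial in the $x_{ij}$ alone, so that $E_{U_n}[\rho(g_\tau^2)] = \tilde{E}_n\left[\left(\prod_{j \in A}{w_j}\right)\rho(h)^2\right]$.

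The final step is to check the degree bound needed to invoke the hypothesis with $g_A := \rho(h)$. Since $g_\tau$ consists of a subset of the monomials of $g$, $\deg(g_\tau) \leq \deg(g) \leq \frac{d}{2}$, and $\deg(g_\tau) = |A| + \deg(h)$ because each $z_j$ with $j \in A$ contributes $1$ to the degree; moreover $\rho$ sends $z_j^{2e} \mapsto w_j^e$, of degree $e \leq 2e$, and fixes the $x_{ij}$, so $\deg(\rho(h)) \leq \deg(h)$. Therefore $|A| + \deg(\rho(h)) \leq |A| + \deg(h) = \deg(g_\tau) \leq \frac{d}{2}$, the hypothesis applies, and $\tilde{E}_n\left[\left(\prod_{j \in A}{w_j}\right)\rho(h)^2\right] \geq 0$. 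Summing over $\tau$ gives $\tilde{E}_n[g^2] = \sum_{\tau}{E_{U_n}[\rho(g_\tau^2)]} \geq 0$, which completes the verification. I expect the only point requiring care to be this degree accounting, together with the need to handle arbitrary even powers of the $z_j$ rather than just square-free products; what makes the budget survive the reduction is precisely that the substitution $z_j^{2e} \mapsto w_j^e$ strictly decreases degree, so splitting off the odd part of each $z_j$ and collapsing the even part never raises the total degree above $\frac{d}{2}$.
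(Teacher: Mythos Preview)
Your proof is correct and follows essentially the same approach as the paper: both exploit the parity structure in the $z_j$-exponents to kill cross terms and reduce $\tilde{E}_n[g^2]$ to a sum of terms of the form $\tilde{E}_n\left[\left(\prod_{j\in A}w_j\right)g_A^2\right]$, and both verify the equality constraints by reducing to $E_{U_n}$. Your introduction of the map $\rho$ (encoding the substitution $z_j^2\mapsto w_j$ and the vanishing of odd-parity terms) makes explicit a step the paper leaves implicit when it writes $g=\sum_A\left(\prod_{j\in A}z_j\right)g_A(x)$ with $g_A$ in the $x$-variables only; in particular your degree accounting for higher even powers of $z_j$ is more careful than the paper's, but the underlying argument is the same.
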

\begin{proof}
We first check that $\tilde{E}_{n}[g^2] \geq 0$ whenever $deg(g) \leq \frac{d}{2}$ as this is the more interesting part. Given a polynomial $g$ of degree at most $\frac{d}{2}$, decompose $g$ as 
\[
g = \sum_{A \subseteq [n]}{\left(\prod_{j \in A}z_j\right)g_{A}(x_1,\dots,x_n)}
\]
where for all $A \subseteq [n]$, $|A| + deg(g_A) \leq \frac{d}{2}$. Now observe that
\begin{align*}
\tilde{E}_{n}[g^2] &= \tilde{E}_{n}\left[\sum_{A,A' \subseteq [n]}{\left(\prod_{j \in A}z_j\prod_{j \in A'}z_j\right)g_{A}(x_1,\dots,x_n)g_{A'}(x_1,\dots,x_n)}\right] \\
&= \sum_{A \subseteq [n]}{\tilde{E}_{n}\left[\left(\prod_{j \in A}w_j\right)g^2_{A}(x_1,\dots,x_n)\right]} \geq 0
\end{align*}

We now check that the polynomial equalities obtained by multiplying one of the ordering principle equations in Definition \ref{def:OPequations} by a monomial are satisfied. By the definition of $\tilde{E}_{n}$, we have that for all monomials $p$ and all $j \in [n]$, 
\[
\tilde{E}_{n}[(\sum_{i \neq j}{x_{ij}} - 1 - z^2_j)p] = \tilde{E}_{n}[(\sum_{i \neq j}{x_{ij}} - 1)p] - \tilde{E}_{n}[(\sum_{i \neq j}{x_{ij}} - 1)p] = 0
\] 
To prove the other polynomial equalities, we use induction on the total degree of the $\{z_j\}$ variables. For the base case, observe that 
\begin{enumerate}
\item $\tilde{E}_{n}[1] = E_{U_n}[1] = 1$
\item For all monomials $p(\{x_{ij}: i,j \in [n], i \neq j\})$ and for all ordering or transitivity constraints $s_i = 0$, $\tilde{E}_{n}[p{s_i}] = E_{U_n}[p{s_i}] = 0$
\end{enumerate}
For the inductive step, if $p$ is a monomial which is divisible by $z^2_j$ for some $j$ then write $p = {z^2_j}p'$. By the inductive hypothesis, for all ordering or transitivity constraints $s_i = 0$,  
\[
\tilde{E}_{n}[p{s_i}] = \tilde{E}_{n}[{z^2_j}p'{s_i}] = \tilde{E}_{n}[(\sum_{i \neq j}{x_{ij}} - 1)p'{s_i}] = 0
\]
Finally, if $p$ is a monomial of the form $p = \left(\prod_{j \in A}{z_j}\right)p'(\{x_{ij}: i,j \in [n], i \neq j\})$ where $A \neq \emptyset$ then for all ordering or transitivity constraints $s_i = 0$, 
\[
\tilde{E}_{n}[p{s_i}] = \tilde{E}_{n}\left[\left(\prod_{j \in A}{z_j}\right)p'{s_i}\right] = 0
\]
\end{proof}
\section{$O(\sqrt{n}log(n))$ Degree SOS Upper Bound}\label{SOSupperboundsection}
In this section, we prove theorem \ref{thm:OPSOSupperbound} by constructing a degree $O(\sqrt{n}log(n))$ proof of the ordering principle. To construct this proof, we first find a polynomial $g$ of degree $O(\sqrt{n}log(n))$ such that 
$\tilde{E}_{n}[g^2] < 0$. We then show that there is an SOS proof (which in fact uses only polynomial equalities) that $E_{\pi \in S_n}[\pi(g)^2] = \tilde{E}_{n}[g^2] < 0$.
\subsection{Failure of $\tilde{E}_{n}$}
We now show that $\tilde{E}_{n}$ fails to give valid pseudo-expectation values at degree $O(\sqrt{n}log(n))$. 
\begin{theorem}\label{singlevariablepolytheorem}
For all $n \geq 4$ there exists a polynomial $g(w_1)$ of degree $\lceil\frac{1}{2}\sqrt{n}(log_2(n)+1)\rceil$ such that $\tilde{E}[({z_1}g(w_1))^2] = E_{U_n}[{w_1}{g}^2(w_1)] < 0$
\end{theorem}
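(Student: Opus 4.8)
The plan is to first determine the distribution of $w_1$ under $U_n$. If the random ordering puts $a_1$ in position $k$ (probability $\frac{1}{n}$ for each $k \in [n]$), then exactly $k-1$ of the other elements lie below $a_1$, so $\sum_{i \neq 1}{x_{i1}} = k-1$ and $w_1 = k-2$; hence $w_1$ is uniform on $\{-1,0,1,\dots,n-2\}$. Since $\tilde{E}[(z_1 g(w_1))^2] = E_{U_n}[w_1 g^2(w_1)]$ by Definition \ref{candidateEdefinition}, this gives
\[
\tilde{E}[(z_1 g(w_1))^2] = \frac{1}{n}\left(-g(-1)^2 + \sum_{m=1}^{n-2}{m\,g(m)^2}\right),
\]
the $m=0$ term vanishing regardless of $g(0)$. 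So it suffices to produce a polynomial $g$ of degree $d := \lceil \frac{1}{2}\sqrt{n}(\log_2 n + 1)\rceil$ that is bounded on $\{1,\dots,n-2\}$ but takes a large value at the point $-1$ just outside that range, namely with $g(-1)^2 > \sum_{m=1}^{n-2}{m\,g(m)^2}$.

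The natural choice is a rescaled Chebyshev polynomial. Let $T_d$ be the degree-$d$ Chebyshev polynomial of the first kind and set $g(w) = T_d\!\left(1 - \frac{2w}{n-2}\right)$, which has degree exactly $d$. Because the affine map $w \mapsto 1 - \frac{2w}{n-2}$ carries $[0,n-2]$ onto $[-1,1]$, we get $|g(m)| \le 1$ for each $m \in \{1,\dots,n-2\}$, hence $\sum_{m=1}^{n-2}{m\,g(m)^2} \le \sum_{m=1}^{n-2}{m} = \binom{n-1}{2} < \frac{n^2}{2}$. For $g(-1) = T_d\!\left(1 + \frac{2}{n-2}\right)$ I would use the standard bound $T_d(y) \ge \tfrac{1}{2}\big(y + \sqrt{y^2-1}\big)^d$ valid for $y \ge 1$; a short computation with $y = 1 + \frac{2}{n-2}$ gives $y + \sqrt{y^2-1} = \frac{\sqrt{n-1}+1}{\sqrt{n-1}-1}$, and since $\ln\frac{1+u}{1-u} = 2\operatorname{arctanh}(u) \ge 2u$ with $u = \frac{1}{\sqrt{n-1}}$, this yields
\[
|g(-1)| = T_d(y) \ge \frac{1}{2}\left(\frac{\sqrt{n-1}+1}{\sqrt{n-1}-1}\right)^{d} \ge \frac{1}{2}\,e^{2d/\sqrt{n-1}}.
\]

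It then remains to check that the stated degree is large enough, i.e.\ that $\frac{1}{2}e^{2d/\sqrt{n-1}} > \sqrt{\binom{n-1}{2}}$. Using $\binom{n-1}{2} < \frac{n^2}{2}$ this follows once $\frac{2d}{\sqrt{n-1}} \ge \ln(\sqrt{2}\,n)$, and plugging in $d \ge \frac{1}{2}\sqrt{n}(\log_2 n + 1)$ together with $\sqrt{n-1} \le \sqrt{n}$ reduces this to $\log_2 n + 1 \ge \ln n + \frac{1}{2}\ln 2$, which holds for all $n \ge 4$ because $\log_2 n \ge \ln n$ and $1 \ge \frac{1}{2}\ln 2$; any remaining small cases can simply be verified directly (e.g.\ for $n=4$ one has $d=3$, $g = T_3(1-w)$, $g(-1)^2 = 676$, and $\sum_{m=1}^{2}{m\,g(m)^2} = 2$). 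I expect the only genuinely delicate step to be this last bookkeeping — arranging the constants so that precisely the degree $\lceil \frac{1}{2}\sqrt{n}(\log_2 n + 1)\rceil$ suffices — since the reduction and the choice of polynomial are clear once one notices that $T_d$ grows like $e^{\Theta(d/\sqrt{n})}$ at a point at distance $\Theta(1/n)$ from $[-1,1]$; this is also what forces the degree to be $\Theta(\sqrt{n}\log n)$ and rules out a naive choice such as $g(w) = (1 - \frac{2w}{n})^d$, which would instead require degree $\Theta(n\log n)$.
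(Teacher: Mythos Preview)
Your proposal is correct and follows essentially the same approach as the paper: both identify that $w_1$ is uniform on $\{-1,0,\dots,n-2\}$, take $g$ to be an affinely rescaled Chebyshev polynomial so that $|g|\le 1$ on the nonnegative support, and then lower-bound $|g(-1)|$ using the explicit formula $T_m(y)=\tfrac12\big((y+\sqrt{y^2-1})^m+(y-\sqrt{y^2-1})^m\big)$ for $|y|\ge 1$. The only differences are cosmetic: you use the map $w\mapsto 1-\tfrac{2w}{n-2}$ (sending $[0,n-2]$ exactly to $[-1,1]$) where the paper uses $w\mapsto -1+\tfrac{2w}{n}$, and you invoke $\ln\tfrac{1+u}{1-u}\ge 2u$ where the paper uses the cruder $(1+y)^m\ge 2^{ym}$; either way one gets $|g(-1)|\gtrsim n$, which beats $\sum_{m=1}^{n-2} m \le \binom{n-1}{2}$.
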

\begin{proof}
Observe that over the uniform distribution of orderings, $w_1$ is equally likely to be any integer in $[-1,n-2]$. To make $E_{U_n}[w_1{g}^2(w_1)]$ negative, we want $g(w_1)$ to have high magnitude at $w_1 = -1$ and small magnitude on $[1,n-2]$. For this, we can use Chebyshev polynomials. From Wikipedia \cite{WikipediaChebyshev}, 
\begin{definition}
The mth Chebyshev polynomial can be expressed as 
\begin{enumerate}
\item $T_m(x) = cos(mcos^{-1}(x))$ if $|x| \leq 1$
\item $T_m(x) = \frac{1}{2}\left(\left(x + \sqrt{x^2 - 1}\right)^m + \left(x - \sqrt{x^2 - 1}\right)^m\right)$ if $|x| \geq 1$
\end{enumerate}
\end{definition}
\begin{theorem}\label{Chebyshevtheorem}
For all integers $m \geq 0$ and all $x \in [-1,1]$, $|T_m(x)| \leq 1$
\end{theorem}
We now take $g(w_1) = T_m(-1 + \frac{2w_1}{n})$ where $m = \lceil\frac{1}{2}\sqrt{n}(log_2(n)+1)\rceil$ and analyze $g$.
\begin{lemma}\label{chebyshevanalysislemma}
Taking $g(w_1) = T_m(-1 + \frac{2w_1}{n})$ where $m = \lceil\frac{1}{2}\sqrt{n}(log_2(n)+1)\rceil$, 
\begin{enumerate}
\item $|g(-1)| \geq n$ 
\item For all $w_1 \in [0,n-2]$, $|g(w_1)| \leq 1$.
\end{enumerate}
\end{lemma}
\begin{proof} 
The second statement follows immediately from Theorem \ref{Chebyshevtheorem} as when $w_1 \in [0,n-2]$, $-1 + \frac{2w_1}{n} \in [-1,1]$ so $|g(w_1)| = |T_m(-1 + \frac{2w_1}{n})| \leq 1$. For the first statement, let $\Delta = \frac{2}{n}$ and observe that when $x = -1 - \Delta$, 
\begin{enumerate}
\item $\sqrt{x^2 - 1} = \sqrt{(1 + \Delta)^2 - 1} \geq \sqrt{2\Delta}$. Thus, $|x - \sqrt{x^2 - 1}| \geq 1 + \sqrt{2\Delta}$.
\item $x + \sqrt{x^2 - 1} < 0$ and $x - \sqrt{x^2 - 1} < 0$ so 
\[
|T_m(x)| = \frac{1}{2}\left(\left|x + \sqrt{x^2 - 1}\right|^m + \left|x - \sqrt{x^2 - 1}\right|^m\right) \geq \frac{(1 + \sqrt{2\Delta})^{m}}{2}
\]
\end{enumerate}
We now use the following proposition.
\begin{proposition}\label{prop:roughbound}
For all $y \in [0,1]$ and all $m \geq 0$, $(1 + y)^m \geq 2^{ym}$
\end{proposition}
\begin{proof}
Observe that $(1 + y)^m = \left((1+y)^{\frac{1}{y}}\right)^{ym}$ and if $y \leq 1$ then $(1 + y)^{\frac{1}{y}} \geq 2$.
\end{proof}
Since $n \geq 4$, $\Delta = \frac{2}{n} \leq \frac{1}{2}$ and $\sqrt{2\Delta} \leq 1$. Applying Proposition \ref{prop:roughbound} with $y = \sqrt{2\Delta}$ and recalling that $m = \lceil\frac{1}{2}\sqrt{n}(log_2(n)+1)\rceil$, 
\[
|g(-1)| = |T_m(-1-\Delta)| \geq \frac{(1 + \sqrt{2\Delta})^{m}}{2} \geq \frac{2^{\sqrt{2\Delta}m}}{2} = \frac{2^{\frac{2m}{\sqrt{n}}}}{2} \geq 2^{log_2(n)+1}{2} = n
\]
\end{proof}
We can now complete the proof of Theorem \ref{singlevariablepolytheorem}. By Lemma \ref{chebyshevanalysislemma}, 
\[
E_{U_n}[w_1{g'}^2(w_1)] \leq \frac{1}{n-1}(-n^2 + \sum_{j=0}^{n-2}{j}) < 0
\]
\end{proof}
\subsection{Constructing an SOS proof of infeasibility}
We now show that from the failure of $\tilde{E}_{n}$, we can construct an SOS proof that the ordering principle equations are infeasible.
\begin{theorem}\label{onlyEtheorem}
If there exists a polynomial $g$ of degree at most $\frac{d}{2}$ such that $\tilde{E}_{n}[g^2] < 0$ then there exists an SOS proof of degree at most $2d + 2$ that the ordering principle equations are infeasible.
\end{theorem}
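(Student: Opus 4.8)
\section*{Proof proposal for Theorem~\ref{onlyEtheorem}}

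The plan is to turn the failure of $\tilde E_{n}$ into an explicit Positivstellensatz certificate $-1 = \sum_i f_i s_i + \sum_j h_j^2$ of degree at most $2d+2$. The key object is the symmetrized square
\[
\overline{g^2} := E_{\pi \in S_n}[\pi(g)^2] = \frac{1}{n!}\sum_{\pi \in S_n}\pi(g)^2 ,
\]
which is $\frac{1}{n!}$ times a sum of $n!$ squares, hence itself a sum of squares, and has degree at most $2\deg(g)\le d$. The heart of the argument is the claim that, modulo the ordering principle equations of Definition~\ref{def:OPequations} (using multipliers of bounded degree), $\overline{g^2}$ is congruent to the \emph{constant} $\tilde E_{n}[g^2]$. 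Granting this, set $c := \tilde E_{n}[g^2] < 0$; then $\overline{g^2} = c + \sum_i f_i s_i$, and dividing by $-c>0$ yields $-1 = \frac{1}{-c}\,\overline{g^2} + \sum_i \frac{-f_i}{-c}\,s_i$, which is a sum of squares plus a combination of the $s_i$, i.e.\ an SOS proof of infeasibility; tracking degrees gives the bound $2d+2$. So the theorem reduces to establishing the congruence.

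First I would reduce to the case $g = (\prod_{j\in A}z_j)\,g_A(\{x_{ij}\})$ for a single $A\subseteq[n]$. Decompose $g = \sum_{A\subseteq[n]}(\prod_{j\in A}z_j)\,g_A(\{x_{ij}\})$ with $|A|+\deg(g_A)\le\deg(g)$ for each $A$. Reducing $z_j^2$ to $w_j := \sum_{i\ne j}x_{ij}-1$ using the non-minimality equations, the cross terms of $g^2$ (those indexed by $A\ne A'$) become a nonempty product of $z$-variables times a polynomial in the $x$-variables, so they are annihilated by $\tilde E_{n}$ by part 2 of Definition~\ref{candidateEdefinition}; hence $\tilde E_{n}[g^2] = \sum_{A}\tilde E_{n}[((\prod_{j\in A}z_j)g_A)^2]$, and some summand must be negative. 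Replacing $g$ by the corresponding $(\prod_{j\in A}z_j)g_A$ (which does not increase the degree), we may assume $g$ has this form. Equivalently, one could symmetrize $\pi(g)^2$ over the sign flips $z_j\mapsto -z_j$ as well, which the equations also respect, to kill the cross terms directly.

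Now for $g = (\prod_{j\in A}z_j)\,g_A(x)$ the congruence is transparent. Modulo the non-minimality equations, $\pi(g)^2 = (\prod_{j\in\pi(A)}z_j^2)\,\pi(g_A)^2 \equiv (\prod_{j\in\pi(A)}w_j)\,\pi(g_A)^2 = \pi(q)$, where $q(x):= (\prod_{j\in A}w_j)\,g_A(x)^2$ is a polynomial in the $x$-variables only, of degree at most $2\deg(g)\le d$. Hence $\overline{g^2}\equiv E_{\pi\in S_n}[\pi(q)]$, a polynomial invariant under $S_n$. The crucial point is that the common zero set of the Boolean equations $x_{ij}^2=x_{ij}$, the antisymmetry equations $x_{ij}=1-x_{ji}$, and the transitivity equations is exactly the set of strict total orders on $[n]$, a single $S_n$-orbit; so an $S_n$-invariant polynomial is constant on this variety, with constant value $E_{U_n}[q]=\tilde E_{n}[q]$. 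Since the ideal generated by these equations contains all $x_{ij}^2-x_{ij}$ it is radical, so $E_{\pi\in S_n}[\pi(q)]-\tilde E_{n}[q]$ lies in it. Finally $\tilde E_{n}[q] = \tilde E_{n}[(\prod_{j\in A}w_j)g_A^2] = \tilde E_{n}[(\prod_{j\in A}z_j^2)g_A^2] = \tilde E_{n}[g^2] < 0$ by part 3 of Definition~\ref{candidateEdefinition}. Chaining the two congruences gives $\overline{g^2}\equiv \tilde E_{n}[g^2]$ modulo the ordering principle equations, as required.

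I expect the main obstacle to be the degree accounting in the radical-membership step: one must show that the $S_n$-orbit average of a degree-$\le d$ monomial in the $x_{ij}$ equals its (constant) $U_n$-expectation modulo the transitivity, Boolean, and antisymmetry equations using multipliers that keep the total degree at most $2d+2$, rather than something larger. This is essentially a local statement --- a degree-$k$ monomial only mentions $O(k)$ elements of $[n]$, so the identity is really about total orders on $O(k)$ elements --- but combining it cleanly with the degree-$3$ transitivity generators and with the $z_j^2\mapsto w_j$ substitutions requires care. Everything else (the reduction to a single $A$, the definition-chasing with $\tilde E_{n}$, and assembling the final certificate) is routine.
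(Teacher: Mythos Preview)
Your overall plan---symmetrize $g^2$ over $S_n$, show the resulting sum of squares is congruent modulo the ordering-principle equations to the constant $\tilde E_n[g^2]<0$, then rescale---is exactly the paper's. Your handling of the $z$-variables (reduce to a single $A$, or equivalently symmetrize over sign flips) is actually cleaner than the paper, which glosses over this and simply asserts that ``$g^2$ is a polynomial of degree at most $d$ in the variables $\{x_{ij}\}$''.

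Where your proposal differs is in the congruence step itself. You argue abstractly: the variety of total orders is a single $S_n$-orbit, so an invariant polynomial is constant on it, and radicality of the ideal (which holds because it contains all $x_{ij}^2-x_{ij}$) puts the difference in the ideal. This is correct, but as you yourself flag, it gives \emph{no degree control}. The paper's proof is precisely the constructive replacement for this step: it proves an ``insertion lemma'' showing that for distinct indices $i_1,\dots,i_k$ one has a degree-$(k{+}1)$ derivation of
\[
1=\sum_{\pi\in S_k}\prod_{j=1}^{k-1}x_{i_{\pi(j)}i_{\pi(j+1)}}
\]
from the ordering and transitivity axioms, and then uses this to show that for any monomial $p$ in the $x_{ij}$ involving $k$ indices, $\frac{1}{n!}\sum_{\pi}\pi(p)=\tilde E_n[p]$ has a derivation of degree at most $\deg(p)+k+1$. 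So what you call ``the main obstacle'' is not a loose end but the entire technical content of the paper's proof; your radicality argument is a correct existence statement but cannot substitute for it. To complete your proposal you would have to prove essentially this insertion lemma.
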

\begin{proof}
To prove this theorem, we will show that for any monomial $p(\{x_{i,j}: i,j \in [n], i \neq j\})$ of degree at most $d$, there is a proof of degree at most $2d + 2$ that $\frac{1}{n!}{\sum_{\pi \in S_n}{\pi(p)}} = \tilde{E}_{n}[p]$ which uses only polynomial equalities. To prove this, we observe that given arbitrary indices $i_1,\ldots,i_{k}$, we can split things into cases based on the order of $a_{i_1},\ldots,a_{i_k}$.
\begin{lemma}\label{insertionlemma}
Given the ordering and transitivity axioms, for all $r \in \mathbb{N}$, tuples of distinct indices $(i_1,\ldots,i_{r+1})$, and permutations $\pi \in S_r$,
\begin{align*}
\prod_{j=1}^{r-1}{x_{i_{\pi(j)}i_{\pi(j+1)}}} &= x_{i_{r+1}{i_{\pi(1)}}}\prod_{j=1}^{r-1}{x_{i_{\pi(j)}i_{\pi(j+1)}}} + \sum_{k=1}^{r-1}{\left(\prod_{j=1}^{k-1}{x_{i_{\pi(j)}i_{\pi(j+1)}}}\right)x_{i_{\pi(k)}i_r}x_{i_r{i_{\pi(k+1)}}}\left(\prod_{j=k+1}^{r-1}{x_{i_{\pi(j)}i_{\pi(j+1)}}}\right)} \\
&+\prod_{j=1}^{r-1}{x_{i_{\pi(j)}i_{\pi(j+1)}}} x_{{i_{\pi(r)}}i_{r+1}}
\end{align*}
and there is a degree $r+1$ proof of this fact which uses only polynomial equalities.
\end{lemma}
\begin{remark}
The idea behind this lemma is that we have found the correct ordering for $i_1,\ldots,i_r$ and we are inserting $i_{r+1}$ into the correct place.
\end{remark}
\begin{proof}
Using the ordering and transitivity axioms,
\begin{enumerate}
\item $\prod_{j=1}^{r-1}{x_{i_{\pi(j)}i_{\pi(j+1)}}} = x_{i_{r+1}{i_{\pi(1)}}}\left(\prod_{j=1}^{r-1}{x_{i_{\pi(j)}i_{\pi(j+1)}}}\right) + x_{{i_{\pi(1)}i_{r+1}}}\left(\prod_{j=1}^{r-1}{x_{i_{\pi(j)}i_{\pi(j+1)}}}\right)$
\item For all $k \in [r-1]$,
\begin{align*}
x_{{i_{\pi(k)}i_{r+1}}}\prod_{j=1}^{r-1}{x_{i_{\pi(j)}i_{\pi(j+1)}}} &= (x_{{i_{\pi(k)}i_{r+1}}}x_{i_{r+1}{i_{\pi(k+1)}}} + x_{{i_{\pi(k)}i_{r+1}}}x_{{i_{\pi(k+1)}i_{r+1}}})\left(\prod_{j=1}^{r-1}{x_{i_{\pi(j)}i_{\pi(j+1)}}}\right) \\
&= (x_{{i_{\pi(k)}i_{r+1}}}x_{i_{r+1}{i_{\pi(k+1)}}} + x_{{i_{\pi(k+1)}i_{r+1}}})\left(\prod_{j=1}^{r-1}{x_{i_{\pi(j)}i_{\pi(j+1)}}}\right)
\end{align*}
where the second equality follows because of the transitivity axiom 
\[
x_{{i_{\pi(k)}i_{\pi(k+1)}}}x_{{i_{\pi(k+1)}i_{r+1}}}(1 - x_{{i_{\pi(k)}i_{r+1}}}) = 0
\]
which implies that $x_{{i_{\pi(k)}i_{\pi(k+1)}}}x_{{i_{\pi(k+1)}i_{r+1}}}x_{{i_{\pi(k)}i_{r+1}}}  = x_{{i_{\pi(k)}i_{\pi(k+1)}}}x_{{i_{\pi(k+1)}i_{r+1}}}$.
\item For all $k \in [r-1]$, we have the transitivity axiom 
\[
x_{{i_{\pi(k)}i_{r+1}}}x_{i_{r+1}{i_{\pi(k+1)}}}(1 - x_{{i_{\pi(k)}i_{\pi(k+1)}}}) = 0
\] 
which implies that $x_{{i_{\pi(k)}i_{r+1}}}x_{i_{r+1}{i_{\pi(k+1)}}}x_{{i_{\pi(k)}i_{\pi(k+1)}}} = x_{{i_{\pi(k)}i_{r+1}}}x_{i_{r+1}{i_{\pi(k+1)}}}$. Thus,
\[
x_{{i_{\pi(k)}i_{r+1}}}x_{i_{r+1}{i_{\pi(k+1)}}}\prod_{j=1}^{r-1}{x_{i_{\pi(j)}i_{\pi(j+1)}}} = \left(\prod_{j=1}^{k-1}{x_{i_{\pi(j)}i_{\pi(j+1)}}}\right)x_{i_{\pi(k)}i_r}x_{i_r{i_{\pi(k+1)}}}\left(\prod_{j=k+1}^{r-1}{x_{i_{\pi(j)}i_{\pi(j+1)}}}\right)
\]
\end{enumerate}
The result follows by combining all of these equalities.
\end{proof}
\begin{corollary}\label{orderingcasecorollary}
Given the ordering and transitivity axioms, for all $k$ and all sets of $k$ distinct indices $\{i_1,\dots,i_k\}$,  
\[
1 = \sum_{\pi \in S_{k}}{\prod_{j=1}^{k-1}{x_{i_{\pi(j)}i_{\pi(j+1)}}}}
\]
and there is a degree $k+1$ proof of this fact which uses only polynomial equalities.
\end{corollary}
\begin{corollary}
For any monomial $p(\{x_{i,j}: i,j \in [n], i \neq j\})$ of degree $d$ whose variables contain a total of $k$ indices $i_1,\dots,i_k$, there is a proof of degree at most $d + k + 1$ that $\frac{1}{n!}{\sum_{\pi \in S_n}{\pi(p)}} = \tilde{E}_{n}[p]$ which uses only polynomial equalities.
\end{corollary}
\begin{proof}[Proof sketch]
By Corollary \ref{orderingcasecorollary}, 
\[
p = \sum_{\pi \in S_{k}}{\prod_{j=1}^{k-1}{x_{i_{\pi(j)}i_{\pi(j+1)}}}}p
\]
and there is a proof of this fact of degree at most $d + k + 1$ which uses only polynomial equalities. Using the transitivity axioms, we can prove that 
\[
\sum_{\pi \in S_{k}}{\prod_{j=1}^{k-1}{x_{i_{\pi(j)}i_{\pi(j+1)}}}}p = \sum_{\pi \in S_{k}: p = 1 \text{ when } x_{i_{\pi(1)}} < \ldots < x_{i_{\pi(k)}}}{\prod_{j=1}^{k-1}{x_{i_{\pi(j)}i_{\pi(j+1)}}}}
\]
Using Corollary \ref{orderingcasecorollary} again, this implies that there is a proof of degree at most $d + k + 1$ which uses only polynomial equations that
\[
\frac{1}{n!}\sum_{\pi \in S_n}{\pi(p)} = Pr_{\pi \in S_n}[p = 1 \text{ when } x_{\pi(1)} < \ldots < x_{\pi(n)}] = E_{U_n}[p] = \tilde{E}_{n}[p]
\]
\end{proof}
Now note that given a polynomial $g$ of degree at most $\frac{d}{2}$ such that $\tilde{E}_{n}[g^2] < 0$, $g^2$ is a polynomial of degree at most $d$ in the variables $\{x_{ij}: i,j \in [n], i \neq j\}$ and the variables of every monomial of $g^2$ contain a total of at most $d$ indices. Thus, there is a proof of degree at most $2d + 2$ that $\frac{1}{n!}\sum_{\pi \in S_n}{\pi(g^2)} = \tilde{E}_{n}[g^2] < 0$ which uses only polynomial equalities and this immediately gives us an SOS proof of degree at most $2d+2$ that the ordering principle equations are infeasible.
\end{proof}
Combining Theorem \ref{singlevariablepolytheorem} and Theorem \ref{onlyEtheorem}, we obtain an SOS proof of degree $O(\sqrt{n}log(n))$ that the equations corresponding to the negation of the ordering principle are infeasible, which proves Theorem \ref{thm:OPSOSupperbound}.
\section{Lower Bound Overview}
Proving the lower bound is surprisingly subtle. We proceed as follows.
\begin{definition}
We define $\Omega_{n,d}$ to be the distribution on a variable $u$ with support $[0,n-d] \cap \mathbb{Z}$ and the following probabilities:
\[
Pr[u = k] = \frac{\binom{n-k-1}{d-1}}{\binom{n}{d}}
\]
\end{definition}
\begin{enumerate}
\item In Section \ref{reductiontosinglevariablesection}, we show that to prove our sum of squares lower bound, it is sufficient to show that for all polynomials $g_{*}$ of degree at most $d$, for some $d_2 \geq 2d$, $E_{\Omega_{n,d_2}}[(u-1)g_{*}(u)^2] \geq 0$. Equivalently, 
\[
\sum_{k=0}^{n - d_2 - 1}{\frac{\binom{n-k-2}{d_2-1}}{\binom{n-1}{d_2-1}}{k}g^2_{*}(k+1)}
\geq g^2_{*}(0)
\]
This reduces the problem to analyzing a distribution on one variable. For the precise statement of this result, see Theorem \ref{reducingtosinglevariabletheorem}.
\item In Section \ref{continuousanalysissection}, we observe that an approximation to the above statement is the statement that for some small $\Delta > 0$ (we will take $\Delta = \frac{2d_2}{n}$), taking $g_2(x) = g_{*}\left(\frac{x}{\Delta} + 1\right)$,
\[
\int_{x = 0}^{\infty}{g^2_2(x)xe^{-x}dx} \geq {\Delta^2}g^2_2(-\Delta)
\]
We then prove this approximate statement. For the precise statement of this result, see Theorem \ref{approximatestatementtheorem}.
\item In Section \ref{numericalerrorsection}, we analyze the difference between $\Delta\sum_{k = 0}^{\infty}{(k\Delta)e^{-k\Delta}g^2_2(k\Delta)}$ and $\int_{x = 0}^{\infty}{g^2_2(x)xe^{-x}dx}$ and show that it is small. For the precise statement of this result, see Theorem \ref{numericalintegrationerrortheorem}.
\item In Section \ref{continuouserror}, we analyze the difference between $\Delta\sum_{k=0}^{n - d_2 - 1}{\frac{\binom{n-k-2}{d_2-1}}{\binom{n-1}{d_2-1}}{(k\Delta)}g^2_{2}(k\Delta)}$ and \\
$\Delta\sum_{k = 0}^{\infty}{(k\Delta)e^{-k\Delta}g^2_2(k\Delta)}$. For the precise statement of this result, see Theorem \ref{differencetheorem}.
\end{enumerate}
In Section \ref{puttingeverythingtogethersection}, we put all of these pieces together to prove our SOS lower bound.
\section{Reducing Checking $\tilde{E}$ to Analyzing a Single-Variable Distribution}\label{reductiontosinglevariablesection}
Recall that $\Omega_{n,d}$ is the distribution on a variable $u$ with support $[0,n-d] \cap \mathbb{Z}$ and the following probabilities:
\[
Pr[u = k] = \frac{\binom{n-k-1}{d-1}}{\binom{n}{d}}
\]
In this section, we show that to check that our candidate pseudo-expectation values $\tilde{E}_{2n}$ are valid, it is sufficient to analyze the distribution $\Omega_{n,d}$.
In particular, we prove the following theorem:
\begin{theorem}\label{reducingtosinglevariabletheorem}
For all $d,d_2,n \in \mathbb{N}$ such that $2d \leq d_2 \leq n$, if there is a polynomial $g$ of degree at most $\frac{d}{2}$ such that $\tilde{E}_{2n}[g^2] < 0$ then there is a polynomial $g_{*}: \mathbb{R} \to \mathbb{R}$ of degree at most $d$ such that $E_{\Omega_{n,d_2}}[(u-1)g_{*}(u)^2] < 0$. Equivalently,
\[
\sum_{k=0}^{n - d_2 - 1}{\frac{\binom{n-k-2}{d_2-1}}{\binom{n-1}{d_2-1}}{k}g^2_{*}(k+1)}
< g^2_{*}(0)
\]
\end{theorem}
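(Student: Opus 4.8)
The plan is to peel the claim down to a one-variable inequality in two stages: use the decomposition of Lemma~\ref{decompositionlemma} to isolate a single ``bad'' term, and then exploit the $S_{2n}$-symmetry of $\tilde E_{2n}$ together with the low degree of the bad polynomial to collapse everything to the rank of a single element, at which point the weights $\binom{n-k-1}{d_2-1}/\binom{n}{d_2}$ should emerge as a ``minimum of a random subset'' hypergeometric law.

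\textbf{Isolating a single term.} Assume $g$ has degree at most $d/2$ and $\tilde E_{2n}[g^2]<0$. As in the proof of Lemma~\ref{decompositionlemma}, write $g=\sum_{A\subseteq[2n]}\big(\prod_{j\in A}z_j\big)g_A$ with each $g_A$ a polynomial in the $x$-variables and $|A|+\deg g_A\le d/2$; then
\[
0>\tilde E_{2n}[g^2]=\sum_{A\subseteq[2n]}\tilde E_{2n}\!\Big[\big(\textstyle\prod_{j\in A}w_j\big)g_A^2\Big],
\]
so some summand is negative. Since $\tilde E_{2n}$ is invariant under permutations of $[2n]$, we may take $A=\{1,\dots,a\}$, with $g_A$ of degree at most $d/2-a$ and $\tilde E_{2n}\big[(w_1\cdots w_a)g_A^2\big]<0$.

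\textbf{Collapsing to one variable.} In the random-ordering picture, $\tilde E_{2n}$ averages over uniform linear orders of $2n$ elements, $w_j$ is the rank of $j$ minus $2$, and $g_A^2$ is a non-negative function of the order. I would decompose $g_A$ according to how it depends on the ``symmetric'' data at element~$1$ (i.e., on $\sum_{i\ne1}x_{i1}=w_1+1$) versus on the $O(d)$ other indices occurring in it, condition on the relative order of those other indices, and verify --- just as the $z$-cross-terms vanished in Lemma~\ref{decompositionlemma} --- that the cross terms average to zero, leaving a diagonal sum of terms of the form $\tilde E_{2n}[(w_1\cdots w_a)\,q(w_1)^2]$ with $q$ univariate; one of these is negative. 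The only remaining relevant randomness is the rank $r$ of element~$1$: conditioning on element~$1$ being the minimum of a suitable $d_2$-element subset $W$ (large enough, using $d_2\ge 2d$, that $g_A^2$ ignores everything outside $W$ and that $w_2\cdots w_a$ contributes a fixed non-negative constant), the induced law of $r-1$ is exactly $\Omega_{n,d_2}$, by the identity $\Pr[\min S\ge m]=\binom{n-m}{d_2-1}/\binom{n-1}{d_2-1}$ for a uniform $(d_2-1)$-subset $S$ of the remaining $n-1$ elements --- precisely the weight $\binom{n-k-2}{d_2-1}/\binom{n-1}{d_2-1}$ appearing in the equivalent form --- while $w_1=r-2$ plays the role of $u-1$. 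Taking $g_*=q$ (degree at most $d$) then gives $E_{\Omega_{n,d_2}}[(u-1)g_*(u)^2]<0$.

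\textbf{Main obstacle.} The delicate part is the collapsing step: showing that the non-symmetric dependence of $g_A$ genuinely diagonalizes away (so that $g_*$ is an honest square of a univariate polynomial, not merely some non-negative function of $r$), and then arranging the conditioning on $W$ so that the surviving weight on $\{u=k\}$ is \emph{exactly} $\binom{n-k-2}{d_2-1}/\binom{n-1}{d_2-1}$. This is where the slack in ``$2n$'' and ``$d_2\ge 2d$'' gets used --- the extra room is needed to absorb $g_A^2$ (degree $\le d$) and the peeled factors $w_2,\dots,w_a$ into a random $d_2$-subset of a ground set of size $\Theta(n)$ --- and pushing the bookkeeping through is essentially the whole content of Theorem~\ref{reducingtosinglevariabletheorem}.
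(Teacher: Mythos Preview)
Your outline has the right shape --- decompose via the $z$-variables, condition on the relative order of the distinguished indices, and read off the marginal of the rank of element~$1$ --- but there is a genuine gap at the ``collapsing'' step. You assert that after decomposing $g_A$ by its dependence on $w_1$ versus the other indices, the cross terms vanish and you are left with a diagonal sum of terms $\tilde E_{2n}[(w_1\cdots w_a)\,q(w_1)^2]$ with $q$ univariate. There is no analogue of the $z$-cross-term cancellation here: the $z$-mechanism in Lemma~\ref{decompositionlemma} worked because $\tilde E_{2n}$ kills any monomial with an odd power of some $z_j$, and nothing like that holds for the $x$-variables or for $w_1$. The paper explicitly confronts this: after reducing (by conditioning on the order of the $d_2$ distinguished indices, which forces $A=\{1\}$ since $w_j$ is a sum of squares for $j\ge 2$) to a polynomial $g_{\{1\}}(u_0,\dots,u_{d_2})$, it defines $g_*(u_0)=E[g_{\{1\}}^2\mid u_0]$ as a conditional expectation. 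This $g_*$ is a polynomial of degree $\le d$ that is \emph{non-negative on the integer support}, but the paper says outright that it does not know how to show $g_*$ is a square.

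That is exactly where the ``$2n$'' enters, and your proposal does not use it correctly. From $g_*\ge 0$ on the support and $E_{\Omega_{2n,d_2}}[(u-1)g_*(u)]<0$ one gets $\sum_k \frac{\binom{2n-k-1}{d_2-1}}{\binom{2n-1}{d_2-1}}(k-1)\frac{g_*(k)}{g_*(0)}<1$; squaring termwise (legitimate because each summand is non-negative) gives $\sum_k \big(\frac{\binom{2n-k-1}{d_2-1}}{\binom{2n-1}{d_2-1}}\big)^2(k-1)\frac{g_*^2(k)}{g_*^2(0)}<1$, and then the elementary inequality $\big(\frac{\binom{2n-k-1}{d_2-1}}{\binom{2n-1}{d_2-1}}\big)^2\ge \frac{\binom{n-k-1}{d_2-1}}{\binom{n-1}{d_2-1}}$ converts this into the $\Omega_{n,d_2}$ statement with $g_*^2$. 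So the role of the factor $2$ is not ``slack to absorb $g_A^2$ into a random subset'' as you suggest; it is precisely what makes the squaring trick go through and manufactures an honest square $g_*^2$ out of a function that was merely non-negative on integers. Your claim that the induced law is already $\Omega_{n,d_2}$ is off by this factor, and your plan to get $q(w_1)^2$ directly would, if it worked, make the $2n$ hypothesis unnecessary.
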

To see why this statement is equivalent, observe that 
\begin{align*}
E_{\Omega_{n,d_2}}[(u-1)g_{*}(u)^2] &= \sum_{k=0}^{n - d_2}{\frac{\binom{n-k-1}{d_2-1}}{\binom{n}{d_2}}{(k-1)}g^2_{*}(k)} \\
&= \left(\sum_{k=0}^{n - d_2 - 1}{\frac{\binom{n-k-2}{d_2-1}}{\binom{n}{d_2}}{k}g^2_{*}(k+1)}\right) - \frac{\binom{n-1}{d_2-1}}{\binom{n}{d_2}}g^2_{*}(0)
\end{align*}
Thus, $E_{\Omega_{n,d_2}}[(u-1)g_{*}(u)^2] < 0$ if and only if 
\[
\sum_{k=0}^{n - d_2 - 1}{\frac{\binom{n-k-2}{d_2-1}}{\binom{n}{d_2}}{k}g^2_{*}(k+1)}
< \frac{\binom{n-1}{d_2-1}}{\binom{n}{d_2}}g^2_{*}(0)
\]
Multiplying both sides of this inequality by $\frac{\binom{n}{d_2}}{\binom{n-1}{d_2-1}}$, this is equivalent to
\[
\sum_{k=0}^{n - d_2 - 1}{\frac{\binom{n-k-2}{d_2-1}}{\binom{n-1}{d_2-1}}{k}g^2_{*}(k+1)}
< g^2_{*}(0)
\]
In the remainder of this section, we prove this theorem by starting with the polynomial $g$ and constructing the polynomial $g_{*}$.
\subsection{Distinguished Indices of $g$}\label{distinguishedindicessubsection}
We first use symmetry to argue that we can take $g$ to be symmetric under permutations of all but $d$ distinguished indices. For this, we use Theorem 4.1 in \cite{Pot19}, which is essentially implied by Corollary 2.6 of \cite{RSST18}.
\begin{definition}
The index degree of a polynomial $g$ is the maximum number of indices mentioned in any monomial of $g$.
\end{definition}
\begin{example}
$g = x_{12}x_{13} + x^4_{45}$ has index degree $3$ and degree $4$.
\end{example}
\begin{theorem}\label{squarereductiontheorem}
If $\tilde{E}$ is a linear map from polynomials to $\mathbb{R}$ which is symmetric with respect to permutations of $[1,n]$ then for any polynomial $g$, we can write
\[
\tilde{E}[g^2] = \sum_{I \subseteq [1,n],j:|I| \leq indexdeg(g)}{\tilde{E}[g^2_{Ij}]}
\]
where for all $I,j$,
\begin{enumerate}
\item $g_{Ij}$ is symmetric with respect to permutations of $[1,n] \setminus I$.
\item $indexdeg(g_{Ij}) \leq indexdeg(g)$ and $deg(g_{Ij}) \leq deg(g)$
\item $\forall i \in I, \sum_{\pi \in S_{[1,n] \setminus (I \setminus \{i\})}}{\pi(g_{Ij})} = 0$
\end{enumerate}
\end{theorem}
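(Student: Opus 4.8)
The plan is to treat this as a statement in the representation theory of $S_n$ and to reduce it to a clean finite-dimensional setting before invoking the cited structural results. First I would fix $k := indexdeg(g)$ and $D := deg(g)$ and pass to the finite-dimensional space $\mathcal{V}$ spanned by all monomials of degree at most $D$ and index degree at most $k$. This space is stable under the action of $S_n$ by permuting indices, and since $\tilde{E}$ is symmetric, the bilinear form $B(f,h) := \tilde{E}[fh]$ is an $S_n$-invariant (possibly degenerate, possibly indefinite) symmetric bilinear form on $\mathcal{V}$. The whole problem becomes: decompose $B(g,g)$ as a sum of terms $B(g_{Ij},g_{Ij})$ whose summands satisfy (1)--(3).

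The key structural observations I would use are that $\mathcal{V}$ is graded by the pair $(\text{degree},\ \text{index degree})$, that each graded piece is $S_n$-stable, and that the piece with index degree $t$ is, as an $S_n$-module, induced from the Young subgroup $S_t \times S_{n-t}$ with $S_{n-t}$ acting trivially on the fiber; hence every irreducible $S^\lambda$ occurring in $\mathcal{V}$ has at most $k$ boxes outside its first row, so its ``type'' $m_\lambda := n-\lambda_1$ is bounded by $k$. Decomposing $\mathcal{V} = \bigoplus_\lambda \mathcal{V}_\lambda$ into $S_n$-isotypic components and writing $g = \sum_\lambda g_\lambda$, Schur's lemma forces $B(\mathcal{V}_\lambda,\mathcal{V}_\mu)=0$ for $\lambda \neq \mu$ (the irreducibles of $S_n$ are self-dual over $\mathbb{R}$), so $\tilde{E}[g^2] = \sum_\lambda \tilde{E}[g_\lambda^2]$; in particular distinct types $m$ never interact. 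Within a single type-$m$ block I would then invoke Theorem 4.1 of \cite{Pot19} (equivalently Corollary 2.6 of \cite{RSST18}): it produces, for each $m$-subset $I\subseteq[n]$ and each index $j$ (running over the multiplicity of $S^\lambda$ in the block together with a choice of a canonical vector), polynomials $g_{Ij}$ with $\tilde{E}[g_\lambda^2] = \sum_{I,j}\tilde{E}[g_{Ij}^2]$, where property (1) (invariance of $g_{Ij}$ under $S_{[n]\setminus I}$) and property (3) ($\sum_{\pi\in S_{[n]\setminus(I\setminus\{i\})}}\pi(g_{Ij})=0$ for every $i\in I$) together assert precisely that $g_{Ij}$ sits in the $I$-supported isotypic piece for $S_{[n]\setminus I}$ and dies under each of the $m$ maps that ``forget'' one element of $I$, i.e. it is a genuinely type-$m$ vector anchored at $I$. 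Since all the operators involved (the isotypic projections and the maps producing the $g_{Ij}$) are built from the index-permuting $S_n$-action and linear combinations of it, they preserve the $(\text{degree},\ \text{index degree})$ grading, which gives property (2).

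The step I expect to be the main obstacle is exactly property (2) combined with making the ``anchoring at $I$'' precise: one must arrange the decomposition of a vector into its type-$m$ part, and the expression of that part as a sum of squares of vectors anchored at the various $m$-subsets, without ever applying an operator that raises degree or index degree and without leaving cross terms between distinct $I$'s at the same type. This is what forces the extra index $j$ and the slightly indirect form of the conclusion (it asserts $\tilde{E}[g^2]=\sum\tilde{E}[g_{Ij}^2]$ rather than $g=\sum g_{Ij}$), and it is exactly the content supplied by \cite{Pot19,RSST18}; the part I would actually have to do is check that their hypotheses match (a symmetric linear functional on polynomials of bounded index degree) and carry the grading bookkeeping through so that property (2) holds on the nose.
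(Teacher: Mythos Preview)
The paper does not give its own proof of this theorem: it simply quotes it as Theorem~4.1 of \cite{Pot19} (noting the connection to Corollary~2.6 of \cite{RSST18}) and adds the remark that the degree bound $deg(g_{Ij})\leq deg(g)$, while not in the statement of \cite{Pot19}, follows from its proof. Your proposal is consistent with this: you outline the representation-theoretic picture (isotypic decomposition of the $S_n$-module of bounded index-degree polynomials, orthogonality via Schur's lemma, the ``type $m$'' anchoring at an $m$-subset $I$) and then, at the crucial step, invoke exactly the same two references the paper does. So there is no divergence in approach to assess; you are supplying the expected sketch of what lies behind those citations, and your identification of the grading bookkeeping for property~(2) as the only extra thing to check matches the paper's remark that the degree bound has to be read off from the proof in \cite{Pot19} rather than its statement.
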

\begin{remark}
The statement that $deg(g_{Ij}) \leq deg(g)$ is not in Theorem 4.1 as stated in \cite{Pot19} but it follows from the proof.
\end{remark}
By Theorem \ref{squarereductiontheorem}, if there is a polynomial $g_{0}$ of degree at most $\frac{d}{2}$ such that $\tilde{E}_{2n}[g^2_{0}] < 0$ then there is a polynomial $g$ of degree at most $\frac{d}{2}$ such that
\begin{enumerate}
\item $\tilde{E}_{2n}[g^2] < 0$
\item $g$ is symmetric under permutations of $[2n] \setminus I$ for some $I \subseteq [2n]$ such that $|I| \leq indexdeg(g_0) \leq 2deg(g_0) \leq d$.
\end{enumerate}
where $indexdeg(g_0) \leq 2deg(g_0)$ because all of our variables mention at most two indices.
\subsection{Decomposing $g$ Based on $z_{j}$ Variables}\label{decomposinggsubsection}
We now show that we can choose $g$ to be a polynomial of the form $g = \left(\prod_{j \in A}z_j\right)g_A$ where $A \subseteq [n]$ and $g_A$ is a polynomial in the $x_{ij}$ variables. To do this, just as in Section \ref{possiblefailuresubsection}, we decompose $g$ as $g = \sum_{A \subseteq [n]: |A| \leq \frac{d}{2}}{\left(\prod_{j \in A}{z_j}\right)g_A}$ where each $g_A$ is a polynomial in the $x_{ij}$ variables and observe that 
\begin{align*}
\tilde{E}_{2n}[g^2] &= \tilde{E}_{2n}\left[\sum_{A,A' \subseteq [n]}{\left(\prod_{j \in A}z_j\prod_{j \in A'}z_j\right)g_{A}g_{A'}}\right] \\
&= \sum_{A \subseteq [2n]}{\tilde{E}_{2n}\left[\left(\prod_{j \in A}z^2_j\right)g^2_{A}\right]}
\end{align*}
If $\tilde{E}_{2n}[g^2] < 0$ then there must be an $A \subseteq [2n]$ such that $\tilde{E}_{2n}[\left(\prod_{j \in A}z^2_j\right)g^2_A] < 0$. Thus, we can take $g = \left(\prod_{j \in A}z_j\right)g_A$. Note that $g = \left(\prod_{j \in A}z_j\right)g_A$ is symmetric under permutations of $[2n] \setminus I'$ where $I' = I \cup A$ and thus $|I'| \leq 2d$.
\subsection{Choosing an Ordering on the Distinguished Indices and Changing Variables}\label{changingvariablessubsection}
We now further decompose $\tilde{E}_{2n}[g^2]$ by observing that for any set of indices $I'' = \{i_1,\ldots,i_m\}$, 
\[
\tilde{E}_{2n}[g^2] = \tilde{E}_{2n}\left[\left(\sum_{\pi \in S_m}{\prod_{j=1}^{m-1}{x_{i_{\pi(j)}i_{\pi(j+1)}}}}\right)g^2\right]
\]
Since $\tilde{E}_{2n}[g^2] < 0$, there must be a $\pi \in S_m$ such that $\tilde{E}_{2n}\left[\left(\sum_{\pi \in S_m}{\prod_{j=1}^{m-1}{x_{i_{\pi(j)}i_{\pi(j+1)}}}}\right)g^2\right] < 0$. Thus, we can restrict our attention to 
$\tilde{E}_{2n}\left[\left(\sum_{\pi \in S_m}{\prod_{j=1}^{m-1}{x_{i_{\pi(j)}i_{\pi(j+1)}}}}\right)g^2\right]$ which effectively imposes the ordering $x_{i_{\pi(1)}}< \ldots < x_{i_{\pi(m)}}$.

For technical reasons, we take $I''$ to be $I' = I \cup A$ plus some additional indices so that $|I''| = d_2$. Without loss of generality, we can assume that $I'' = [d_2]$ and $\pi$ is the identity, giving the ordering $x_1 < x_2 < \ldots < x_{d_2}$.

We now observe that under this ordering, for all $j \in [d_2]$,
\[
z^2_j = (j-2) + \sum_{i \in [2n] \setminus [d_2]}{x^2_{ij}}
\]
Thus, for all $j \in [2,d_2]$, $z^2_j$ is a sum of squares so $\left(\prod_{j \in A \setminus \{1\}}z^2_j\right)g^2_A$ is a sum of squares. This implies that $1 \in A$ as otherwise $\tilde{E}_{2n}[g^2] \geq 0$. Following similar logic as before, there is a polynomial $g_{\{1\}}$ in the $x_{ij}$ variables of degree at most $\frac{d}{2} - 1$ such that
\[
\tilde{E}_{2n}\left[\left(\prod_{i=1}^{d_2-1}{x_{i(i+1)}}\right){z^2_1}g^2_{\{1\}}\right] < 0
\]
Now observe that by symmetry, under the ordering $x_1 < x_2 < \ldots < x_{d_2}$, we can express $g_{\{1\}}$ in terms of the following new variables:
\begin{definition}
For $i \in [d_2] \cup \{0\}$, we define the variable $u_i$ so that
\begin{enumerate}
\item $u_{0} = \sum_{j \in [n] \setminus [d_2]}{x_{j1}}$ is the number of elements before $a_1$.
\item For $i \in [d_2 - 1]$, $u_i = \sum_{j \in [n] \setminus [d_2]}{x_{ij}x_{j(i+1)}}$ is the number of elements between $a_i$ and $a_{i+1}$.
\item $u_{d_2} = \sum_{j \in [n] \setminus [d_2]}{x_{{d_2}j}}$ is the number of elements after $a_{d_2}$.
\end{enumerate}
\end{definition}
With these new variables,
\[
\tilde{E}_{2n}\left[\left(\prod_{i=1}^{d_2-1}{x_{i(i+1)}}\right){z^2_1}g^2_{\{1\}}\right] = 
\frac{1}{{d_2}!}E_{u_0,\ldots,u_{d_2} \in \mathbb{N} \cup \{0\}: \sum_{j=0}^{d_2}{u_j} = 2n-d_2}[(u_0 - 1)g^2_{\{1\}}(u_0,\ldots,u_{d_2})] < 0
\]
where the $\frac{1}{{d_2}!}$ term appears because the probability of having the ordering $x_1 < x_2 < \ldots < x_{d_2}$ is $\frac{1}{{d_2}!}$.
\subsection{Reducing to a Single Variable}\label{reductingtosinglevariablesubsection}
We now complete the proof of Theorem \ref{reducingtosinglevariabletheorem} by constructing $g_*(u_0)$ and proving that it has the needed properties. To construct $g_*$, we take 
\[
g_*(u_0) = E_{u_1,\ldots,u_{d_2} \in \mathbb{N} \cup \{0\}: \sum_{j=1}^{d_2}{u_j} = 2n - d_2 - u_0}[g_{\{1\}}(u_0,\ldots,u_{d_2})^2]
\]
We first need to check that $g_*(u_0)$ is indeed a polynomial of degree at most $d$ in $u_0$. This follows from the following lemma: 
\begin{lemma}
For all $d_2 \in \mathbb{N}$ and any polynomial $p(u_1,\ldots,u_{d_2})$ of degree at most $d$, 
\[
E_{u_1,\ldots,u_{d_2} \in \mathbb{N} \cup \{0\}: \sum_{j=1}^{d_2}{u_j} = n'}{p(u_1,\ldots,u_{d_2})}
\]
is a polynomial of degree at most $d$ in $n'$ (for $n' \in \mathbb{N} \cup \{0\}$).
\end{lemma}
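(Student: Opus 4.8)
The plan is to prove this by induction on $d_2$, the number of variables. The key observation is that averaging a polynomial over the lattice points of a simplex $\{u_1,\ldots,u_{d_2} \in \mathbb{N}\cup\{0\}: \sum u_j = n'\}$ is a very structured operation: by symmetry of the uniform distribution over such lattice points, it suffices to understand the averages of the monomials $u_1^{a_1}\cdots u_{d_2}^{a_{d_2}}$ with $\sum a_i \le d$, and by further symmetry these reduce to understanding, for each multiset of exponents, a single ``moment'' depending only on $n'$. So the real content is: for fixed exponents $(a_1,\ldots,a_{d_2})$ with $\sum_i a_i \le d$, the quantity $E[u_1^{a_1}\cdots u_{d_2}^{a_{d_2}}]$ over the uniform distribution on lattice points of the simplex is a polynomial in $n'$ of degree at most $\sum_i a_i \le d$.

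First I would reduce to monomials by linearity, and then to the case of a single ``block'' of variables carrying all the exponents, say $u_1,\ldots,u_r$ with $r \le d$ (since at most $d$ variables can appear), using that the remaining variables $u_{r+1},\ldots,u_{d_2}$ are exchangeable and can be summed out. Concretely, the number of lattice points of $\{u \in (\mathbb{N}\cup\{0\})^{d_2}: \sum u_j = n'\}$ is $\binom{n'+d_2-1}{d_2-1}$, and $E[u_1^{a_1}\cdots u_r^{a_r}] = \binom{n'+d_2-1}{d_2-1}^{-1}\sum_{u_1,\ldots,u_r}u_1^{a_1}\cdots u_r^{a_r}\binom{(n'-u_1-\cdots-u_r)+(d_2-r)-1}{d_2-r-1}$, where the inner binomial counts the ways to distribute the leftover mass among the remaining $d_2-r$ variables. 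So the statement amounts to showing this ratio is a polynomial of degree $\le \sum a_i$ in $n'$.

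To handle the ratio, I would use the standard ``falling-factorial'' or ``hockey-stick'' identities: write each power $u_i^{a_i}$ as a linear combination of falling factorials $u_i^{\underline{b}}$ with $b \le a_i$, note that $u^{\underline{b}}\binom{m-u+c}{c} $ summed against the remaining counting factor telescopes nicely (repeated application of $\sum_{u}\binom{u+s}{s}\binom{m-u+t}{t}=\binom{m+s+t+1}{s+t+1}$, which is Vandermonde's identity), and track the degree: each falling-factorial weight of degree $b$ raises the top of the resulting binomial coefficient by $b$ and the bottom by a fixed amount, so after dividing by $\binom{n'+d_2-1}{d_2-1}$ one gets a ratio of binomial coefficients in $n'$ whose numerator and denominator are both polynomials in $n'$, with the pole structure cancelling and the leftover being a polynomial of degree exactly the total falling-factorial degree, which is at most $\sum a_i \le d$. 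Summing over the combinations gives a polynomial of degree $\le d$.

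The main obstacle I expect is the bookkeeping in the last step: one must verify carefully that all the poles in the denominator $\binom{n'+d_2-1}{d_2-1}$ are cancelled by factors in the numerator (so the result is genuinely a polynomial, not merely a rational function) and that the degree bound $d$ is respected, not $d_2$ or something larger. An alternative, cleaner route that avoids some of this is a direct induction on $d_2$: condition on the value of $u_{d_2}=t$, so $E_{\sum_{j=1}^{d_2}u_j=n'}[p] = \sum_{t=0}^{n'} \Pr[u_{d_2}=t]\cdot E_{\sum_{j=1}^{d_2-1}u_j = n'-t}[p\,|\,u_{d_2}=t]$; by the inductive hypothesis (applied to the polynomial $p$ with $u_{d_2}$ fixed to $t$, viewed as a polynomial in $u_1,\ldots,u_{d_2-1}$ of degree $\le d$) the inner expectation is a polynomial of degree $\le d$ jointly in $(n'-t, t)$, hence $\le d$ in each, and $\Pr[u_{d_2}=t]=\binom{n'-t+d_2-2}{d_2-2}/\binom{n'+d_2-1}{d_2-1}$; then one finishes using the lemma that summing a polynomial times a ``shifted binomial weight'' over $t\in\{0,\ldots,n'\}$ yields a polynomial in $n'$ of the same degree (again a hockey-stick/Vandermonde computation, but now in one variable only, which is much more manageable). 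I would present the induction-on-$d_2$ version, isolating the one-variable summation identity as the only nontrivial computation.
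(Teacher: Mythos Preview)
Your proposal is correct, but it takes a genuinely different route from the paper's. The paper gives only a proof sketch based on a probabilistic representation: it realizes the uniform distribution on $\{u \ge 0 : \sum u_j = n'\}$ by placing $d_2-1$ dividers among $n'$ \emph{labeled} balls in a uniformly random linear order, writes $u_j = \sum_{i=1}^{n'} t_{ij}$ with $t_{ij}$ the indicator that ball $i$ lands in bin $j$, and expands a degree-$d$ monomial in the $u_j$'s into sums of products of at most $d$ indicators. The crucial point (illustrated by the examples $E[t_{i1}]=1/d_2$, $E[t_{i1}t_{j2}]=1/(d_2(d_2+1))$, etc.) is that for \emph{distinct} balls the joint law of their bin labels depends only on $d_2$, not on $n'$; all the $n'$-dependence then comes from counting ordered tuples of distinct balls, which gives falling factorials $n'(n'-1)\cdots(n'-k+1)$ of degree $\le d$.

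Your inductive-on-$d_2$ argument, conditioning on $u_{d_2}=t$ and reducing to the one-variable identity $\sum_{t=0}^{n'} \binom{t}{c}\binom{n'-t+d_2-2}{d_2-2} = \binom{n'+d_2-1}{d_2-1+c}$ (so that the $c$-th falling moment of the marginal equals $(n')^{\underline{c}}/(d_2)^{\overline{c}}$), is a clean alternative. It trades the paper's exchangeability insight for a direct Vandermonde computation, which makes the pole cancellation and the degree bound completely explicit; this is arguably an advantage, since the paper's version is only a sketch and leaves the general case to the reader. Conversely, the paper's indicator picture explains \emph{structurally} why the answer is a polynomial (the constants $E[\prod t_{i_\ell j_\ell}]$ are $n'$-free) and would more readily yield closed forms for specific moments if one wanted them.
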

\begin{proof}[Proof sketch]
We illustrate why this lemma is true by computing these expected values for a few monomials in the variables $\{u_1,\ldots,u_{d_2}\}$. The ideas used in these computations can be generalized to any monomial. The idea is to consider placing $d_2 - 1$ dividing lines among $n'$ labeled balls in a random order.
\begin{example}
With $2$ balls and $2$ bins, the possibilities are as follows: 
\begin{enumerate}
\item $1 2 |$: Balls $1$ and $2$ are in the first bin in the order $1,2$.
\item $2 1 |$: Balls $1$ and $2$ are in the first bin in the order $2,1$.
\item $1 | 2$: Ball $1$ is in the first bin and ball $2$ is in the second bin.
\item $2 | 1$: Ball $2$ is in the first bin and ball $1$ is in the second bin.
\item $| 1 2$: Balls $1$ and $2$ are in the second bin in the order $1,2$.
\item $| 2 1$: Balls $1$ and $2$ are in the second bin in the order $2,1$.
\end{enumerate}
\end{example}
To analyze monomials in the variables $\{u_1,\ldots,u_{d_2}\}$, we write $u_j = \sum_{i=1}^{n'}{t_{ij}}$ where $t_{ij} = 1$ if ball $i$ is in bin $j$ and $t_{ij} = 0$ otherwise.
\begin{enumerate}
\item By symmetry, the probability that a given ball is put into the first bin is $\frac{1}{d_2}$. Thus, $E[u_1] = n'E[t_{i1}] = \frac{n'}{d_2}$.
\item If we consider balls $i$ and $j$ where $i \neq j$, the probability that ball $i$ is put into the first bin is $\frac{1}{d_2}$. If ball $i$ is placed into the first bin, this effectively splits the first bin into two bins, the part before ball $i$ and the part after ball $i$. For ball $j$, the probability that it is put into one of these parts is $\frac{2}{d_2+1}$ and the probability that it is put into the second bin is $\frac{1}{d_2 + 1}$. Thus, $E[t_{i1}t_{j1}] = \frac{2}{d_2(d_2 + 1)}$ and $E[t_{i1}t_{j2}] = \frac{1}{d_2(d_2 + 1)}$. This implies that $E[{u_1}{u_2}] = n'(n'-1)E[t_{i1}t_{j2}] = \frac{n'(n'-1)}{d_2(d_2 + 1)}$ and $E[u^2_1] = n'E[t_{i1}] + n'(n'-1)E[t_{i1}t_{j1}] = \frac{n'}{d_2} + \frac{2n'(n'-1)}{d_2(d_2 + 1)}$.
\end{enumerate}
Following similar ideas, we can analyze any monomial of degree at most $d$ and show that its expected value is a polynomial in $n'$ of degree at most $d$.
\end{proof}
To complete the proof of Theorem \ref{reducingtosinglevariabletheorem}, we need one more technical lemma
\begin{lemma}\label{2ntonlemma}
For all $n,k,d_2 \in \mathbb{N}$ such that $k \leq n - d_2$, 
\[
\frac{\binom{n-k-1}{d_2-1}}{\binom{n-1}{d_2-1}} \leq \left(\frac{\binom{2n-k-1}{d_2-1}}{\binom{2n-1}{d_2-1}}\right)^2
\]
\end{lemma}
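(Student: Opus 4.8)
The plan is to prove Lemma~\ref{2ntonlemma} by a direct term-by-term comparison of the two products obtained after writing each binomial ratio as a telescoping product. First I would use the identity
\[
\frac{\binom{m-k-1}{d_2-1}}{\binom{m-1}{d_2-1}} = \prod_{j=1}^{d_2-1}{\frac{m-k-j}{m-j}} = \prod_{j=1}^{d_2-1}{\left(1 - \frac{k}{m-j}\right)},
\]
valid for $m \in \{n, 2n\}$, so that the claimed inequality becomes
\[
\prod_{j=1}^{d_2-1}{\left(1 - \frac{k}{n-j}\right)} \leq \prod_{j=1}^{d_2-1}{\left(1 - \frac{k}{2n-j}\right)^2}.
\]
Since all factors on both sides are in $[0,1]$ (using $k \leq n-d_2$, so $k \leq n - j$ for $j \leq d_2-1$, and hence also $k \le 2n-j$), it suffices to prove the factorwise inequality $1 - \frac{k}{n-j} \leq \left(1 - \frac{k}{2n-j}\right)^2$ for each $j \in [1,d_2-1]$, and then multiply.

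The heart of the argument is therefore the one-parameter inequality: for $a \ge 1$ an integer (playing the role of $n-j$; note $2n-j = a + n \ge a + a = 2a$ when $j \le n$, but more carefully $2n - j - (n-j) = n$, so $2n-j = (n-j)+n$) and $0 \le k \le a$, we want $1 - \frac{k}{a} \le \left(1 - \frac{k}{a+n}\right)^2$. Writing $b = a+n \ge 2a$ (since $n \ge a$ because $a = n-j \le n$), expand the right side: we need $\left(1 - \frac{k}{b}\right)^2 - \left(1 - \frac{k}{a}\right) \ge 0$, i.e. $1 - \frac{2k}{b} + \frac{k^2}{b^2} - 1 + \frac{k}{a} \ge 0$, i.e. $k\left(\frac{1}{a} - \frac{2}{b} + \frac{k}{b^2}\right) \ge 0$. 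For $k = 0$ this is an equality; for $k > 0$ it reduces to $\frac{1}{a} - \frac{2}{b} + \frac{k}{b^2} \ge 0$, and since $k \ge 0$ it suffices to show $\frac{1}{a} \ge \frac{2}{b}$, i.e. $b \ge 2a$, which holds because $b = (n-j) + n \ge 2(n-j) = 2a$ exactly when $n \ge n-j$, i.e. always. So the factorwise inequality holds, and taking the product over $j = 1,\ldots,d_2-1$ — legitimate since every factor is nonnegative — yields the lemma.

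I expect the main (and really only) subtlety to be bookkeeping the ranges: one must check that $n - j \ge 1$ and $k \le n-j$ hold for all $j$ in the product range so that each individual factor $1 - \frac{k}{n-j}$ is genuinely in $[0,1]$ (needed to justify multiplying the inequalities without sign reversals), and that $2n - j \ge 1$ similarly. Both follow from $j \le d_2 - 1 \le n - 1$ (using $d_2 \le n$) and $k \le n - d_2 \le n - j$. Once these are in place the proof is a short computation; there is no analytic or combinatorial obstacle beyond the elementary algebra above.
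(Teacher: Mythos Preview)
Your proof is correct and follows essentially the same approach as the paper: write each binomial ratio as the product $\prod_{j=1}^{d_2-1}\frac{m-k-j}{m-j}$ and compare factor by factor. The only cosmetic difference is that the paper inserts an intermediate bound $\frac{2n-k-j}{2n-j} \ge \frac{2n-k-2j}{2n-2j}$ before applying the elementary inequality $\bigl(\tfrac{2n'-k}{2n'}\bigr)^2 \ge 1 - \tfrac{k}{n'}$ with $n' = n-j$, whereas you compare $\bigl(1-\tfrac{k}{2n-j}\bigr)^2$ with $1-\tfrac{k}{n-j}$ directly via the observation $2n-j \ge 2(n-j)$; the underlying algebra is the same.
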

\begin{proof}
Observe that for all $k',n'$ such that $0 < k' \leq n'$, $\left(\frac{2n'-k'}{2n'}\right)^2 = 1 - \frac{k'}{n'} + \frac{{k'}^2}{4n^2} > 1 - \frac{k'}{n'}$. Now observe that 
\[
\left(\frac{\binom{2n-k-1}{d_2-1}}{\binom{2n-1}{d_2-1}}\right)^2 = \prod_{j=1}^{d_2-1}{\left(\frac{2n-k-j}{2n-j}\right)^2} \geq \prod_{j=1}^{d_2-1}{\left(\frac{2n-k-2j}{2n-2j}\right)^2} \geq 
\prod_{j=1}^{d_2-1}{\frac{n-k-j}{n-j}} = \frac{\binom{n-k-1}{d_2-1}}{\binom{n-1}{d_2-1}}
\]
\end{proof}
We now complete the proof of Theorem \ref{reducingtosinglevariabletheorem}. Recall that $\Omega_{n,d_2}$ is the distribution on a variable $u$ with support $[0,n-d_2] \cap \mathbb{Z}$ and the following probabilities
\[
Pr[u = k] = \frac{\binom{n-k-1}{d_2-1}}{\binom{n}{d_2}}
\]
We have the following facts:
\begin{enumerate}
\item $E_{\Omega_{2n,d_2}}[(u-1)g_{*}(u)] = E_{u_0,\ldots,u_{d_2} \in \mathbb{N} \cup \{0\}: \sum_{j=0}^{d_2}{u_j} = 2n-d_2}[(u_0 - 1)g^2_{\{1\}}(u_0,\ldots,g_{d_2})] < 0$
\item For all $u_0 \in [0,2n-d_2] \cap \mathbb{Z}$, $g_{*}(u) \geq 0$
\end{enumerate}
\begin{remark}
Intuitively, $g_{*}$ should already be a sum of squares. However, we are not sure how to prove this, so we instead show that $g^2_{*}$ is sufficient for our purposes.
\end{remark}
Since $E_{\Omega_{2n,d_2}}[(u-1)g_{*}(u)] < 0$, 
\[
\sum_{k=1}^{2n - d_2}{\frac{\binom{2n-k-1}{d_2-1}}{\binom{2n}{d_2}}(k-1)g_{*}(k)} < \frac{\binom{2n-1}{d_2-1}}{\binom{2n}{d_2}}g_{*}(0)
\]
which implies that
\[
\sum_{k=1}^{2n - d_2}{\frac{\binom{2n-k-1}{d_2-1}}{\binom{2n-1}{d_2-1}}(k-1)\frac{g_{*}(k)}{g_{*}(0)}} < 1
\]
In turn, this implies that 
\[
\sum_{k=1}^{2n - d_2}{\left(\frac{\binom{2n-k-1}{d_2-1}}{\binom{2n-1}{d_2-1}}\right)^2{(k-1)}\frac{g^2_{*}(k)}{g^2_{*}(0)}} \leq 
\sum_{k=1}^{2n - d_2}{\left(\frac{\binom{2n-k-1}{d_2-1}}{\binom{2n-1}{d_2-1}}\right)^2{(k-1)^2}\frac{g^2_{*}(k)}{g^2_{*}(0)}}< 1
\]
Using Lemma \ref{2ntonlemma},
\[
\sum_{k=1}^{n - d_2}{\frac{\binom{n-k-1}{d_2-1}}{\binom{n-1}{d_2-1}}{(k-1)}\frac{g^2_{*}(k)}{g^2_{*}(0)}} \leq 
\sum_{k=1}^{2n - d_2}{\left(\frac{\binom{2n-k-1}{d_2-1}}{\binom{2n-1}{d_2-1}}\right)^2{(k-1)}\frac{g^2_{*}(k)}{g^2_{*}(0)}} < 1
\]
Multiplying both sides by $g^2_{*}(0)$,
\[
\sum_{k=1}^{n - d_2}{\frac{\binom{n-k-1}{d_2-1}}{\binom{n-1}{d_2-1}}{(k-1)}g^2_{*}(k)} = 
\sum_{k=0}^{n - d_2 - 1}{\frac{\binom{n-k-2}{d_2-1}}{\binom{n-1}{d_2-1}}{k}g^2_{*}(k+1)}
< g^2_{*}(0)
\]
This implies that $E_{\Omega_{n,d_2}}[(u-1)g^2_{*}(u)] < 0$, as needed.
\section{Approximate Analysis for $\Omega_{n,d_2}$}\label{continuousanalysissection}
To prove our SOS lower bound, we need to show that for any polynomial $g_{*}$ of degree at most $d$, $E_{\Omega_{n,d_2}}[(u-1)g^2_{*}(u)] \geq 0$. Equivalently, we need to show that for any polynomial $g_{*}$ of degree at most $d$, 
\[
\sum_{k=0}^{n - d_2-1}{\frac{\binom{n-k-2}{d_2-1}}{\binom{n-1}{d_2-1}}{k}g^2_{*}(k+1)} \geq g^2_{*}(0)
\]
\subsection{Approximation by an Integral}
The expression $\sum_{k=0}^{n - d_2-1}{\frac{\binom{n-k-2}{d_2-1}}{\binom{n-1}{d_2-1}}{k}g^2_{*}(k+1)}$ is hard to analyze, so we approximate it by an integral. Observe that as long as $k << n$, $k{d^2_2} << n^2$ and ${k^2}d_2 << n$,
\[
\frac{\binom{n-k-2}{d_2-1}}{\binom{n-1}{d_2-1}} = \prod_{j=1}^{d_2-1}{\left(1-\frac{k}{n}\right)\frac{\frac{n-j-k-1}{n-j}}{1-\frac{k}{n}}} \approx \left(1 - \frac{k}{n}\right)^{d_2 - 1} \approx e^{-\frac{d_2{k}}{n}}
\]
as
\[
1 - \frac{\frac{n-j-k-1}{n-j}}{1-\frac{k}{n}} = \frac{(n-k)(n-j) - n(n-j-k-1)}{(n-k)(n-j)} = \frac{jk + n}{(n-k)(n-j)}
\]
which is small. Taking $\Delta = \frac{d_2}{n}$ and $g_2(x) = g_{*}\left(\frac{x}{\Delta} + 1\right)$, approximately what we need to show is that for all polynomials $g_2$ of degree at most $d$,  
\[
\frac{1}{\Delta}\sum_{j = 0}^{\infty}{(j\Delta)e^{-j\Delta}g^2_2(j\Delta)} \geq g^2_2(-\Delta)
\]
In turn, this statement is approximately the same as the statement that for all polynomials $g_2$ of degree at most $d_2$, 
\[
\int_{x = 0}^{\infty}{g^2_2(x)xe^{-x}dx} \geq {\Delta^2}g^2_2(-\Delta)
\]
In the remainder of this section, we prove this statement when $d{d_2} << n$ by analyzing the distribution $\mu(x) = xe^{-x}$. In Sections \ref{numericalerrorsection} and \ref{continuouserror}, we will then analyze how to bound the difference between this statement and the statement which we actually need to prove.
\begin{remark}
For technical reasons, we will actually take $\Delta = \frac{2d_2}{n}$ rather than $\Delta = \frac{d_2}{n}$. For details, see Section \ref{continuouserror}.
\end{remark}
\begin{remark}
We might think that the probability that $x$ is much more than $log(n)$ is very small and can be ignored. If so, than using Chebyshev polynomials would cause this statement to fail at degree $\tilde{O}\left(\sqrt{\frac{n}{d}}\right)$ which is much less than $\sqrt{n}$. However, this is not correct. Intuitively, since we are considering polynomials of degree up to $d$, we should consider the point where $x^{d}e^{-x}$ becomes negligible, which is when $x$ is a sufficiently large constant times $d{log(d)}$.

Based on this, we can only ignore $u_0$ which are a sufficiently large constant times $dlog(d)\frac{n}{d_2}$. Roughly speaking, we will want to ignore all $u_0 > \frac{n}{4}$, so we want $d_2$ to be at least $Cdlog(d)$ for some sufficiently large constant $C$. For details, see Section \ref{continuouserror}.
\end{remark}
\subsection{Orthonormal Basis for $\mu(x) = xe^{-x}$}
In order to analyze $\int_{x = 0}^{\infty}{g^2(x)xe^{-x}dx}$, it is very useful to find the unique orthonormal basis $\{h_k: k \in \mathbb{N} \cup \{0\}\}$ for the distribution $\mu(x) = xe^{-x}$ such that $h_k$ has degree $k$ and the leading coefficient of $h_k$ is positive. In this subsection, we find this orthonormal basis.
\begin{definition}
Given two polynomials $f$ and $g$, we define $f \cdot g = \int_{x = 0}^{\infty}{f(x)g(x)xe^{-x}dx}$
\end{definition}
\begin{definition}
We define $h_k$ to be the degree $k$ polynomial such that the leading coefficient of $h_k$ is positive, $h_k \cdot h_k = 1$, and for all $j < k$, $h_j \cdot h_k = 0$.
\end{definition}
\begin{lemma}
\[
h_k(x) = \frac{1}{\sqrt{k!(k+1)!}}\sum_{j=0}^{k}{(-1)^{k-j}\binom{k}{j}\frac{(k+1)!}{(j+1)!}x^{j}}
\]
\end{lemma}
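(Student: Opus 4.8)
The plan is to verify directly that the polynomial displayed on the right-hand side satisfies the three properties that characterize $h_k$: it has degree $k$ with positive leading coefficient, it has unit norm, and it is orthogonal to every polynomial of degree $<k$. Since the bilinear form $f \cdot g = \int_{0}^{\infty}{f(x)g(x)xe^{-x}dx}$ is positive definite on polynomials (the measure $xe^{-x}dx$ has infinite support and all moments finite), Gram--Schmidt guarantees that a polynomial with these properties is unique, so checking them suffices.

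First I would record the moments of the weight: for every integer $m \geq 0$, $\int_{0}^{\infty}{x^{m+1}e^{-x}dx} = (m+1)!$. Reading off the $j=k$ term of the claimed formula shows $h_k$ has degree exactly $k$ with leading coefficient $\frac{1}{\sqrt{k!(k+1)!}} > 0$. For orthogonality it is enough to show $h_k \cdot x^m = 0$ for $0 \leq m \leq k-1$, since $1, x, \dots, x^{k-1}$ span the polynomials of degree $< k$. Substituting the formula and applying the moment identity gives $h_k \cdot x^m = \frac{(k+1)!}{\sqrt{k!(k+1)!}} \sum_{j=0}^{k}{(-1)^{k-j}\binom{k}{j}\frac{(j+m+1)!}{(j+1)!}}$.

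The key observation is that $\frac{(j+m+1)!}{(j+1)!} = (j+2)(j+3)\cdots(j+m+1)$ is a monic polynomial in $j$ of degree $m$, and that $\sum_{j=0}^{k}{(-1)^{k-j}\binom{k}{j}P(j)}$ is the $k$-th forward difference $\Delta^k P(0)$, which vanishes when $\deg P < k$ and equals $k!$ times the leading coefficient of $P$ when $\deg P = k$. Applying this with $P(j) = \frac{(j+m+1)!}{(j+1)!}$ yields $h_k \cdot x^m = 0$ for $m < k$, and with $m = k$ it yields $h_k \cdot x^k = \frac{(k+1)!}{\sqrt{k!(k+1)!}} \cdot k! = \sqrt{k!(k+1)!}$. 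Since $h_k$ is orthogonal to all lower-degree monomials, $h_k \cdot h_k$ equals the leading coefficient of $h_k$ times $h_k \cdot x^k$, i.e. $\frac{1}{\sqrt{k!(k+1)!}} \cdot \sqrt{k!(k+1)!} = 1$, which completes the verification.

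I expect the only real work to be the factorial bookkeeping in the orthogonality computation together with a clean statement of the finite-difference lemma; there is no conceptual obstacle, since these $h_k$ are just the generalized Laguerre polynomials $L_k^{(1)}$ rescaled by $(-1)^k/\sqrt{k+1}$ to be orthonormal with positive leading coefficient, and one could alternatively invoke standard Laguerre identities — but the self-contained finite-difference argument is short enough to present in full.
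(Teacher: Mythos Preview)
Your proposal is correct and follows essentially the same approach as the paper's own proof: both compute $h_k \cdot x^m$ using the moment formula, recognize the resulting alternating binomial sum as a $k$-th forward difference of a polynomial of degree $m$ in $j$, and conclude orthogonality and the norm from the standard facts about $\Delta^k$ applied to polynomials. The only cosmetic difference is that the paper rewrites $\frac{(j+m+1)!}{(j+1)!}$ as $m!\binom{j+m+1}{j+1}$ before invoking the finite-difference lemma, whereas you work with the falling-factorial form directly.
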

\begin{proof}
\begin{proposition}
$x^{p} \cdot x^{q} = (p+q+1)!$
\end{proposition}
Computing directly using Gram-Schmidt, the first few polynomials in the orthonormal basis are
\begin{enumerate}
\item $h_0 = 1$
\item $h_1 = \frac{1}{\sqrt{2}}(x-2)$
\item $h_2 = \frac{1}{\sqrt{12}}(x^2 - 6x + 6)$
\item $h_3 = \frac{1}{\sqrt{144}}(x^3 - 12x^2 + 36x - 24)$
\item $h_4 = \frac{1}{\sqrt{2880}}(x^4 - 20x^3+120x^2 - 240x + 120)$
\end{enumerate}
To check the general pattern, we need to check that for all $i \in [0,k-1]$, $h_k \cdot x^i = 0$ and $h_k \cdot h_k = 1$. To see this, observe that for all $i \geq 0$, 
\begin{align*}
h_k \cdot x^{i} &= \frac{1}{\sqrt{k!(k+1)!}}\sum_{j=0}^{k}{(-1)^{k-j}\binom{k}{j}\frac{(k+1)!}{(j+1)!}(i+j+1)!} \\
&= \frac{1}{\sqrt{k!(k+1)!}}\sum_{j=0}^{k}{(-1)^{k-j}\binom{k}{j}\binom{i+j+1}{j+1}(k+1)!i!}
\end{align*}
Now observe that for all $k$ and all functions $f(j)$, 
\[
\sum_{j=0}^{k}{(-1)^{k-j}\binom{k}{j}f(j)} = (\Delta^{k}f)(0)
\]
where $({\Delta}f)(x) = f(x+1) - f(x)$
\begin{proposition}
If $f = x^i$ then ${{\Delta^k}f} = 0$ if $i < k$ and ${{\Delta^k}f} = k!$ if $i = k$.
\end{proposition}
Viewing $\binom{i+j+1}{j+1}$ as a polynomial in $j$, 
\[
\binom{i+j+1}{j+1} = \frac{(i+j+1)!}{i!(j+1)!}\frac{j^{i}}{i!} + \text{ lower order terms}
\]
Putting everything together, 
\begin{enumerate}
\item $h_k \cdot x^{i} = 0$ whenever $i \leq k$.
\item $h_k \cdot h_k = \frac{1}{\sqrt{k!(k+1)!}}(h_k \cdot x^k) = \frac{k!(k+1)!}{k!(k+1)!}({\Delta^{k}\binom{k+j+1}{j+1}}(0)) = 1$
\end{enumerate}
\end{proof}
\subsection{Proof of the Approximate Statement}
Now that we have the orthonormal basis for $\mu(x) = xe^{-x}$, we prove the approximate statement we need.
\begin{theorem}\label{approximatestatementtheorem}
For all $d \in \mathbb{N}$ and all $\Delta > 0$ such that $10(d+1)^2{{\Delta}^2}e^{2d\Delta} \leq 1$, for any polynomial $g_2$ of degree at most $d$,
\[
\int_{x = 0}^{\infty}{g^2_2(x)xe^{-x}dx} \geq 10{\Delta^2}g^2_2(-\Delta)
\]
\end{theorem}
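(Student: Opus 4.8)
The plan is to expand $g_2$ in the orthonormal basis $\{h_k\}$ for the inner product $f \cdot g = \int_{0}^{\infty} f(x)g(x)xe^{-x}\,dx$ constructed in the previous subsection, and then reduce the whole statement to a pointwise estimate on the $h_k$ at the single point $x = -\Delta$. Concretely, since $\deg g_2 \le d$ we may write $g_2 = \sum_{k=0}^{d} c_k h_k$, and orthonormality gives $\int_{0}^{\infty} g_2^2(x)xe^{-x}\,dx = g_2 \cdot g_2 = \sum_{k=0}^{d} c_k^2$. Meanwhile $g_2(-\Delta) = \sum_{k=0}^{d} c_k h_k(-\Delta)$, so Cauchy--Schwarz gives $g_2^2(-\Delta) \le \left(\sum_{k=0}^{d} c_k^2\right)\left(\sum_{k=0}^{d} h_k^2(-\Delta)\right)$. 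Hence it suffices to prove the ``basis bound'' $\sum_{k=0}^{d} h_k^2(-\Delta) \le \frac{1}{10\Delta^2}$; combining this with the two displayed identities and rearranging immediately yields the theorem.

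For the basis bound I would use the explicit formula $h_k(x) = \frac{1}{\sqrt{k!(k+1)!}}\sum_{j=0}^{k} (-1)^{k-j}\binom{k}{j}\frac{(k+1)!}{(j+1)!}x^{j}$. The key observation is that evaluating at the negative point $x=-\Delta$ makes all the signs align, $(-1)^{k-j}(-\Delta)^{j} = (-1)^{k}\Delta^{j}$, so $|h_k(-\Delta)| = \frac{1}{\sqrt{k!(k+1)!}}\sum_{j=0}^{k} \binom{k}{j}\frac{(k+1)!}{(j+1)!}\Delta^{j}$ with no cancellation to control. Pulling out $\frac{(k+1)!}{\sqrt{k!(k+1)!}} = \sqrt{k+1}$ gives $|h_k(-\Delta)| = \sqrt{k+1}\sum_{j=0}^{k} \frac{\binom{k}{j}}{(j+1)!}\Delta^{j}$, and the crude bounds $\binom{k}{j} \le k^{j}/j!$ and $(j+1)! \ge 1$ then yield $|h_k(-\Delta)| \le \sqrt{k+1}\sum_{j=0}^{k} \frac{(k\Delta)^{j}}{j!} \le \sqrt{k+1}\,e^{k\Delta}$.

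Summing this over $k \le d$ gives $\sum_{k=0}^{d} h_k^2(-\Delta) \le e^{2d\Delta}\sum_{k=0}^{d}(k+1) = e^{2d\Delta}\frac{(d+1)(d+2)}{2} \le (d+1)^2 e^{2d\Delta}$, and this is at most $\frac{1}{10\Delta^2}$ exactly by the hypothesis $10(d+1)^2\Delta^2 e^{2d\Delta} \le 1$. That completes the argument.

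I do not expect a genuine obstacle here: the one idea doing the real work is recognizing that the worst case, over degree-$\le d$ polynomials normalized in $L^2(\mu)$, of the value at a fixed point is governed through Cauchy--Schwarz by $\sum_{k\le d} h_k^2$ evaluated there, so everything collapses to bounding $|h_k(-\Delta)|$; and that bound is clean precisely because the evaluation point is negative, so every term in the closed form is positive. The only mildly delicate step is the factorial bookkeeping needed to isolate the factor $\sqrt{k+1}$ and to justify the slack estimate $\binom{k}{j}/(j+1)! \le k^{j}/j!$, but these are routine.
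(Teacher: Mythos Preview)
Your proof is correct and follows essentially the same approach as the paper: expand in the orthonormal basis $\{h_k\}$, use the pointwise bound $|h_k(-\Delta)| \le \sqrt{k+1}\,e^{k\Delta}$ (which is Lemma~\ref{nearzerobound} in the paper, proved by the same explicit-formula computation you give), and then Cauchy--Schwarz. The only cosmetic difference is that you apply Cauchy--Schwarz once directly to $g_2(-\Delta)=\sum c_k h_k(-\Delta)$, whereas the paper routes through the intermediate bound $\sum|a_k|\le\sqrt{d+1}\sqrt{\sum a_k^2}$; both paths land on the identical estimate $g_2^2(-\Delta)\le (d+1)^2 e^{2d\Delta}\sum c_k^2$.
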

\begin{proof}
Given a polynomial $g_2$ of degree at most $d$, write $g_2 = \sum_{k=0}^{d}{{a_k}h_k}$. Since $\{h_k\}$ is an orthonormal basis for $\mu(x) = xe^{-x}$, 
\[
\int_{x = 0}^{\infty}{g^2_2(x)xe^{-x}dx} = \sum_{k=0}^{d}{a^2_k}
\]
Using Cauchy Schwarz, we have that 
\[
\sum_{k=0}^{d}{|a_k|} \leq \sqrt{\left(\sum_{k=0}^{d}{a^2_k}\right)\left(\sum_{k=0}^{d}{1}\right)} = \sqrt{(d+1)}\sqrt{\sum_{k=0}^{d}{a^2_k}}
\]
which implies that $\sum_{k=0}^{d}{a^2_k} \geq \frac{\left(\sum_{k=0}^{d}{|a_k|}\right)^2}{d+1}$

In order to upper bound $|g_2(-\Delta)|$, we need to bound $h_k(x)$ near $x = 0$. For this, we use the following lemma:
\begin{lemma}\label{nearzerobound}
For all $k \in \mathbb{N}$ and all $x \in \mathbb{R}$, 
\[
|h_k(x)| \leq \sqrt{k+1}e^{k|x|}
\]
\end{lemma}
\begin{proof}
Observe that 
\begin{align*}
h_k(x) &\leq  \frac{1}{\sqrt{k!(k+1)!}}\sum_{j=0}^{k}{\binom{k}{j}\frac{(k+1)!}{(j+1)!}|x|^{j}} \\
&\leq \sqrt{k+1}\sum_{j=0}^{k}{\frac{(k|x|)^j}{j!(j+1)!}} \leq \sqrt{k+1}e^{k|x|}
\end{align*}
\end{proof}
By Lemma \ref{nearzerobound}, 
\[
|g_2(-\Delta)| \leq \sum_{k=0}^{d}{|a_k|\sqrt{k+1}e^{k\Delta}} \leq \sqrt{d+1}e^{d\Delta}\sum_{k=0}^{d}{|a_k|}
\]
Thus, $g^2_2(-\Delta) \leq (d+1)e^{2d\Delta}\left(\sum_{k=0}^{d}{|a_k|}\right)^2$.

Putting everything together, as long as $10(d+1)^2{{\Delta}^2}e^{2d\Delta} \leq 1$, $\int_{x = 0}^{\infty}{g^2(x)xe^{-x}dx} \geq 10{\Delta^2}g^2(-\Delta)$, as needed.
\end{proof}
\section{Handling Numerical Integration Error}\label{numericalerrorsection}
In this section, we show how to bound the difference between $\Delta\sum_{j = 0}^{\infty}{(j\Delta)e^{-j\Delta}g^2_2(j\Delta)}$ and $\int_{x = 0}^{\infty}{g^2_2(x)xe^{-x}dx}$
\subsection{Bounding Numerical Integration Error via Higher Derivatives}
In this subsection, we describe how the numerical integration error can be bounded using higher derivatives.
\begin{lemma}
For any $\Delta > 0$ and any differentiable function $f: [0,\infty) \to \mathbb{R}$,
\[
\left|\int_{x=0}^{\infty}{f(x)dx} - \Delta\sum_{j=0}^{\infty}{f(j\Delta)}\right| \leq \Delta\int_{x = 0}^{\infty}{|f'(x)|dx}
\]
\end{lemma}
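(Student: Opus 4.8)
The plan is to break both the integral and the sum into pieces over the intervals $[j\Delta,(j+1)\Delta]$ and compare them locally. Write
\[
\int_{x=0}^{\infty}{f(x)\,dx} = \sum_{j=0}^{\infty}{\int_{x=j\Delta}^{(j+1)\Delta}{f(x)\,dx}}
\qquad\text{and}\qquad
\Delta\sum_{j=0}^{\infty}{f(j\Delta)} = \sum_{j=0}^{\infty}{\int_{x=j\Delta}^{(j+1)\Delta}{f(j\Delta)\,dx}},
\]
so that the quantity to be bounded is $\left|\sum_{j=0}^{\infty}{\int_{x=j\Delta}^{(j+1)\Delta}{\bigl(f(x)-f(j\Delta)\bigr)\,dx}}\right|$, which by the triangle inequality is at most $\sum_{j=0}^{\infty}{\int_{x=j\Delta}^{(j+1)\Delta}{\bigl|f(x)-f(j\Delta)\bigr|\,dx}}$.

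Next I would estimate each interval term using the fundamental theorem of calculus. For fixed $j$ and $x\in[j\Delta,(j+1)\Delta]$ we have $f(x)-f(j\Delta)=\int_{t=j\Delta}^{x}{f'(t)\,dt}$, hence $\bigl|f(x)-f(j\Delta)\bigr|\le\int_{t=j\Delta}^{x}{|f'(t)|\,dt}\le\int_{t=j\Delta}^{(j+1)\Delta}{|f'(t)|\,dt}$. Integrating this bound over $x\in[j\Delta,(j+1)\Delta]$ picks up a factor $\Delta$, giving $\int_{x=j\Delta}^{(j+1)\Delta}{\bigl|f(x)-f(j\Delta)\bigr|\,dx}\le\Delta\int_{t=j\Delta}^{(j+1)\Delta}{|f'(t)|\,dt}$. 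Summing over $j$ and reassembling the pieces yields $\Delta\int_{t=0}^{\infty}{|f'(t)|\,dt}$, which is exactly the claimed bound.

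As for rigor: if $\int_{0}^{\infty}{|f'|}=\infty$ the inequality is vacuous, so we may assume $f'\in L^1[0,\infty)$; then the interchange of the sum with the nonnegative integrand above is justified by Tonelli's theorem, and the decomposition of $\int_0^\infty f$ into interval integrals is legitimate since that integral is assumed to exist. The only point requiring a little care is the rearrangement of $\int_0^\infty f(x)\,dx$ and of the series $\Delta\sum_j f(j\Delta)$ into the common telescoped form — once $f'\in L^1$, $f$ has a finite limit at infinity and both objects are controlled, so this causes no trouble. No genuinely hard step arises; this is the standard left-endpoint quadrature error estimate.
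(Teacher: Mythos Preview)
Your proof is correct and follows essentially the same approach as the paper: decompose into intervals $[j\Delta,(j+1)\Delta]$, bound $|f(x)-f(j\Delta)|$ by $\int_{j\Delta}^{(j+1)\Delta}|f'|$, integrate to pick up the factor $\Delta$, and sum. The paper omits the integrability and convergence caveats you include, but the core argument is identical.
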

\begin{proof}
This result follows by summing the following proposition over all $j \in \mathcal{N} \cup \{0\}$ and using the fact that $|a + b| \leq |a| + |b|$.
\begin{proposition}
For all $j \in \mathcal{N} \cup \{0\}$
\[
\left|\int_{x = j\Delta}^{(j+1)\Delta}{(f(x) - f(j\Delta))dx}\right| \leq \Delta\int_{x = j\Delta}^{(j+1)\Delta}{|f'(x)|dx}
\]
\end{proposition}
\begin{proof}
Observe that for all $j \in \mathcal{N} \cup \{0\}$, for all $x \in [j\Delta,(j+1)\Delta]$, 
\[
|f(x) - f(j\Delta)| \leq \int_{x = j\Delta}^{(j+1)\Delta}{|f'(x)|dx}
\]
Thus,
\[
\left|\int_{x = j\Delta}^{(j+1)\Delta}{(f(x) - f(j\Delta))dx}\right| \leq \Delta\int_{x = j\Delta}^{(j+1)\Delta}{|f'(x)|dx}
\]
\end{proof}
\end{proof}
Using higher derivatives, we can get better bounds on the error.
\begin{lemma}
For any $\Delta > 0$ and any twice differentiable function $f: [0,\infty) \to \mathbb{R}$,
\[
\left|\int_{x=0}^{\infty}{f(x)dx} - \Delta\sum_{j=0}^{\infty}{f(j\Delta)} + \frac{\Delta}{2}f(0)\right| \leq {\Delta}^2\int_{x = 0}^{\infty}{|f''(x)|dx}
\]
\end{lemma}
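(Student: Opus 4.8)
The plan is to mirror the structure of the previous lemma: reduce to a single-interval estimate on $[j\Delta,(j+1)\Delta]$, then sum over $j$ and apply the triangle inequality. The first step is to recognize that $\Delta\sum_{j=0}^{\infty}f(j\Delta) - \frac{\Delta}{2}f(0)$ is exactly the trapezoidal-rule approximation to $\int_0^\infty f$. Indeed, writing
\[
\tfrac{\Delta}{2}\bigl(f(j\Delta)+f((j+1)\Delta)\bigr) = \Delta f(j\Delta) + \tfrac{\Delta}{2}\bigl(f((j+1)\Delta)-f(j\Delta)\bigr)
\]
and summing over $j\ge 0$, the telescoping term contributes $-\frac{\Delta}{2}f(0)$, using that $f(x)\to 0$ as $x\to\infty$ (which holds for the functions $f$ to which we apply the lemma; equivalently, one may read the infinite sums as limits of finite truncations). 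Hence it suffices to prove, for each $j\in\mathbb{N}\cup\{0\}$, the per-interval bound
\[
\Bigl|\int_{x=j\Delta}^{(j+1)\Delta}{f(x)\,dx} - \tfrac{\Delta}{2}\bigl(f(j\Delta)+f((j+1)\Delta)\bigr)\Bigr| \le \Delta^2\int_{x=j\Delta}^{(j+1)\Delta}{|f''(x)|\,dx},
\]
since summing this over all $j$ and using $|a+b|\le|a|+|b|$ yields exactly the claimed inequality with $\int_0^\infty|f''|$ on the right-hand side.

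To establish the per-interval bound, fix $a=j\Delta$ and $b=(j+1)\Delta$, and let $L$ be the affine interpolant of $f$ agreeing with it at $a$ and $b$; then $\int_a^b L(x)\,dx = \frac{\Delta}{2}(f(a)+f(b))$, the area of the trapezoid. Set $g = f - L$, so $g(a)=g(b)=0$ and $g''=f''$ on $[a,b]$, and the left-hand side above equals $\bigl|\int_a^b g(x)\,dx\bigr|$. By Rolle's theorem there is $c\in(a,b)$ with $g'(c)=0$, so for every $x\in[a,b]$ we get $|g'(x)| = \bigl|\int_c^x g''(t)\,dt\bigr| \le \int_a^b|f''(t)|\,dt =: M_j$, and hence $|g(x)| = \bigl|\int_a^x g'(t)\,dt\bigr| \le \Delta M_j$. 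Integrating over the interval of length $\Delta$ gives $\bigl|\int_a^b g\bigr| \le \Delta^2 M_j$, which is the required estimate.

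I do not expect a real obstacle here: the argument is elementary and in fact wasteful (the sharp trapezoidal constant is $\Delta^2/12$, but the crude constant $\Delta^2$ is all that is needed). The only point meriting a word of care is the convergence of the infinite sums together with the telescoping step; this is harmless since the functions to which we apply the lemma decay rapidly at infinity, so $f(x)\to 0$ and all the relevant integrals and series converge.
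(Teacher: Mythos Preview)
Your argument is correct and shares the same outer scaffolding as the paper's proof: reduce to the per-interval trapezoidal estimate
\[
\Bigl|\int_{j\Delta}^{(j+1)\Delta}f(x)\,dx - \tfrac{\Delta}{2}\bigl(f(j\Delta)+f((j+1)\Delta)\bigr)\Bigr|\le \Delta^{2}\int_{j\Delta}^{(j+1)\Delta}|f''(x)|\,dx,
\]
then sum over $j$ and apply the triangle inequality. The difference lies in how the per-interval bound is obtained. You subtract the affine interpolant $L$, apply Rolle's theorem to $g=f-L$ to locate a zero of $g'$, and then integrate up twice. The paper instead proves a first-order Taylor-with-remainder estimate at the left endpoint, $|f(x)-f(j\Delta)-(x-j\Delta)f'(j\Delta)|\le (x-j\Delta)\int_{j\Delta}^{(j+1)\Delta}|f''|$, and derives the trapezoidal bound by combining the two consequences of this estimate obtained by (i) evaluating at $x=(j+1)\Delta$ and (ii) integrating over the interval. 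Your Rolle-based argument is a bit slicker for the second-derivative case; the paper's Taylor-based approach is chosen because it generalizes verbatim to the $(t+1)$th-derivative bound proved immediately afterward (expand to order $t$ at the left endpoint and take a suitable linear combination), which is how the numerical-integration error is ultimately controlled.
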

\begin{proof}
This result follows from summing the following lemma over all $j \in \mathcal{N} \cup \{0\}$ and using the fact that $|a + b| \leq |a| + |b|$.
\begin{lemma}
For all $j \in \mathcal{N} \cup \{0\}$,
\[
\left|\int_{x = j\Delta}^{(j+1)\Delta}{f(x)dx} - \frac{\Delta}{2}(f(j\Delta) + f((j+1)\Delta))\right| \leq {\Delta}^2\int_{x = j\Delta}^{(j+1)\Delta}{|f''(x)|dx}
\]
\end{lemma}
\begin{proof}
We prove this lemma using the following estimate of $f(x)$ for $x \in [j\Delta,(j+1)\Delta]$
\begin{proposition}\label{firstderivativeestimateprop}
For all $x \in [j\Delta,(j+1)\Delta]$, 
\[
|f(x) - f(j\Delta) - (x - j\Delta)f'(j\Delta)| \leq (x - j\Delta)\int_{x = j\Delta}^{(j+1)\Delta}{|f''(x)|dx}
\]
\end{proposition}
\begin{proof}
Observe that for all $j \in \mathcal{N} \cup \{0\}$, for all $x \in [j\Delta,(j+1)\Delta]$, 
\[
|f'(x) - f'(j\Delta)| \leq \int_{x = j\Delta}^{(j+1)\Delta}{|f''(x)|}
\]
Taking the integral of this equation from $j\Delta$ to $x$ and using the fact that $|a + b| \leq |a| + |b|$, 
\[
|f(x) - f(j\Delta) - (x - j\Delta)f'(j\Delta)| \leq (x - j\Delta)\int_{x = j\Delta}^{(j+1)\Delta}{|f''(x)|}
\]
\end{proof}
We now make the following observations:
\begin{enumerate}
\item By Proposition \ref{firstderivativeestimateprop}, 
\[
|f(j\Delta) + {\Delta}f'(j\Delta) - f((j+1)\Delta)| = |f((j+1)\Delta) - f(j\Delta) - {\Delta}f'(j\Delta)| \leq \Delta\int_{x = j\Delta}^{(j+1)\Delta}{|f''(x)|dx}
\]
\item Taking the integral of Proposition \ref{firstderivativeestimateprop} from $j\Delta$ to $(j+1)\Delta$ and using the fact that $|a + b| \leq |a| + |b|$, 
\[
\left|\int_{x = j\Delta}^{(j+1)\Delta}{f(x)dx} - {\Delta}f(j{\Delta}) - \frac{{\Delta}^2}{2}f'(j\Delta)\right| \leq \frac{{\Delta}^2}{2}\int_{x = j\Delta}^{(j+1)\Delta}{|f''(x)|}
\]
\end{enumerate}
Adding $\frac{\Delta}{2}$ times the first equation to the second equation, we have that 
\[
\left|\int_{x = j\Delta}^{(j+1)\Delta}{f(x)dx} - \frac{\Delta}{2}(f(j\Delta) + f((j+1)\Delta))\right| \leq {\Delta}^2\int_{x = j\Delta}^{(j+1)\Delta}{|f''(x)|dx}
\]
as needed
\end{proof}
\end{proof}
We now generalize this argument to (t+1)th derivatives.
\begin{definition}
For all $t \in \mathbb{N}$, we define $M_{t}$ to be the $(t+1) \times (t+1)$ matrix with entries $(M_{t})_{ab} = (b-1)^{(a-1)}$ where $(M_{t})_{11} = 1$. Note that $M_{t}$ is a Vandermonde matrix and is thus invertible.
\end{definition}
\begin{definition}
For all $t \in \mathbb{N}$, we define $v_t$ to be the vector of length $t+1$ with entries $(v_t)_a = \frac{t^{a-1}}{a}$ and we define $c_t = M^{-1}_{t}v_t$.
\end{definition}
\begin{lemma}\label{tthderivativebound}
For all $t \in \mathbb{N}$, for any $\Delta > 0$ and any function $f: [0,\infty) \to \mathbb{R}$ which can be differentiated $t+1$ times,
\begin{align*}
&\left|\frac{1}{t}\left(\sum_{b=0}^{t-1}{\int_{x=b\Delta}^{\infty}{f(x)dx}}\right) - \Delta\sum_{j=0}^{\infty}{f(j\Delta)} + 
\Delta\sum_{j=0}^{t-1}{\left(\sum_{b=j+1}^{t}{(c_t)_{b+1}}\right)f(j\Delta)}\right| \\
&\leq  (t\Delta)^{t+1}\left(\frac{t}{(t+1)!} + \frac{\sum_{b=1}^{t}{|(c_t)_{b+1}|}}{t!}\right)\int_{x = 0}^{\infty}{|f^{(t+1)}(x)|dx}
\end{align*}
\end{lemma}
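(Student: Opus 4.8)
The plan is to read the vector $c_t = M_t^{-1}v_t$ as the weights of a quadrature rule that is exact on polynomials of degree at most $t$, and then to localize the error to a Taylor remainder on each of the overlapping ``blocks'' $B_j = [j\Delta,(j+t)\Delta]$, $j = 0,1,2,\dots$, whose $t+1$ nodes are the grid points $(j+\beta)\Delta$, $\beta = 0,\dots,t$. First I would record the exactness. The $a$-th row of $M_t c_t = v_t$, after the substitution $\beta = b-1$, reads $\sum_{\beta=0}^{t}(c_t)_{\beta+1}\beta^{a-1} = \frac{t^{a-1}}{a}$ for $a = 1,\dots,t+1$; since $\frac{1}{t}\int_0^{t}y^{\alpha}\,dy = \frac{t^{\alpha}}{\alpha+1}$, expanding an arbitrary polynomial of degree $\le t$ into monomials and rescaling $y \mapsto y/\Delta$ gives
\[
\Delta\sum_{\beta=0}^{t}(c_t)_{\beta+1}\,q(\beta\Delta) = \frac{1}{t}\int_{0}^{t\Delta}q(x)\,dx \qquad \text{for every polynomial } q \text{ with } \deg q \le t .
\]
The case $\alpha = 0$ of this identity is the normalization $\sum_{\beta=0}^{t}(c_t)_{\beta+1} = 1$, which I use repeatedly below.

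Next is the bookkeeping: I would show that the quantity inside $|\cdot|$ on the left of the lemma equals $\sum_{j\ge 0}\left(\frac{1}{t}\int_{B_j}f - \Delta\sum_{\beta=0}^{t}(c_t)_{\beta+1}f((j+\beta)\Delta)\right)$. For the integral part, $\sum_{j\ge 0}\frac{1}{t}\int_{B_j}f = \frac{1}{t}\int_0^{\infty}m(x)f(x)\,dx$, where $m(x)$ counts the blocks containing $x$; a direct count shows $m(x) = \min(\lfloor x/\Delta\rfloor + 1,\,t)$ for almost every $x$, and this is exactly the multiplicity of $x$ in $\frac{1}{t}\sum_{b=0}^{t-1}\int_{b\Delta}^{\infty}f$. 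For the discrete part, reindexing by $k = j+\beta$ and using $\sum_{\beta=0}^{t}(c_t)_{\beta+1} = 1$ collapses $\sum_{j\ge 0}\Delta\sum_{\beta}(c_t)_{\beta+1}f((j+\beta)\Delta)$ into $\Delta\sum_{k\ge 0}f(k\Delta) - \Delta\sum_{j=0}^{t-1}\left(\sum_{b=j+1}^{t}(c_t)_{b+1}\right)f(j\Delta)$, which is precisely the remaining two terms of the lemma. Throughout I assume the sums and integrals in the statement converge, which holds in the intended application (where $f$ carries an exponential decay factor).

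It then remains to bound a single block term. On $B_j$ write $f = P_j + R_j$, where $P_j$ is the degree-$t$ Taylor polynomial of $f$ at $j\Delta$ and $R_j(x) = \frac{1}{t!}\int_{j\Delta}^{x}(x-s)^{t}f^{(t+1)}(s)\,ds$. Since $\deg P_j \le t$, applying the exactness identity to $P_j(\,\cdot + j\Delta)$ makes the $P_j$-part of the block term vanish, so the $j$-th term equals $\frac{1}{t}\int_{B_j}R_j - \Delta\sum_{\beta=0}^{t}(c_t)_{\beta+1}R_j((j+\beta)\Delta)$. Using the crude bound $|R_j(x)| \le \frac{(x-j\Delta)^{t}}{t!}\int_{B_j}|f^{(t+1)}|$ for $x \in B_j$ (and $R_j(j\Delta)=0$, which kills the $\beta=0$ node) gives $\left|\frac{1}{t}\int_{B_j}R_j\right| \le \frac{(t\Delta)^{t+1}}{t(t+1)!}\int_{B_j}|f^{(t+1)}|$ and $\left|\Delta\sum_{\beta=1}^{t}(c_t)_{\beta+1}R_j((j+\beta)\Delta)\right| \le \frac{(t\Delta)^{t+1}}{t\cdot t!}\left(\sum_{b=1}^{t}|(c_t)_{b+1}|\right)\int_{B_j}|f^{(t+1)}|$. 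Finally I sum over $j$, using that every point lies in at most $t$ of the $B_j$, so $\sum_{j\ge 0}\int_{B_j}|f^{(t+1)}| \le t\int_0^{\infty}|f^{(t+1)}|$; this yields the bound $(t\Delta)^{t+1}\left(\frac{1}{(t+1)!} + \frac{1}{t!}\sum_{b=1}^{t}|(c_t)_{b+1}|\right)\int_0^{\infty}|f^{(t+1)}|$, which is at most the stated bound because $\frac{1}{(t+1)!} \le \frac{t}{(t+1)!}$.

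I expect the main obstacle to be the bookkeeping in the middle paragraph: one must check that the overlapping-block quadratures telescope \emph{exactly} onto the three terms of the statement — in particular that the boundary blocks near $x=0$ produce the correction term $\Delta\sum_{j=0}^{t-1}(\sum_{b=j+1}^{t}(c_t)_{b+1})f(j\Delta)$ with the correct sign — and that the multiplicity counts $m(x)$ and ``at most $t$ blocks'' hold up to a measure-zero set of grid points. The exactness computation (note $t=1$ already recovers the trapezoidal-rule lemma above) and the remainder estimates in the last paragraph are routine, and in fact give a constant slightly sharper than claimed.
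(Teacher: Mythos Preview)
Your proposal is correct and follows essentially the same approach as the paper: both reduce to a per-block estimate on $B_j=[j\Delta,(j+t)\Delta]$, use the Taylor remainder bound $|R_j(x)|\le \frac{(x-j\Delta)^t}{t!}\int_{B_j}|f^{(t+1)}|$ (the paper's Proposition~\ref{tthderivativeestimateprop}), exploit the defining relation $M_t c_t=v_t$ to kill the degree-$\le t$ Taylor polynomial, and then sum over $j$ with the overlap factor $t$. Your packaging of the cancellation as ``quadrature exactness on polynomials of degree $\le t$'' is exactly the content of the paper's computation that $\frac{t^{a}}{(a+1)!}-\sum_{b}\frac{b^{a}(c_t)_{b+1}}{a!}=0$ for $a\in[t]$; and your explicit bookkeeping (the multiplicity count $m(x)=\min(\lfloor x/\Delta\rfloor+1,t)$ and the reindexing $k=j+\beta$ together with $\sum_{\beta}(c_t)_{\beta+1}=1$) fills in what the paper summarizes as ``follows from summing the following lemma over all $j$''. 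Your observation that the resulting constant is $\frac{1}{(t+1)!}$ rather than $\frac{t}{(t+1)!}$ is also correct; the paper's per-block bound already carries a $\frac{1}{t}$ on the integral term, so the extra factor $t$ in the stated lemma is slack.
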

\begin{proof}
This result follows from summing the following lemma over all $j \in \mathcal{N} \cup \{0\}$:
\begin{lemma}
For all $j \in \mathcal{N} \cup \{0\}$,
\begin{align*}
&\left|\frac{1}{t}\int_{x = j\Delta}^{(j+t)\Delta}{f(x)dx} - \Delta\sum_{b=0}^{t}{(c_t)_{b+1}f((j+b)\Delta)}\right| \\
&\leq (t\Delta)^{t+1}\left(\frac{1}{(t+1)!} + \frac{\sum_{b=1}^{t}{|(c_t)_{b+1}|}}{t(t!)}\right)\int_{x = j\Delta}^{(j+t)\Delta}{|f^{(t+1)}(x)|dx}
\end{align*}
\end{lemma}
\begin{proof}
We prove this lemma using the following estimate of $f(x)$ for $x \in [j\Delta,(j+t)\Delta]$
\begin{proposition}\label{tthderivativeestimateprop}
For all $x \in [j\Delta,(j+t)\Delta]$, 
\[
\left|f(x) - \sum_{a=0}^{t}{\frac{(x-j{\Delta})^{a}}{a!}}f^{(a)}(j{\Delta})\right| \leq \frac{(x - j\Delta)^{t}}{t!}\int_{x = j\Delta}^{(j+t)\Delta}{|f^{(t+1)}(x)|dx}
\]
\end{proposition}
\begin{proof}
Observe that for all $j \in \mathcal{N} \cup \{0\}$, for all $x \in [j\Delta,(j+t)\Delta]$, 
\[
|f^{(t)}(x) - f^{(t)}(j\Delta)| \leq \int_{x = j\Delta}^{(j+t)\Delta}{|f^{(t+1)}(x)|}
\]
Taking the integral of this equation from $j\Delta$ to $x$ $t$ times and using the fact that $|a + b| \leq |a| + |b|$, 
\[
\left|f(x) - \sum_{a=0}^{t}{\frac{(x-j{\Delta})^{a}}{a!}}f^{(a)}(j{\Delta})\right| \leq \frac{(x - j\Delta)^{t}}{t!}\int_{x = j\Delta}^{(j+t)\Delta}{|f^{(t+1)}(x)|dx}
\]
\end{proof}
We now make the following observations:
\begin{enumerate}
\item By Proposition \ref{tthderivativeestimateprop}, for all $b \in [t]$,
\[
\left|\sum_{a=0}^{t}{\frac{(b{\Delta})^{a}}{a!}}f^{(a)}(j{\Delta}) - f(j{\Delta} + b)\right| \leq \frac{(b\Delta)^{t}}{t!}\int_{x = j\Delta}^{(j+t)\Delta}{|f^{(t+1)}(x)|}
\]
\item Taking the integral of Proposition \ref{tthderivativeestimateprop} from $j\Delta$ to $(j+t)\Delta$ and using the fact that $|a + b| \leq |a| + |b|$, 
\[
\left|\int_{x = j\Delta}^{(j+t)\Delta}{f(x)dx} - \sum_{a=0}^{t}{\frac{(t{\Delta})^{a+1}}{(a+1)!}}f^{(a)}(j{\Delta})\right| \leq \frac{(t\Delta)^{t+1}}{(t+1)!}\int_{x = j\Delta}^{(j+t)\Delta}{|f^{(t+1)}(x)|dx}
\]
\end{enumerate}
Adding $(c_t)_{b+1}$ times the first equation for each $b \in [t]$ to $\frac{1}{t}$ times the second equation, we have that 
\begin{align*}
&\frac{1}{t}\int_{x = j\Delta}^{(j+t)\Delta}{f(x)dx} - \sum_{a=0}^{t}{{\Delta}^{a+1}\left(\frac{t^{a}}{(a+1)!} - \sum_{b=1}^{t}{\frac{b^{a}(c_t)_{b+1}}{a!}}\right)f^{(a)}(j{\Delta})} 
-\sum_{b=1}^{t}{\Delta(c_t)_{b+1}f(j{\Delta} + b)} \\
&\leq \frac{(t\Delta)^{t+1}}{(t+1)!}\int_{x = j\Delta}^{(j+t)\Delta}{|f^{(t+1)}(x)|dx} + \sum_{b=1}^{t}{\frac{{\Delta}|(c_t)_{b+1}|(b\Delta)^{t}}{t!}}\int_{x = j\Delta}^{(j+t)\Delta}{|f^{(t+1)}(x)|dx} \\
&\leq (t\Delta)^{t+1}\left(\frac{1}{(t+1)!} + \frac{\sum_{b=1}^{t}{|(c_t)_{b+1}|}}{t(t!)}\right)\int_{x = j\Delta}^{(j+t)\Delta}{|f^{(t+1)}(x)|dx}
\end{align*}
Thus, it is sufficient to show the following:
\begin{enumerate}
\item If $a = 0$ then $\frac{t^{a}}{(a+1)!} - \sum_{b=1}^{t}{\frac{b^{a}(c_t)_{b+1}}{a!}} = (c_t)_1$.
\item If $a \in [t]$ then $\frac{t^{a}}{(a+1)!} - \sum_{b=1}^{t}{\frac{b^{a}(c_t)_{b+1}}{a!}} = 0$
\end{enumerate}
To see these statements, observe that 
\[
\sum_{b=0}^{t}{b^{a}(c_t)_{b+1}} = \sum_{b=1}^{t+1}{(M_t)_{(a+1),b}(c_t)_{b}} = (v_t)_{a+1} = \frac{t^{a}}{a+1}
\]
Thus, for all $a \in [t] \cup \{0\}$
\[
\frac{t^{a}}{(a+1)!} - \sum_{b=0}^{t}{\frac{b^{a}(c_t)_{b+1}}{a!}} = 0
\]
When $a = 0$, $\sum_{b=1}^{t}{\frac{b^{a}(c_t)_{b+1}}{a!}} = \sum_{b=0}^{t}{\frac{b^{a}(c_t)_{b+1}}{a!}} - (c_t)_1$ so $\frac{t^{a}}{(a+1)!} - \sum_{b=1}^{t}{\frac{b^{a}(c_t)_{b+1}}{a!}} = (c_t)_1$. \\
When $a > 0$, $\sum_{b=1}^{t}{\frac{b^{a}(c_t)_{b+1}}{a!}} = \sum_{b=0}^{t}{\frac{b^{a}(c_t)_{b+1}}{a!}}$ so $\frac{t^{a}}{(a+1)!} - \sum_{b=1}^{t}{\frac{b^{a}(c_t)_{b+1}}{a!}} = 0$
\end{proof}
\end{proof}
\subsection{Bounds on $h_k$}
In order to use our tools, we need bounds on the integrals of the functions $h_k$.
\begin{lemma}\label{magnitudeintegralbound}
For all $j,j' \in \mathbb{N} \cup \{0\}$, $\int_{x = 0}^{\infty}{|h_j(x)h_{j'}(x)|xe^{-x}dx} \leq 1$
\end{lemma}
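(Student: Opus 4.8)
The plan is to obtain this bound as a direct consequence of the Cauchy--Schwarz inequality applied to the inner product $f \cdot g = \int_{x=0}^{\infty}{f(x)g(x)xe^{-x}dx}$ that was introduced above. Concretely, for any $j,j' \in \mathbb{N} \cup \{0\}$,
\[
\int_{x = 0}^{\infty}{|h_j(x)h_{j'}(x)|xe^{-x}dx} \leq \left(\int_{x = 0}^{\infty}{h_j(x)^2xe^{-x}dx}\right)^{\frac{1}{2}}\left(\int_{x = 0}^{\infty}{h_{j'}(x)^2xe^{-x}dx}\right)^{\frac{1}{2}} = \sqrt{h_j \cdot h_j}\sqrt{h_{j'} \cdot h_{j'}}.
\]
By the defining property of the orthonormal basis $\{h_k\}$ for $\mu(x) = xe^{-x}$, we have $h_j \cdot h_j = h_{j'} \cdot h_{j'} = 1$, so the right-hand side equals $1$, which is exactly the claimed inequality.

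The only point that needs a word of justification is that all of these integrals are finite, so that Cauchy--Schwarz genuinely applies: each $h_k$ is a polynomial, and $|h_k(x)|^2 xe^{-x}$ and $|h_j(x)h_{j'}(x)|xe^{-x}$ are products of polynomials with $xe^{-x}$, which decays faster than any polynomial grows, so the integrals converge absolutely. I do not expect any real obstacle here; this lemma is essentially a normalization bookkeeping statement, and the substance is entirely contained in the already-established fact that the $h_k$ have unit norm. (One could alternatively bound $|h_jh_{j'}| \le \frac{1}{2}(h_j^2 + h_{j'}^2)$ and integrate termwise, again using $h_k\cdot h_k = 1$, but the Cauchy--Schwarz route is cleaner.)
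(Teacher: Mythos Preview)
Your proof is correct and essentially the same as the paper's: both reduce to the fact that $h_j\cdot h_j = h_{j'}\cdot h_{j'} = 1$, with the paper using the AM--GM bound $|h_jh_{j'}|\le\tfrac{1}{2}(h_j^2+h_{j'}^2)$ (the very alternative you mention) rather than Cauchy--Schwarz.
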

\begin{proof}
Observe that 
\[
\int_{x = 0}^{\infty}{|h_j(x)h_{j'}(x)|xe^{-x}dx} \leq \int_{x = 0}^{\infty}{\frac{h^2_j(x) + h^2_{j'}(x)}{2}xe^{-x}dx} = 1
\]
\end{proof}
\begin{lemma}\label{noxmagnitudeintegralboundlemma}
For all $j \in \mathbb{N} \cup \{0\}$, $\int_{x = 0}^{\infty}{h^2_j(x)e^{-x}dx} \leq j + 8$
\end{lemma}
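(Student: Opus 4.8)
I would prove the stronger statement that $\int_{x=0}^{\infty} h_j^2(x) e^{-x}\,dx = 1$ for every $j \in \mathbb{N}\cup\{0\}$; since $1 \le j+8$, this is more than enough. The idea is to trade the weight $e^{-x}$, for which we have no orthogonal basis at hand, for the weight $xe^{-x}$, for which $\{h_k\}$ is the orthonormal basis, at the cost of a single boundary contribution at $x = 0$.

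The key preliminary observation is that for any polynomial $f$ the quotient $\frac{f(x)-f(0)}{x}$ is again a polynomial, so (using $\int_{0}^{\infty} e^{-x}dx = 1$)
\[
\int_{x=0}^{\infty} f(x)\, e^{-x}\,dx \;=\; \left(\frac{f - f(0)}{x}\right)\cdot 1 \;+\; f(0),
\]
where $p \cdot q = \int_{0}^{\infty} p(x)q(x)\, x e^{-x}\,dx$ is the inner product with respect to which the $h_k$ are orthonormal. Apply this with $f = h_j^2$. Factoring $h_j(x)^2 - h_j(0)^2 = (h_j(x) - h_j(0))(h_j(x) + h_j(0))$ and setting $p_j(x) = \frac{h_j(x) - h_j(0)}{x}$ — a polynomial of degree $j-1$ — we get $\frac{h_j^2 - h_j(0)^2}{x} = p_j h_j + h_j(0) p_j$, hence
\[
\int_{x=0}^{\infty} h_j^2(x) e^{-x}\,dx \;=\; h_j(0)^2 + (p_j \cdot h_j) + h_j(0)\,(p_j \cdot 1).
\]
Because $\deg p_j < j$, orthogonality of $h_j$ to all lower-degree polynomials kills the cross term: $p_j \cdot h_j = 0$. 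For the remaining term, $p_j \cdot 1 = \int_0^\infty p_j(x)\, x e^{-x}\,dx = \int_0^\infty (h_j(x) - h_j(0))\, e^{-x}\,dx$, and applying the boundary identity once more (now to $f = h_j$) rewrites this as $\int_0^\infty h_j(x)\, e^{-x}\,dx - h_j(0)$. Substituting back, the $h_j(0)^2$ terms cancel and everything collapses to
\[
\int_{x=0}^{\infty} h_j^2(x)\, e^{-x}\,dx \;=\; h_j(0) \int_{x=0}^{\infty} h_j(x)\, e^{-x}\,dx .
\]

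Finally I would evaluate the two scalars directly from the closed form for $h_j$. Setting $x = 0$ leaves only the $i=0$ term, giving $h_j(0) = (-1)^j \sqrt{j+1}$. Integrating the closed form term by term against $e^{-x}$ with $\int_0^\infty x^i e^{-x}dx = i!$, using $\frac{i!}{(i+1)!} = \frac{1}{i+1}$ and $\binom{j}{i}\frac{1}{i+1} = \frac{1}{j+1}\binom{j+1}{i+1}$, the sum reduces (after reindexing $m = i+1$) to an evaluation of $\sum_{m=1}^{j+1}(-1)^{j+1-m}\binom{j+1}{m}$, which equals $(-1)^j$ since $\sum_{m=0}^{j+1}(-1)^m\binom{j+1}{m} = 0$; this yields $\int_0^\infty h_j(x)\, e^{-x}\,dx = \frac{(-1)^j}{\sqrt{j+1}}$. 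Multiplying the two factors gives $\int_0^\infty h_j^2(x) e^{-x}dx = 1 \le j+8$. (As a sanity check, $h_j$ is a normalized associated Laguerre polynomial $L_j^{(1)}$, and the identity $L_j^{(1)} = \sum_{k=0}^{j} L_k$ together with orthonormality of the ordinary Laguerre polynomials $L_k$ for the weight $e^{-x}$ gives the same value $1$.) The only mildly delicate point is the sign and normalization bookkeeping in this last step; the substance of the argument is the exchange of $e^{-x}$ for $xe^{-x}$ and the vanishing of the cross term by orthogonality.
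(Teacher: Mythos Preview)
Your argument is correct, and in fact proves the sharp identity $\int_0^\infty h_j^2(x)\,e^{-x}\,dx = 1$, which is strictly stronger than the paper's bound $j+8$. The key steps --- rewriting $\int f\,e^{-x}$ as $\bigl(\tfrac{f-f(0)}{x}\bigr)\cdot 1 + f(0)$, killing the cross term $p_j\cdot h_j$ by orthogonality since $\deg p_j<j$, and then evaluating $h_j(0)=(-1)^j\sqrt{j+1}$ and $\int_0^\infty h_j\,e^{-x}=(-1)^j/\sqrt{j+1}$ from the explicit formula --- all check out (and the sanity check via $L_j^{(1)}=\sum_{k\le j}L_k$ is a nice independent confirmation).

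This is a genuinely different route from the paper. The paper does not compute the integral at all: it splits at $x=1/j$, bounds the short piece $\int_0^{1/j}$ using the crude pointwise estimate $|h_j(x)|\le\sqrt{j+1}\,e^{j|x|}$, and on $[1/j,\infty)$ uses $1\le jx$ to compare against the known normalization $\int_0^\infty h_j^2\,x e^{-x}=1$, obtaining only $\le j+O(1)$. Your approach is cleaner and extracts the exact value by exploiting the orthogonal-polynomial structure more fully; the paper's approach is cruder but more robust, in that it would transfer verbatim to any family for which one has a near-zero pointwise bound and the weighted normalization, without needing a closed form or a Laguerre-type identity. For the downstream use in the paper (a triangle-inequality estimate where any $O(j)$ bound suffices), either argument is adequate.
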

\begin{proof}
The cases where $j = 0$ and $j = 1$ can be computed directly. For $j > 1$, observe that by Lemma \ref{nearzerobound},
\begin{align*}
\int_{x = 0}^{\infty}{h^2_j(x)e^{-x}dx} &= \int_{x = 0}^{\frac{1}{j}}{h^2_j(x)e^{-x}dx} + \int_{x = \frac{1}{j}}^{\infty}{h^2_j(x)e^{-x}dx} \\
&\leq \int_{x = 0}^{\frac{1}{j}}{(j+1)e^{2jx}e^{-x}dx} + j\int_{x = \frac{1}{j}}^{\infty}{h^2_j(x)xe^{-x}dx}\\
&\leq \frac{(j+1)}{2j-1}(e^2-1) + j \leq j + 8
\end{align*}
\end{proof}
\begin{corollary}\label{noxmagnitudeintegralboundcorollary}
For all $j,j' \in \mathbb{N} \cup \{0\}$, $\int_{x = 0}^{\infty}{|h_j(x)h_{j'}(x)|e^{-x}dx} \leq \sqrt{(j+8)(j'+8)}$
\end{corollary}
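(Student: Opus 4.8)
The plan is to deduce this corollary from Lemma \ref{noxmagnitudeintegralboundlemma} by a single application of the Cauchy--Schwarz inequality with respect to the measure $e^{-x}\,dx$ on $[0,\infty)$. Specifically, I would write $|h_j(x)h_{j'}(x)|e^{-x} = \left(|h_j(x)|e^{-x/2}\right)\left(|h_{j'}(x)|e^{-x/2}\right)$ and integrate, obtaining
\[
\int_{x=0}^{\infty}{|h_j(x)h_{j'}(x)|e^{-x}dx} \leq \sqrt{\int_{x=0}^{\infty}{h^2_j(x)e^{-x}dx}}\sqrt{\int_{x=0}^{\infty}{h^2_{j'}(x)e^{-x}dx}}.
\]

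Then I would invoke Lemma \ref{noxmagnitudeintegralboundlemma} twice, which gives $\int_{x=0}^{\infty}{h^2_j(x)e^{-x}dx} \leq j+8$ and $\int_{x=0}^{\infty}{h^2_{j'}(x)e^{-x}dx} \leq j'+8$. Substituting these bounds into the displayed inequality yields $\int_{x=0}^{\infty}{|h_j(x)h_{j'}(x)|e^{-x}dx} \leq \sqrt{(j+8)(j'+8)}$, which is exactly the claim. (One could alternatively use the AM--GM bound $|h_j h_{j'}| \leq \tfrac{1}{2}(h_j^2 + h_{j'}^2)$ as in Lemma \ref{magnitudeintegralbound}, but that only gives the weaker bound $\tfrac{1}{2}((j+8)+(j'+8))$; since $\sqrt{ab} \le \tfrac12(a+b)$, the Cauchy--Schwarz route is the one that matches the stated inequality, so I would use it.)

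There is no real obstacle here: the only thing to be careful about is that all the integrals involved converge, which is immediate since $h_j$ is a polynomial and $e^{-x/2}$ decays exponentially, and that the weight has been split symmetrically so that Cauchy--Schwarz applies cleanly. The content of the corollary is entirely carried by Lemma \ref{noxmagnitudeintegralboundlemma}, which was proved using the pointwise bound near zero from Lemma \ref{nearzerobound} together with the orthonormality $h_j \cdot h_j = 1$; this corollary is just the "off-diagonal" version of that estimate.
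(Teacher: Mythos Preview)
Your proof is correct and is essentially the same as the paper's. The only cosmetic difference is that the paper writes the key step as a weighted AM--GM,
\[
|h_j h_{j'}| \le \tfrac{1}{2}\Big(\sqrt{\tfrac{j'+8}{j+8}}\,h_j^2 + \sqrt{\tfrac{j+8}{j'+8}}\,h_{j'}^2\Big),
\]
and then integrates and applies Lemma~\ref{noxmagnitudeintegralboundlemma}; with that choice of weight this is exactly the Cauchy--Schwarz inequality you invoke, so the two arguments coincide.
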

\begin{proof}
Observe that by Lemma \ref{noxmagnitudeintegralboundlemma},
\[
\int_{x = 0}^{\infty}{|h_j(x)h_{j'}(x)|e^{-x}dx} \leq \int_{x = 0}^{\infty}{\frac{\sqrt{\frac{j'+8}{j+8}}h^{2}_{j}(x) + \sqrt{\frac{j+8}{j'+8}}h^{2}_{j'}(x)}{2}e^{-x}dx} \leq \sqrt{(j+8)(j'+8)}
\]
\end{proof}
\subsection{Derivative of $h_k$}
We also need to analyze what happens when we take the derivative of $h_k$. Calculating directly, the first few derivatives are:
\begin{enumerate}
\item $\frac{d(1)}{dx} = 0$
\item $\frac{d(x-2)}{dx} = 1$
\item $\frac{d(x^2 - 6x + 6)}{dx} = 2x - 6 = 2(x-2) - 2$
\item $\frac{d(x^3 - 12x^2 + 36x - 24)}{dx} = 3x^2 - 24x + 36 = 3(x^2 - 6x + 6) - 6(x-2) + 6$
\item \begin{align*}
&\frac{d(x^4 - 20x^3 + 120x^2 - 240x + 120)}{dx} = 4x^3 - 60x^2 + 240x - 240 \\
&= 4(x^3 - 12x^2 + 36x - 24) - 12(x^2 - 6x + 6) + 24(x-2) - 24
\end{align*}
\end{enumerate}
The general pattern is as follows:
\begin{lemma}\label{hkderivativelemma}
$h'_k(x) = \sum_{k'=0}^{k-1}{(-1)^{k-1-k'}\frac{k!}{k'!}\frac{\sqrt{k'!(k'+1)!}}{\sqrt{k!(k+1)!}}h_j(x)}$
\end{lemma}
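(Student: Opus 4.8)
\medskip

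\noindent\emph{Proof proposal.} The plan is to use the defining orthonormality of the $h_k$ rather than grinding through the explicit coefficient formula. Since $h'_k$ has degree $k-1$ and $\{h_{k'}\}$ is an orthonormal basis for the inner product $f \cdot g = \int_{x=0}^{\infty}{f(x)g(x)xe^{-x}\,dx}$, we may write $h'_k = \sum_{k'=0}^{k-1}{c_{k,k'}h_{k'}}$ with $c_{k,k'} = h'_k \cdot h_{k'}$; the sum stops at $k-1$ automatically because $h_k$ is orthogonal to every polynomial of degree less than $k$ with respect to this inner product.

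The key step is to evaluate $c_{k,k'} = \int_{x=0}^{\infty}{h'_k(x)h_{k'}(x)xe^{-x}\,dx}$ by integration by parts. Writing $h_{k'}(x)xe^{-x} = q(x)e^{-x}$ with $q(x) = xh_{k'}(x)$ (so $q(0) = 0$ and $\deg q = k'+1 \le k$), the boundary terms vanish since $xe^{-x} \to 0$ at both ends, and $c_{k,k'} = -\int_{x=0}^{\infty}{h_k(x)\bigl(q'(x) - q(x)\bigr)e^{-x}\,dx}$. Now split $q'(x) - q(x) = \bigl(q'(0) - q(0)\bigr) + x\,s(x)$ where $\deg s \le k' \le k-1$; the $x\,s(x)$ piece contributes $0$ against $h_k$ by orthogonality with respect to $xe^{-x}$, so (using $q'(0) = h_{k'}(0)$) we get $c_{k,k'} = -h_{k'}(0)\int_{x=0}^{\infty}{h_k(x)e^{-x}\,dx}$.

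It then remains to compute the two one-dimensional quantities. From the explicit formula for $h_{k'}$ one reads off $h_{k'}(0) = (-1)^{k'}\sqrt{k'+1}$, and using $\int_{x=0}^{\infty}{x^j e^{-x}\,dx} = j!$ together with the standard identity $\sum_{j=0}^{k}{(-1)^j\binom{k}{j}\frac{1}{j+1}} = \frac{1}{k+1}$ (equivalently $\int_{0}^{1}{(1-x)^k\,dx}$) one gets $\int_{x=0}^{\infty}{h_k(x)e^{-x}\,dx} = \frac{(-1)^k}{\sqrt{k+1}}$. Multiplying gives $c_{k,k'} = (-1)^{k-1-k'}\sqrt{\frac{k'+1}{k+1}}$, and a short simplification shows $\sqrt{\frac{k'+1}{k+1}} = \frac{k!}{k'!}\cdot\frac{\sqrt{k'!(k'+1)!}}{\sqrt{k!(k+1)!}}$, which is the claimed coefficient.

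I expect the only mildly delicate points to be bookkeeping the signs and confirming the degree bookkeeping in the integration-by-parts step (that $\deg s \le k-1$, which is exactly why the lower-degree terms die). As an alternative route that avoids integration by parts entirely, one can differentiate the explicit formula for $h_k$ term by term and verify the resulting identity for the coefficient of each $x^j$ using $(k'+1)\binom{k'}{j} = (j+1)\binom{k'+1}{j+1}$ and the hockey-stick identity $\sum_{m=j+1}^{k}{\binom{m}{j+1}} = \binom{k+1}{j+2}$; this is more computational but entirely mechanical.
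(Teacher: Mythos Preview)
Your argument is correct and takes a genuinely different route from the paper. The paper proceeds by direct computation: it differentiates the explicit coefficient formula for $\sqrt{k!(k+1)!}\,h_k$ term by term and then matches the result against the claimed linear combination of the $\sqrt{k'!(k'+1)!}\,h_{k'}$ using the hockey-stick identity $\sum_{k'=j}^{k-1}\binom{k'+1}{j+1} = \binom{k+1}{j+2}$ together with the relation $(k'+1)\binom{k'}{j} = (j+1)\binom{k'+1}{j+1}$. This is exactly the ``alternative route'' you sketch in your last paragraph. Your primary argument instead exploits the orthonormality structure: you expand $h'_k$ in the basis and reduce each coefficient $c_{k,k'}$ via a single integration by parts to the product $-h_{k'}(0)\int_0^\infty h_k(x)e^{-x}\,dx$, where both factors are scalar quantities that drop out of the explicit formula with almost no work. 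This is cleaner and more conceptual; in particular it explains \emph{why} the coefficient depends on $k'$ only through the simple factor $\sqrt{k'+1}$, something the paper's combinatorial manipulation obscures. The paper's approach, on the other hand, is entirely self-contained and avoids any analytic step (boundary terms, integration by parts), which can be reassuring if one is wary of sign or convergence bookkeeping.
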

\begin{proof}
To prove this lemma, we need to show that the derivative of 
\[
\sqrt{k!(k+1)!}h_k = \sum_{j=0}^{k}{(-1)^{k-j}\binom{k}{j}\frac{(k+1)!}{(j+1)!}x^{j}}
\] 
is 
\[
\sum_{k' = 0}^{k-1}{(-1)^{k - 1 -k'}\frac{k!}{k'!}{\left(\sum_{j=0}^{k'}{(-1)^{k'-j}\binom{k'}{j}\frac{(k'+1)!}{(j+1)!}x^{j}}\right)}} = 
\sum_{k' = 0}^{k-1}{(-1)^{k - 1 -k'}\frac{k!}{k'!}\left(\sqrt{k'!(k'+1)!}h_{k'}\right)}
\]
To prove this, we use the following proposition.
\begin{proposition}
For all $n$ and all $j$, 
\[
\sum_{k=j}^{n-1}{\binom{k}{j}} = \binom{n}{j+1}
\]
\end{proposition}
\begin{proof}
Observe that choosing $j+1$ objects out of $n$ objects is equivalent to choosing the position $k+1$ of the last object and then choosing the remaining $j$ objects from the first $k$ objects.
\end{proof}
With this proposition in hand, we observe that
\begin{align*}
&\sum_{k' = 0}^{k-1}{(-1)^{k - 1 - k'}\frac{k!}{k'!}{\left(\sum_{j=0}^{k'}{(-1)^{k'-j}\binom{k'}{j}\frac{(k'+1)!}{(j+1)!}x^{j}}\right)}} \\
&= \sum_{j=0}^{k-1}{(-1)^{k - 1 - j}\frac{k!}{(j+1)!}\left(\sum_{k' = j}^{k-1}{(k'+1)\binom{k'}{j}}\right)x^{j}} \\
&= \sum_{j=0}^{k-1}{(-1)^{k - 1 - j}\frac{k!}{j!}\left(\sum_{k' = j}^{k-1}{\binom{k'+1}{j+1}}\right)x^{j}} \\
&= \sum_{j=0}^{k-1}{(-1)^{k - 1 - j}\frac{k!}{j!}\binom{k+1}{j+2}x^{j}} \\
&= \sum_{j=1}^{k}{(-1)^{k - j}\frac{k!}{j!}\binom{k+1}{j+1}(jx^{j-1})} = 
\sum_{j=1}^{k}{(-1)^{k - j}\binom{k}{j}\frac{(k+1)!}{(j+1)!}(jx^{j-1})}\\
&= \frac{d\left(\sum_{j=0}^{k}{(-1)^{k-j}\binom{k}{j}\frac{(k+1)!}{(j+1)!}x^{j}}\right)}{dx}
\end{align*}
\end{proof}
\subsection{Bounding the Numerical Integration Error}
In this subsection, we use our tools to bound the numerical integration error 
\[
\Delta\sum_{j = 0}^{\infty}{(j\Delta)e^{-j\Delta}g^2_2(j\Delta)} - \int_{x = 0}^{\infty}{g^2_2(x)xe^{-x}dx}
\]
\begin{theorem}\label{numericalintegrationerrortheorem}
For all $t \in \mathbb{N}$, there exist constants $C_{t1},C_{t2} > 0$ such that for all $\Delta > 0$, $d \in \mathbb{N}$, and polynomials $g_2$ of degree at most $d$,
\begin{align*}
&\left|\Delta\sum_{j = 0}^{\infty}{(j\Delta)e^{-j\Delta}g^2_2(j\Delta)} - \int_{x = 0}^{\infty}{g^2_2(x)xe^{-x}dx}\right| \\
&\leq \left(C_{t1}(d\Delta)^{2}e^{2td\Delta} + C_{t2}d(d\Delta)^{t+1}\right)\int_{x = 0}^{\infty}{g^2_2(x)xe^{-x}dx}
\end{align*}
\end{theorem}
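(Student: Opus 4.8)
The plan is to apply Lemma~\ref{tthderivativebound} to the function $f(x) = x\,g_2^2(x)\,e^{-x}$, which can be differentiated arbitrarily often on $[0,\infty)$ since it is a polynomial times $e^{-x}$. Because $f(j\Delta) = (j\Delta)e^{-j\Delta}g_2^2(j\Delta)$ and $\int_0^\infty f(x)\,dx = \int_0^\infty g_2^2(x)xe^{-x}\,dx$, while $\frac1t\sum_{b=0}^{t-1}\int_{b\Delta}^\infty f(x)\,dx = \int_0^\infty f(x)\,dx - \frac1t\sum_{b=1}^{t-1}\int_0^{b\Delta}f(x)\,dx$, rearranging the inequality of Lemma~\ref{tthderivativebound} bounds $\bigl|\Delta\sum_{j\ge 0}(j\Delta)e^{-j\Delta}g_2^2(j\Delta) - \int_0^\infty g_2^2(x)xe^{-x}\,dx\bigr|$ by the sum of a main error term $C_t(t\Delta)^{t+1}\int_0^\infty|f^{(t+1)}(x)|\,dx$ and finitely many (i.e.\ $O_t(1)$) boundary terms of the form $\frac1t\bigl|\int_0^{b\Delta}f\bigr|$ and $\Delta\,\bigl|\sum_{b=j+1}^{t}(c_t)_{b+1}\bigr|\,|f(j\Delta)|$ with $b,j\le t-1$, where $c_t$ and $C_t$ depend only on $t$.

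For the boundary terms I would first record that $|g_2(x)|\le (d+1)e^{d|x|}\sqrt{\int_0^\infty g_2^2 x e^{-x}\,dx}$ for every $x$: writing $g_2=\sum_{k=0}^d a_k h_k$, this follows from Lemma~\ref{nearzerobound}, Cauchy--Schwarz, and the identity $\int_0^\infty g_2^2 xe^{-x}\,dx = \sum_k a_k^2$. Hence for $x\in[0,t\Delta]$ we have $f(x)=xg_2^2(x)e^{-x}\le t\Delta(d+1)^2 e^{2td\Delta}\int_0^\infty g_2^2 xe^{-x}\,dx$; since each boundary term carries at least one further factor of $\Delta$ and there are only $O_t(1)$ of them, their total is at most $C_{t1}(d\Delta)^2 e^{2td\Delta}\int_0^\infty g_2^2 xe^{-x}\,dx$, the first term in the theorem.

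The heart of the argument is to show $\int_0^\infty|f^{(t+1)}(x)|\,dx \le C_t\,d^{\,t+2}\int_0^\infty g_2^2 xe^{-x}\,dx$; multiplied by the $(t\Delta)^{t+1}$ prefactor this yields exactly $C_{t2}\,d(d\Delta)^{t+1}\int_0^\infty g_2^2 xe^{-x}\,dx$. Writing $f = q\,e^{-x}$ with $q = xg_2^2$, the Leibniz rule gives $|f^{(t+1)}|\le e^{-x}\sum_{i=0}^{t+1}\binom{t+1}{i}|q^{(i)}|$ with $q^{(i)} = x(g_2^2)^{(i)} + i(g_2^2)^{(i-1)}$, so it suffices to bound $\int_0^\infty x|(g_2^2)^{(i)}|e^{-x}\,dx$ and $\int_0^\infty|(g_2^2)^{(i)}|e^{-x}\,dx$ for $i\le t+1$. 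Expanding $(g_2^2)^{(i)} = \sum_b\binom{i}{b}g_2^{(b)}g_2^{(i-b)}$ and using Cauchy--Schwarz reduces everything to bounding $\int_0^\infty(g_2^{(b)})^2 xe^{-x}\,dx$ and $\int_0^\infty(g_2^{(b)})^2 e^{-x}\,dx$ for $b\le t+1$. The key structural point is that iterating Lemma~\ref{hkderivativelemma} gives $h_k^{(b)} = \sum_{j\le k-b}\lambda_{k,j}^{(b)}h_j$ with $|\lambda_{k,j}^{(b)}|\le\binom{k-1-j}{b-1}\le d^{\,b-1}$ (for $b\ge 1$): along any chain $k>k_1>\dots>k_{b-1}>j$ the product of the coefficients appearing in Lemma~\ref{hkderivativelemma} telescopes to $\sqrt{(j+1)/(k+1)}\le 1$, and there are $\binom{k-1-j}{b-1}$ such chains. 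Writing $g_2^{(b)} = \sum_j c_j^{(b)}h_j$, Cauchy--Schwarz then gives $\sum_j(c_j^{(b)})^2\le C\,d^{\,2b}\sum_k a_k^2$; orthonormality of the $h_j$ for $\mu(x)=xe^{-x}$ gives $\int_0^\infty(g_2^{(b)})^2 xe^{-x}\,dx = \sum_j(c_j^{(b)})^2$, and Corollary~\ref{noxmagnitudeintegralboundcorollary} with Cauchy--Schwarz gives $\int_0^\infty(g_2^{(b)})^2 e^{-x}\,dx\le\bigl(\sum_j|c_j^{(b)}|\sqrt{j+8}\bigr)^2\le C\,d^2\sum_j(c_j^{(b)})^2\le C\,d^{\,2b+2}\sum_k a_k^2$. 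Feeding these back through the Cauchy--Schwarz splittings and summing the $O_t(1)$ resulting terms yields $\int_0^\infty x|(g_2^2)^{(i)}|e^{-x}\,dx\le C_t\,d^{\,i}\sum_k a_k^2$ and $\int_0^\infty|(g_2^2)^{(i)}|e^{-x}\,dx\le C_t\,d^{\,i+2}\sum_k a_k^2$, hence $\int_0^\infty|f^{(t+1)}|\,dx\le C_t\,d^{\,t+2}\sum_k a_k^2 = C_t\,d^{\,t+2}\int_0^\infty g_2^2 xe^{-x}\,dx$.

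The main obstacle is this last step: one must be careful at every Cauchy--Schwarz split not to lose more than a fixed polynomial-in-$d$ factor, and it is precisely the telescoping-product observation for the iterated derivatives of $h_k$ --- which keeps $|\lambda_{k,j}^{(b)}|$ as small as $d^{\,b-1}$ rather than exponential in $d$ --- together with the sharper $e^{-x}$-weighted bound of Corollary~\ref{noxmagnitudeintegralboundcorollary} that makes the bookkeeping close with exactly the power $d^{\,t+2}$ needed to match $C_{t2}d(d\Delta)^{t+1}$. Everything else reduces to the lemmas already established in this section.
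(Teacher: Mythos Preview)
Your proposal is correct and follows essentially the same route as the paper: apply Lemma~\ref{tthderivativebound} to $f(x)=xg_2^2(x)e^{-x}$, control the boundary terms via Lemma~\ref{nearzerobound}, and bound $\int_0^\infty|f^{(t+1)}|\,dx$ using Lemma~\ref{hkderivativelemma} together with Lemma~\ref{magnitudeintegralbound} and Corollary~\ref{noxmagnitudeintegralboundcorollary}. The only notable difference is bookkeeping: the paper tracks $\ell_1$ norms of the $h_k$-coefficients directly (observing from Lemma~\ref{hkderivativelemma} that one differentiation sends $\sum_i|b_i|$ to at most $d\sum_i|b_i|$, then iterating), whereas you track $\ell_2$ norms and repeatedly Cauchy--Schwarz, making the chain-telescoping $\sqrt{(j+1)/(k+1)}$ explicit; both organizations yield the same $d^{t+2}$ power and hence the same final bound.
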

\begin{proof}
To prove this, we bound $\int_{x = 0}^{\infty}{\left|\frac{d^{t+1}\left({g^2_2(x)}xe^{-x}\right)}{dx^{t+1}}\right|}dx$. 
\begin{lemma}
For all $t,d \in \mathbb{N}$ If $g_2 = \sum_{i=0}^{d}{{a_i}h_i}$ is a polynomial of degree at most $d$ then 
\[
\int_{x = 0}^{\infty}{\left|\frac{d^{t+1}\left({g_2(x)^2}xe^{-x}\right)}{dx^{t+1}}\right|}dx \leq (t+4)(d+8)(d+1)(3d)^{t}\left(\sum_{i=0}^{d}{a^2_i}\right)
\]
\end{lemma}
\begin{proof}
Observe that 
\begin{align*}
\frac{d^{t+1}\left({g_2(x)^2}xe^{-x}\right)}{dx^{t+1}} &= \sum_{j_1 = 0}^{t+1}{\sum_{j_2 = 0}^{t+1-j_1}{{\frac{(t+1)!(-1)^{t+1-j_1-j_2}}{{j_1}!{j_2}!(t+1 - j_1 - j_2)!}\frac{d^{j_1}g_2(x)}{dx^{j_1}}\frac{d^{j_2}g_2(x)}{dx^{j_2}}xe^{-x}}}} \\
&+\sum_{j_1 = 0}^{t}{\sum_{j_2 = 0}^{t-j_1}{{\frac{(t+1)!(-1)^{t-j_1-j_2}}{{j_1}!{j_2}!(t - j_1 - j_2)!}\frac{d^{j_1}g_2(x)}{dx^{j_1}}\frac{d^{j_2}g_2(x)}{dx^{j_2}}e^{-x}}}}
\end{align*}
By Lemma \ref{hkderivativelemma}, if $f = \sum_{i=0}^{d}{{b_i}h_i}$ is a polynomial of degree at most $d$ then writing $\frac{df}{dx} = \sum_{i=0}^{d-1}{{b'_i}h_i}$, $\sum_{i=0}^{d-1}{|b'_i|} \leq d\sum_{i=0}^{d}{|b_i|}$. By Lemma \ref{magnitudeintegralbound} and Corollary \ref{noxmagnitudeintegralboundcorollary}, we have that for all $j,j' \in [0,d]$, 
\begin{enumerate}
\item For all $j,j' \in \mathbb{N} \cup \{0\}$, $\int_{x = 0}^{\infty}{|h_j(x)h_{j'}(x)|xe^{-x}dx} \leq 1$
\item For all $j,j' \in \mathbb{N} \cup \{0\}$, $\int_{x = 0}^{\infty}{|h_j(x)h_{j'}(x)|e^{-x}dx} \leq \sqrt{(j+8)(j'+8)}$
\end{enumerate}
Putting these facts together, if $g_2 = \sum_{i=0}^{d}{{a_i}h_i}$ then
\begin{align*}
\int_{x = 0}^{\infty}{\left|\frac{d^{t+1}\left({g^2_2(x)}xe^{-x}\right)}{dx^{t+1}}\right|}dx 
&\leq \left(3^{t+1}d^{t+1} + (t+1)3^{t}d^{t}(d+8)\right)\left(\sum_{i=0}^{d}{|a_i|}\right)^2 \\
&\leq (t+4)(d+8)(d+1)(3d)^{t}\left(\sum_{i=0}^{d}{a^2_i}\right)
\end{align*}
\end{proof}
With this bound in hand, we now apply Lemma \ref{tthderivativebound} with $f(x) = {g^2_2(x)}xe^{-x}$. For convenience, we recall the statement of Lemma \ref{tthderivativebound} here:\\
\\
For all $t \in \mathbb{N}$, for any $\Delta > 0$ and any function $f: [0,\infty) \to \mathbb{R}$ which can be differentiated $t+1$ times,
\begin{align*}
&\left|\frac{1}{t}\left(\sum_{b=0}^{t-1}{\int_{x=b\Delta}^{\infty}{f(x)dx}}\right) - \Delta\sum_{j=0}^{\infty}{f(j\Delta)} + 
\Delta\sum_{j=0}^{t-1}{\left(\sum_{b=j+1}^{t}{(c_t)_{b+1}}\right)f(j\Delta)}\right| \\
&\leq  (t\Delta)^{t+1}\left(\frac{t}{(t+1)!} + \frac{\sum_{b=1}^{t}{|(c_t)_{b+1}|}}{t!}\right)\int_{x = 0}^{\infty}{|f^{(t+1)}(x)|dx}
\end{align*}
To use this bound, we need to bound $f(x) = {g^2_2(x)}xe^{-x}$ when $x \in [0,t\Delta]$.
\begin{lemma}\label{fnearzerobound}
If $g_2 = \sum_{i=0}^{d}{{a_i}h_i}$ then for all $x \in [0,t\Delta]$, 
\[
f(x) = {g_2(x)^2}xe^{-x} \leq t{\Delta}(d+1)^2e^{2dt\Delta}\left(\sum_{i=0}^{d}{a^2_i}\right)
\]
\end{lemma}
\begin{proof}
By Lemma \ref{nearzerobound}, for all $x \in \mathbb{R}$ and all $j \in \mathbb{N}$, $|h_j(x)| \leq \sqrt{j+1}e^{j|x|}$. 
Thus, for all $x \in [0,t\Delta]$,
\[
|g_2(x)| = \left|\sum_{i=0}^{d}{{a_i}h_i(x)}\right| \leq \sqrt{d+1}e^{dt\Delta}\left(\sum_{i=0}^{d}{|a_i|}\right)
\]
which implies that for all $x \in [0,t\Delta]$
\[
{g_2(x)^2}xe^{-x} \leq t{\Delta}(d+1)e^{2dt\Delta}\left(\sum_{i=0}^{d}{|a_i|}\right)^2 \leq 
t{\Delta}(d+1)^2e^{2dt\Delta}\left(\sum_{i=0}^{d}{a^2_i}\right)
\]
\end{proof}
Using Lemma \ref{fnearzerobound}, we make the following observations:
\begin{enumerate}
\item $\left|\int_{0}^{\infty}{f(x)dx} - \frac{1}{t}\left(\sum_{b=0}^{t-1}{\int_{x=b\Delta}^{\infty}{f(x)dx}}\right)\right| \leq 
(t\Delta)^{2}(d+1)^2e^{2dt\Delta}\left(\sum_{i=0}^{d}{a^2_i}\right)$
\item $\left|\Delta\sum_{j=0}^{t-1}{\left(\sum_{b=j+1}^{t}{(c_t)_{b+1}}\right)f(j\Delta)}\right| \leq 
(t\Delta)^{2}(d+1)^2e^{2dt\Delta}\left(\sum_{b=1}^{t}{|(c_t)_{b+1}|}\right)\left(\sum_{i=0}^{d}{a^2_i}\right)$
\end{enumerate}
Putting everything together, there exist constants $C_{t1}$ and $C_{t2}$ such that 
\[
\left|\Delta\sum_{j = 0}^{\infty}{(j\Delta)e^{-j\Delta}g^2_2(j\Delta)} - \int_{x = 0}^{\infty}{g^2_2(x)xe^{-x}dx}\right| \leq 
\left(C_{t1}(d\Delta)^{2}e^{2td\Delta} + C_{t2}d(d\Delta)^{t+1}\right)\left(\sum_{i=0}^{d}{a^2_i}\right)
\]
Since $\int_{x = 0}^{\infty}{g^2_2(x)xe^{-x}dx} = \sum_{i=0}^{d}{a^2_i}$, the result follows.
\end{proof}
\section{Handling the Difference Between Distributions}\label{continuouserror}
In this section, we prove the following theorem:
\begin{theorem}\label{differencetheorem}
For all $d,d_2,n \in \mathbb{N}$ such that $(4d+2)ln(d_2) + 2ln(20) \leq d_2 \leq \frac{\sqrt{n}}{4}$, for all polynomials $g_2$ of degree at most $d$, taking $\Delta = \frac{2d_2}{n}$,
\[
{\Delta}\sum_{k = 0}^{n-d_2-1}{(k\Delta)\frac{\binom{n-k-2}{d_2-1}}{\binom{n-1}{d_2-1}}g^2_2(k\Delta)} \geq \frac{{\Delta}}{2}\sum_{k = 0}^{\infty}{(k\Delta)e^{-k\Delta}g^2_2(k\Delta)} - 
\frac{1}{10}\int_{x = 0}^{\infty}{g^2_2(x)xe^{-x}dx} 
\]
\end{theorem}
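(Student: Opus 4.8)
The plan is to compare the two sums region by region. Write $b_k=\frac{\binom{n-k-2}{d_2-1}}{\binom{n-1}{d_2-1}}$ for $0\le k\le n-d_2-1$, so the left side is $\Delta\sum_k(k\Delta)b_kg_2^2(k\Delta)$. I will fix a cutoff $k^*$ with $k^*\Delta\approx d_2$ — concretely $k^*=\lfloor\frac n2-d_2\rfloor$ — split the right-hand exponential sum at $k^*$, bound the part with $k\le k^*$ termwise by the left side (using that the binomial weights dominate the exponential weights in that range), and absorb the part with $k>k^*$ into the error term $\frac1{10}\int_0^\infty xe^{-x}g_2^2$. Because the left-hand summand is nonnegative and $\{k<n/2\}\subseteq\{0,\dots,n-d_2-1\}$ when $d_2\le\sqrt n/4$, discarding the high-$k$ terms of the left side costs nothing.

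First I would prove that $b_k\ge\frac12e^{-k\Delta}$ for $0\le k\le k^*$. Writing $b_k=\prod_{j=1}^{d_2-1}\bigl(1-\tfrac{k+1}{n-j}\bigr)$, setting $y_j=\tfrac{k+1}{n-j}\le\tfrac12$, and using $\ln(1-y_j)\ge-y_j-y_j^2$, this reduces to $\sum_j(y_j+y_j^2)\le k\Delta+\ln2$. Replacing $\frac1{n-j}$ by $\frac1n(1+O(d_2/n))$ via $d_2\le\sqrt n/4$ and bounding $\sum_jy_j^2\le\tfrac12\sum_jy_j$ via $k\le n/2$ gives $\sum_j(y_j+y_j^2)\le\tfrac32\cdot\frac{(k+1)d_2}{n}(1+o(1))$, which is below $k\Delta+\ln2$ after a short two-case split (for $k=\Theta(n)$ there is huge slack since $k\Delta=\Theta(\sqrt n)$, while for $k=O(\sqrt n)$ the correction terms are themselves small). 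Here it is essential that $\Delta=\frac{2d_2}{n}$, not $\frac{d_2}{n}$: the extra factor $2$ in the exponent is precisely the room this estimate needs. Once this is established, summing $\frac{\Delta}{2}(k\Delta)e^{-k\Delta}g_2^2(k\Delta)\le\Delta(k\Delta)b_kg_2^2(k\Delta)$ over $k\le k^*$ and restoring the discarded nonnegative terms gives
\[
\Delta\sum_{k=0}^{n-d_2-1}(k\Delta)b_kg_2^2(k\Delta)\ \ge\ \frac{\Delta}{2}\sum_{k=0}^{k^*}(k\Delta)e^{-k\Delta}g_2^2(k\Delta).
\]

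Next I would bound the tail $\frac{\Delta}{2}\sum_{k>k^*}(k\Delta)e^{-k\Delta}g_2^2(k\Delta)$ by $\frac1{10}\int_0^\infty xe^{-x}g_2^2$. Expanding $g_2=\sum_{i=0}^da_ih_i$ in the orthonormal basis, $\int_0^\infty xe^{-x}g_2^2=\sum_ia_i^2$, and from the explicit formula for $h_k$ (which gives $|h_k(x)|\le2^k\sqrt{k+1}\,x^k$ for $x\ge1$) together with Cauchy–Schwarz, $g_2^2(x)\le4^d(d+1)^2x^{2d}\sum_ia_i^2$ for $x\ge1$. Since $k^*\Delta\approx d_2\ge4d+2>2d+1$, the function $x^{2d+1}e^{-x}$ is decreasing past $k^*\Delta$, so the tail sum is at most a Riemann sum dominated by $\int_{k^*\Delta}^\infty x^{2d+1}e^{-x}dx$, and the incomplete-Gamma tail bound $\int_a^\infty x^{2d+1}e^{-x}dx\le2a^{2d+1}e^{-a}$ (valid for $a\ge4d+2$) yields
\[
\frac{\Delta}{2}\sum_{k>k^*}(k\Delta)e^{-k\Delta}g_2^2(k\Delta)\ \le\ 4^d(d+1)^2(k^*\Delta)^{2d+1}e^{-k^*\Delta}\int_0^\infty xe^{-x}g_2^2(x)\,dx.
\]
Taking logarithms, the prefactor is $\le\frac1{10}$ once $k^*\Delta\ge(2d+1)\ln(k^*\Delta)+d\ln4+2\ln(d+1)+\ln10$, which follows from $k^*\Delta\ge d_2(1-o(1))$ and the hypothesis $(4d+2)\ln d_2+2\ln20\le d_2$ (the latter also forcing $d_2\ge4d+2$). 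Adding the two displayed bounds and rearranging gives the theorem.

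The main obstacle is making the two steps compatible: the cutoff must be large enough ($k^*\Delta\gtrsim d\log d$) for the exponential mass beyond it to be negligible, yet small enough ($k^*\Delta\lesssim d_2$) for the binomial weights to still beat the exponential weights by the required factor $\tfrac12$. Reconciling these is possible exactly when $d_2\gtrsim d\log d$, and the precise constants in $(4d+2)\ln d_2+2\ln20\le d_2\le\frac{\sqrt n}{4}$ are tuned so that both the factor-$\tfrac12$ estimate of the first step and the $\tfrac1{10}$ estimate of the second go through; carefully tracking these constants, rather than just the asymptotics, is the bulk of the work.
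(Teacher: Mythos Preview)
Your approach is essentially the same as the paper's: split the exponential sum at a cutoff, show the binomial weights dominate $\tfrac12 e^{-k\Delta}$ below the cutoff by a termwise comparison, and bound the exponential tail above the cutoff by a small multiple of $\int_0^\infty x e^{-x}g_2^2$ using the pointwise growth bound $|h_k(x)|\le (2x)^k$ and an incomplete-Gamma estimate. The only notable differences are cosmetic: the paper places the cutoff at $k=\lceil n/4\rceil$ (so the cutoff in the $x$-variable is $\approx d_2/2$) rather than your $k^*=\lfloor n/2-d_2\rfloor$ (cutoff $\approx d_2$), and it proves the inequality $b_k\ge\tfrac12 e^{-k\Delta}$ via the explicit factorization $\tfrac{n-j-k-1}{n-j}=e^{-2k/n}\cdot\tfrac{1-k/n}{e^{-2k/n}}\cdot\tfrac{(n-j-k-1)/(n-j)}{1-k/n}$ rather than your log-plus-Taylor route, but both arguments exploit the same slack coming from $\Delta=\tfrac{2d_2}{n}$ rather than $\tfrac{d_2}{n}$.
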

\begin{proof}
To prove this theorem, we prove the following two statements:
\begin{enumerate}
\item ${\Delta}\sum_{k = 0}^{\lceil{\frac{n}{4}}\rceil - 1}{\left(\frac{k\Delta}{2}\right)\frac{\binom{n-k-2}{d_2-1}}{\binom{n-1}{d_2-1}}g^2_2(k\Delta)} \geq 
\frac{\Delta}{4}\sum_{k = 0}^{\lceil{\frac{n}{4}}\rceil - 1}{(k\Delta)e^{-k\Delta}g^2_2(k\Delta)}$
\item ${\Delta}\sum_{k = \lceil{\frac{n}{4}}\rceil}^{\infty}{(k\Delta)e^{-k{\Delta}}g^2_2(k\Delta)} \leq \frac{1}{5}\int_{x = 0}^{\infty}{g^2_2(x)xe^{-x}dx} $
\end{enumerate}
Assuming these two statements, we have that
\begin{align*}
{\Delta}\sum_{k = 0}^{n-d_2-1}{(k\Delta)\frac{\binom{n-k-2}{d_2-1}}{\binom{n-1}{d_2-1}}g^2_2(k\Delta)} &\geq 
{\Delta}\sum_{k = 0}^{\lceil{\frac{n}{4}}\rceil-1}{(k\Delta)\frac{\binom{n-k-2}{d_2-1}}{\binom{n-1}{d_2-1}}g^2_2(k\Delta)}\\
&\geq \frac{\Delta}{2}\sum_{k = 0}^{\lceil{\frac{n}{4}}\rceil-1}{(k\Delta)e^{-k\Delta}g^2_2(k\Delta)}\\
&= \frac{\Delta}{2}\left(\sum_{k = 0}^{\infty}{(k\Delta)e^{-k\Delta}g^2_2(k\Delta)} - \sum_{k = \lceil{\frac{n}{4}}\rceil}^{\infty}{(k\Delta)e^{-k\Delta}g^2_2(k\Delta)}\right) \\
&\geq \frac{\Delta}{2}\sum_{k = 0}^{\infty}{(k\Delta)e^{-k\Delta}g^2_2(k\Delta)} - 
\frac{1}{10}\int_{x = 0}^{\infty}{g^2_2(x)xe^{-x}dx} 
\end{align*}
We now prove these two statements. The first statement follows immediately from the following lemma:
\begin{lemma}\label{lowkboundlemma}
For all $k,d_2,n \in \mathbb{N}$ such that $d_2 \leq \frac{\sqrt{n}}{4}$ and $k \leq \frac{n}{4}$, taking $\Delta = \frac{2d_2}{n}$, 
\[
\frac{\binom{n-k-2}{d_2-1}}{\binom{n-1}{d_2-1}} \geq \frac{1}{2}e^{-k\Delta}
\]
\end{lemma}
\begin{proof}
Observe that 
\begin{align*}
\frac{\binom{n-k-2}{d_2-1}}{\binom{n-1}{d_2-1}} &= \prod_{j=1}^{d_2-1}{\frac{n-j-k-1}{n-j}} \\
&= \prod_{j=1}^{d_2-1}{\left(e^{-\frac{2k}{n}} \cdot \frac{1-\frac{k}{n}}{e^{-\frac{2k}{n}}} \cdot \frac{\frac{n-j-k-1}{n-j}}{1-\frac{k}{n}}\right)}\\
&\geq e^{-k\Delta}\left(\frac{1-\frac{k}{n}}{e^{-\frac{2k}{n}}}\right)^{d_2-1}\left(\prod_{j=1}^{d_2-1}{\frac{\frac{n-j-k-1}{n-j}}{1-\frac{k}{n}}}\right)
\end{align*}
Thus, to prove this result, it is sufficient to lower bound $\left(\frac{1-\frac{k}{n}}{e^{-\frac{2k}{n}}}\right)^{d_2-1}$ and $\prod_{j=1}^{d_2-1}{\frac{\frac{n-j-k-1}{n-j}}{1-\frac{k}{n}}}$. 
\begin{proposition}\label{exponentialapproximationproposition}
For all $k \in \mathbb{N}$ such that $k \leq \frac{n}{4}$, $\frac{1-\frac{k}{n}}{e^{-\frac{2k}{n}}} \geq 1$
\end{proposition}
\begin{proof}
Observe that for all $x \geq 0$, $e^{-x} \leq 1 - x + \frac{x^2}{2}$. Taking $x = \frac{2k}{n}$, if $k \leq \frac{n}{4}$ then  
\[
e^{-\frac{2k}{n}} \leq 1 - \frac{2k}{n} + \frac{2k^2}{n^2} \leq 1 - \frac{2k}{n} + \frac{k}{2n} \leq  1 - \frac{k}{n}
\]
and thus $\frac{1-\frac{k}{n}}{e^{-\frac{2k}{n}}} \geq 1$.
\end{proof}
To bound $\prod_{j=1}^{d_2-1}{\frac{\frac{n-j-k}{n-j}}{1-\frac{k-1}{n}}}$, we prove the following lemma.
\begin{lemma}
For all $j,k,n \in \mathbb{N}$ such that $j \leq \frac{n}{8}$ and $k \leq \frac{n}{4}$,
\[
\frac{\frac{n-j-k-1}{n-j}}{1-\frac{k}{n}} \geq \left(1 - \frac{2}{n}\right)\left(1 - \frac{2jk}{n^2}\right)
\]
\end{lemma}
\begin{proof}
Observe that 
\begin{align*}
1 - \frac{\frac{n-j-k-1}{n-j}}{1-\frac{k}{n}} &= \frac{(n-k)(n-j) - n(n-j-k-1)}{(n-k)(n-j)} = \frac{jk + n}{(n-k)(n-j)} \\
&\leq \frac{2jk + \frac{32n}{21}}{n^2} = \frac{2jk}{n^2} + \frac{2}{n} -\frac{11}{21n} \leq \frac{2jk}{n^2} + \frac{2}{n} - \frac{4jk}{n^3}
\end{align*}
Rearranging this, we have that
\[
\frac{\frac{n-j-k-1}{n-j}}{1-\frac{k}{n}} \geq 1 - \frac{2jk}{n^2} - \frac{2}{n} + \frac{4jk}{n^3}
\]
as needed.
\end{proof}
Combining this lemma with the following proposition, we have the following corollary.
\begin{proposition}
For all $x \in [0,1]$ and all $k \in \mathbb{N}$, $(1-x)^{k} \geq 1-kx$ 
\end{proposition}
\begin{corollary}\label{fractionapproximationcorollary}
For all $d_2,k,n \in \mathbb{N}$ such that $d_2 \leq \frac{n}{8}$ and $k \leq \frac{n}{4}$,
\[
\prod_{j=1}^{d_2-1}{\frac{\frac{n-j-k-1}{n-j}}{1-\frac{k}{n}}} \geq \left(1 - \frac{2d_2}{n}\right)\left(1 - \frac{2{d^2_2}k}{n^2}\right)
\]
\end{corollary}
Since $d_2 \leq \frac{\sqrt{n}}{4} \leq \frac{n}{8}$ and $k \leq \frac{n}{4}$, combining Proposition \ref{exponentialapproximationproposition} and Corollary \ref{fractionapproximationcorollary} with the inequality
\[
\frac{\binom{n-k-2}{d_2-1}}{\binom{n-1}{d_2-1}} \geq e^{-k\Delta}\left(\frac{1-\frac{k}{n}}{e^{-\frac{2k}{n}}}\right)^{d_2-1}\left(\prod_{j=1}^{d_2-1}{\frac{\frac{n-j-k-1}{n-j}}{1-\frac{k}{n}}}\right)
\]
we have that 
\[
\frac{\binom{n-k-2}{d_2-1}}{\binom{n-1}{d_2-1}} \geq e^{-k\Delta}
\left(1 - \frac{2d_2}{n}\right)\left(1 - \frac{2{d^2_2}k}{n^2}\right) \geq \frac{1}{2}e^{-k\Delta}
\]
which completes the proof of Lemma \ref{lowkboundlemma}
\end{proof}
We now prove the second statement needed to prove Theorem \ref{differencetheorem}.
\begin{lemma}\label{tailboundlemma}
For all $d,d_2,n \in \mathbb{N}$ such that $d_2 \geq 4dln(d_2) + 2ln(10n)$, for any polynomial $g_2$ of degree at most $d$, taking $\Delta = \frac{2d_2}{n}$
\[
{\Delta}\sum_{k = \lceil{\frac{n}{4}}\rceil}^{\infty}{(k\Delta)e^{-k{\Delta}}g^2_2(k\Delta)}
\leq 2n^{2}\left(\frac{3}{4}\right)^{d_2 - 1}\left(\frac{4n}{d_2}\right)^{d}\int_{x=0}^{\infty}{g^{2}_2(x)xe^{-x}dx} 
\]
\end{lemma}
\begin{proof}
To prove this, we upper bound $|h_k(x)|$ for large $x$. 
\begin{lemma}
For all $k \in \mathbb{N}$ and all $x \geq 1$, $|h_k(x)| \leq (2x)^{k}$
\end{lemma}
\begin{proof}
Recall that 
\[
h_k(x) = \frac{1}{\sqrt{k!(k+1)!}}\sum_{j=0}^{k}{(-1)^{k-j}\binom{k}{j}\frac{(k+1)!}{(j+1)!}x^{j}}
\]
Now observe that 
\[
\sum_{j = 0}^{k}{\binom{k}{j}\frac{1}{(j+1)!}} \leq \sum_{j = 0}^{k}{\frac{k^{j}}{j!2^{j}}} = e^{\frac{k}{2}}
\]
Thus, for all $k \in \mathbb{N}$ and all $x \geq 1$, $|h_k(x)| \leq \sqrt{k+1}e^{\frac{k}{2}}x^{k}$.

If $k \geq 5$ then $\sqrt{k+1}e^{\frac{k}{2}} \leq 2^{k}$ and we are done. For $k \in [1,4]$ we check the polynomials directly.
\begin{enumerate}
\item $|h_1(x)| = \left|\frac{1}{\sqrt{2}}(x-2)\right| \leq \max{\{\frac{x}{\sqrt{2}}, \frac{1}{\sqrt{2}}\}}< 2x$
\item $|h_2(x)| = \left|\frac{1}{\sqrt{12}}(x^2 - 6x + 6)\right| \leq \max{\{\frac{x^2}{\sqrt{12}}, \frac{5x}{\sqrt{12}}\}} < 4x^2$
\item $|h_3(x)| = \left|\frac{1}{\sqrt{144}}(x^3 - 12x^2 + 36x - 24)\right| \leq \max{\{\frac{25x^3}{\sqrt{144}},\frac{12x^2}{\sqrt{144}}\}} < 8x^3$
\item $|h_4(x)| = \left|\frac{1}{\sqrt{2880}}(x^4 - 20x^3+120x^2 - 240x + 120)\right| \leq \max{\{\frac{101x^4}{\sqrt{2880}},\frac{139x^3}{\sqrt{2880}}\}} < 16x^4$
\end{enumerate}
\end{proof}
\begin{corollary}\label{polynomialtailbound}
For all $d \in \mathbb{N}$, if $g_2$ is a polynomial of degree at most $d$ then for all $y \geq 1$,
\[
g^{2}_2(y)\leq 2(2y)^{2d}\int_{x=0}^{\infty}{g^{2}_2(x)xe^{-x}dx}
\]
\end{corollary}
\begin{proof}
Writing $g_2 = \sum_{i=0}^{d}{{a_i}h_i}$, we have that $\int_{x=0}^{\infty}{g^{2}_2(x)^{2}xe^{-x}dx} = \sum_{i=0}^{d}{a^2_i}$ and 
\begin{align*}
g^{2}_2(y) &\leq \sum_{i = 0}^{d}{\sum_{i' = 0}^{d}{a_{i}a_{i'}(2y)^{i+i'}}} \\
&\leq  \sum_{i = 0}^{d}{\sum_{i' = 0}^{d}{\left(\frac{1}{2^{d-i'+1}}a^2_{i} + \frac{1}{2^{d-i+1}}a^2_{i'}\right)(2y)^{2d}}} \\
&\leq \left(\sum_{i=0}^{d}{a^2_i} + \sum_{i'=0}^{d}{a^2_{i'}}\right)(2y)^{2d} = 2(2y)^{2d}\sum_{i=0}^{d}{a^2_i}
\end{align*}
\end{proof}
With this bound, we can now prove Lemma \ref{tailboundlemma}. By Corollary \ref{polynomialtailbound}, 
\[
g^{2}_2(y) \leq 2(2y)^{2d}\int_{x=0}^{\infty}{g^{2}_2(x)xe^{-x}dx}
\]
Applying this with $y = k\Delta$, since $d_2 \geq (4d+2)ln(d_2) + 2ln(20)$,
\begin{align*}
&{\Delta}\sum_{k = \lceil{\frac{n}{4}}\rceil}^{\infty}{(k\Delta)e^{-k{\Delta}}g^2_2(k\Delta)} \leq 
{\Delta}\sum_{k = \lceil{\frac{n}{4}}\rceil}^{\infty}{(2k\Delta)e^{-k{\Delta}}(2k\Delta)^{2d}\int_{x=0}^{\infty}{g^{2}_2(x)xe^{-x}dx}} \\
&\leq \left(\int_{x = (\lceil{\frac{n}{4}}\rceil - 1)\Delta}^{\infty}{(2x)^{2d+1}e^{-x}dx}\right)\int_{x=0}^{\infty}{g^{2}_2(x)xe^{-x}dx} \\
&\leq 2^{2d+1}\sum_{j=0}^{2d+1}{\frac{(2d+1)!}{(2d+1-j)!}\left(\left(\left\lceil{\frac{n}{4}}\right\rceil - 1\right)\Delta\right)^{2d+1-j}}e^{-\left(\left\lceil{\frac{n}{4}}\right\rceil - 1\right)\Delta}
\left(\int_{x=0}^{\infty}{g^{2}_2(x)xe^{-x}dx}\right) \\
&\leq 2^{2d+2}\left(\left(\left\lceil{\frac{n}{4}}\right\rceil - 1\right)\Delta\right)^{2d+1}e^{-\left(\left\lceil{\frac{n}{4}}\right\rceil - 1\right)\Delta}
\left(\int_{x=0}^{\infty}{g^{2}_2(x)xe^{-x}dx}\right) \\
&\leq 2e^{\Delta}{d_2}^{2d+1}e^{-\frac{d_2}{2}}\left(\int_{x=0}^{\infty}{g^{2}_2(x)xe^{-x}dx}\right) \\
&\leq 4e^{(2d+1)ln(d_2) - \frac{d_2}{2}}\left(\int_{x=0}^{\infty}{g^{2}_2(x)xe^{-x}dx}\right) \\
&\leq \frac{1}{5}\int_{x=0}^{\infty}{g^{2}_2(x)xe^{-x}dx}
\end{align*}
where the second inequality holds because $(2x)^{2d+1}e^{-x}$ is a decreasing function whenever $x \geq 2d + 1$.
\end{proof}
\end{proof}
\section{Putting Everything Together}\label{puttingeverythingtogethersection}
In this section, we put everything together to prove our SOS lower bound.
\subsection{Lower bounding our sum with an integral}
We first combine Theorems \ref{numericalintegrationerrortheorem} and \ref{differencetheorem} to lower bound our sum with an integral.
\begin{theorem}\label{sumtointegraltheorem}
For all $d,d_2,t,n \in \mathbb{N}$, taking $\Delta = \frac{2d_2}{n}$, if the following conditions hold:
\begin{enumerate}
\item $(4d+2)ln(d_2) + 2ln(20) \leq d_2 \leq \frac{\sqrt{n}}{4}$
\item Letting $C_{t1}$ and $C_{t2}$ be the constants given by Theorem \ref{numericalintegrationerrortheorem}, 
\[
C_{t1}(d\Delta)^{2}e^{2td\Delta} + C_{t2}d(d\Delta)^{t+1} \leq \frac{1}{2}
\]
\end{enumerate}
then for any polynomial $g_2$ of degree at most $d$, 
\[
\Delta\sum_{k = 0}^{n-d_2-1}{(k\Delta)\frac{\binom{n-k-2}{d_2-1}}{\binom{n-1}{d_2-1}}g^2_2(k\Delta)} \geq \frac{3}{20}\int_{x = 0}^{\infty}{g^2_2(x)xe^{-x}dx} 
\]
\end{theorem}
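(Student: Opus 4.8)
The plan is to obtain the bound by chaining Theorem \ref{numericalintegrationerrortheorem} and Theorem \ref{differencetheorem}: the former lets us replace the geometrically weighted sum $\Delta\sum_{j\geq 0}(j\Delta)e^{-j\Delta}g_2^2(j\Delta)$ by the integral $\int_{x=0}^{\infty}g_2^2(x)xe^{-x}dx$ up to a multiplicative error that hypothesis (2) forces to be at most $\tfrac12$, while the latter relates the binomially weighted sum we actually care about to that same geometric sum, up to an additive $\tfrac1{10}\int_{x=0}^{\infty}g_2^2(x)xe^{-x}dx$ which hypothesis (1) puts us in a position to apply. Since all three quantities are nonnegative there are no sign subtleties, and the only content of the proof is tracking the constants.

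Concretely, I would first feed hypothesis (2) into Theorem \ref{numericalintegrationerrortheorem} (with the fixed $t$ and its associated constants $C_{t1},C_{t2}$) to get
\[
\left|\Delta\sum_{j=0}^{\infty}(j\Delta)e^{-j\Delta}g_2^2(j\Delta) - \int_{x=0}^{\infty}g_2^2(x)xe^{-x}dx\right| \leq \tfrac12\int_{x=0}^{\infty}g_2^2(x)xe^{-x}dx ,
\]
and hence $\Delta\sum_{j\geq 0}(j\Delta)e^{-j\Delta}g_2^2(j\Delta) \geq \tfrac12\int_{x=0}^{\infty}g_2^2(x)xe^{-x}dx$, so that $\tfrac{\Delta}{2}\sum_{k\geq 0}(k\Delta)e^{-k\Delta}g_2^2(k\Delta) \geq \tfrac14\int_{x=0}^{\infty}g_2^2(x)xe^{-x}dx$. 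Next, hypothesis (1) is exactly the running hypothesis $(4d+2)\ln(d_2)+2\ln(20)\leq d_2\leq\tfrac{\sqrt n}{4}$ of Theorem \ref{differencetheorem}, and the value $\Delta=\tfrac{2d_2}{n}$ is the same, so that theorem applies and gives
\[
\Delta\sum_{k=0}^{n-d_2-1}(k\Delta)\frac{\binom{n-k-2}{d_2-1}}{\binom{n-1}{d_2-1}}g_2^2(k\Delta) \geq \frac{\Delta}{2}\sum_{k=0}^{\infty}(k\Delta)e^{-k\Delta}g_2^2(k\Delta) - \frac1{10}\int_{x=0}^{\infty}g_2^2(x)xe^{-x}dx .
\]

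Combining the two displays yields
\[
\Delta\sum_{k=0}^{n-d_2-1}(k\Delta)\frac{\binom{n-k-2}{d_2-1}}{\binom{n-1}{d_2-1}}g_2^2(k\Delta) \geq \left(\frac14-\frac1{10}\right)\int_{x=0}^{\infty}g_2^2(x)xe^{-x}dx = \frac3{20}\int_{x=0}^{\infty}g_2^2(x)xe^{-x}dx ,
\]
which is the claim. I do not expect a genuine obstacle here — the substantive work was carried out in Theorems \ref{numericalintegrationerrortheorem} and \ref{differencetheorem}. The only points requiring care are organizational: making sure the two invoked theorems are applied with the same $\Delta=\tfrac{2d_2}{n}$ and the same degree-$d$ polynomial $g_2$, checking that hypotheses (1) and (2) literally supply what each theorem needs, and noting that the final constant $\tfrac3{20}$ is just the arithmetic $\tfrac14-\tfrac1{10}$; one could optimize this constant slightly by being less wasteful in hypothesis (2), but $\tfrac3{20}$ suffices for the downstream argument.
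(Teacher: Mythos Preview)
Your proposal is correct and follows essentially the same approach as the paper: invoke Theorem~\ref{numericalintegrationerrortheorem} under hypothesis~(2) to get $\Delta\sum_{k\geq 0}(k\Delta)e^{-k\Delta}g_2^2(k\Delta)\geq \tfrac12\int g_2^2(x)xe^{-x}dx$, invoke Theorem~\ref{differencetheorem} under hypothesis~(1), and combine via $\tfrac14-\tfrac1{10}=\tfrac3{20}$. The paper merely reverses the order in which the two theorems are stated, which is immaterial.
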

\begin{proof}
By Theorem \ref{differencetheorem}, for all $d,d_2,n \in \mathbb{N}$ such that $4dln(d_2) + 2ln(10n) \leq d_2 \leq \frac{\sqrt{n}}{4}$, for all polynomials $g_2$ of degree at most $d$, taking $\Delta = \frac{2d_2}{n}$,
\[
\Delta\sum_{k = 0}^{n-d_2-1}{(k\Delta)\frac{\binom{n-k-2}{d_2-1}}{\binom{n-1}{d_2-1}}g^2_2(k\Delta)} \geq \frac{\Delta}{2}\sum_{k = 0}^{\infty}{(k\Delta)e^{-k\Delta}g^2_2(k\Delta)} - 
\frac{1}{10}\int_{x = 0}^{\infty}{g^2_2(x)xe^{-x}dx} 
\]
By Theorem \ref{numericalintegrationerrortheorem}, for all $\Delta > 0$, $d \in \mathbb{N}$, and polynomials $g_2$ of degree at most $d$,
\begin{align*}
&\left|\Delta\sum_{k = 0}^{\infty}{(k\Delta)e^{-k\Delta}g^2_2(j\Delta)} - \int_{x = 0}^{\infty}{g^2_2(x)xe^{-x}dx}\right| \\
&\leq \left(C_{t1}(d\Delta)^{2}e^{2td\Delta} + C_{t2}d(d\Delta)^{t+1}\right)\int_{x = 0}^{\infty}{g^2_2(x)xe^{-x}dx} \\
&\leq \frac{1}{2}\int_{x = 0}^{\infty}{g^2_2(x)xe^{-x}dx}
\end{align*}
Thus, 
\[
\Delta\sum_{k = 0}^{\infty}{(k\Delta)e^{-k\Delta}g^2_2(j\Delta)} \geq \frac{1}{2}\int_{x = 0}^{\infty}{g^2_2(x)xe^{-x}dx}
\]
Combining these statements, $\Delta\sum_{k = 0}^{n-d_2-1}{(k\Delta)\frac{\binom{n-k-2}{d_2-1}}{\binom{n-1}{d_2-1}}g^2_2(k\Delta)} \geq \frac{3}{20}\int_{x = 0}^{\infty}{g^2_2(x)xe^{-x}dx}$, as needed.
\end{proof}
\subsection{Proof of the SOS lower bound}
We now prove our SOS lower bound.
\begin{theorem}\label{mainlowerbound}
For all $d,d_2,t,n \in \mathbb{N}$, taking $\Delta = \frac{2d_2}{n}$, if the following conditions hold:
\begin{enumerate}
\item $(4d+2)ln(d_2) + 2ln(20) \leq d_2 \leq \frac{\sqrt{n}}{4}$
\item $10(d+1)^2{{\Delta}^2}e^{2d\Delta} \leq 1$.
\item Letting $C_{t1}$ and $C_{t2}$ be the constants given by Theorem \ref{numericalintegrationerrortheorem}, 
\[
C_{t1}(d\Delta)^{2}e^{2td\Delta} + C_{t2}d(d\Delta)^{t+1} \leq \frac{1}{2}
\]
\end{enumerate}
then there is no polynomial $g$ of degree at most $\frac{d}{2}$ such that $\tilde{E}_{2n}[g^2] < 0$.
\end{theorem}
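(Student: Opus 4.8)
The plan is to argue by contradiction, stringing together the three main technical results proved above. Suppose there were a polynomial $g$ of degree at most $\frac{d}{2}$ with $\tilde{E}_{2n}[g^2] < 0$. Condition 1 forces $2d \le d_2 \le \frac{\sqrt{n}}{4} \le n$ (in particular $d_2 \ge 2d$, since $d_2 \ge (4d+2)\ln(d_2) \ge 2d$ once $d_2 \ge 2$), which are exactly the hypotheses of Theorem \ref{reducingtosinglevariabletheorem}; I would invoke it to obtain a polynomial $g_{*}$ of degree at most $d$ with
\[
\sum_{k=0}^{n-d_2-1}{\frac{\binom{n-k-2}{d_2-1}}{\binom{n-1}{d_2-1}}{k}g_{*}^2(k+1)} < g_{*}^2(0).
\]

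Next I would rescale. Put $\Delta = \frac{2d_2}{n}$ and $g_2(x) = g_{*}\left(\frac{x}{\Delta}+1\right)$; this is an affine change of variable, so $g_2$ still has degree at most $d$, and moreover $g_2(k\Delta) = g_{*}(k+1)$ for every $k$ while $g_2(-\Delta) = g_{*}(0)$. Multiplying the displayed inequality by $\Delta^2$ and using $\Delta^2 k = \Delta(k\Delta)$, it becomes
\[
\Delta\sum_{k=0}^{n-d_2-1}{(k\Delta)\frac{\binom{n-k-2}{d_2-1}}{\binom{n-1}{d_2-1}}g_2^2(k\Delta)} < \Delta^2 g_2^2(-\Delta).
\]
Conditions 1 and 3 are precisely the hypotheses of Theorem \ref{sumtointegraltheorem}, so the left-hand side is at least $\frac{3}{20}\int_{0}^{\infty}{g_2^2(x)\,x e^{-x}\,dx}$; and condition 2 is precisely the hypothesis of Theorem \ref{approximatestatementtheorem}, applied to the degree-at-most-$d$ polynomial $g_2$, so $\int_{0}^{\infty}{g_2^2(x)\,x e^{-x}\,dx} \ge 10\Delta^2 g_2^2(-\Delta)$. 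Chaining the three inequalities gives $\Delta^2 g_2^2(-\Delta) > \frac{3}{20}\cdot 10\,\Delta^2 g_2^2(-\Delta) = \frac{3}{2}\Delta^2 g_2^2(-\Delta)$, which is impossible since $\Delta^2 g_2^2(-\Delta) \ge 0$. (Equivalently: if $g_{*}(0) \ne 0$ the last line reads $1 > \frac{3}{2}$; if $g_{*}(0) = 0$ then the conclusion of Theorem \ref{reducingtosinglevariabletheorem} asserts that a sum of manifestly nonnegative terms is strictly negative.) Hence no such $g$ exists.

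There is essentially no new mathematical content at this stage: all of the analysis lives in Sections \ref{continuousanalysissection}--\ref{continuouserror}, and this theorem is pure assembly. The one point requiring care is the bookkeeping of hypotheses — verifying that conditions 1--3 really do deliver the hypotheses needed to apply Theorems \ref{reducingtosinglevariabletheorem}, \ref{sumtointegraltheorem}, and \ref{approximatestatementtheorem} (especially that condition 1 forces $d_2 \ge 2d$, so that the distinguished-index reduction behind Theorem \ref{reducingtosinglevariabletheorem} is valid) — together with keeping track of the slack factor $10$, so that $\frac{3}{20}\cdot 10 = \frac{3}{2} > 1$ and the chain of inequalities genuinely closes to a contradiction. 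I expect this verification, not any estimate, to be the only delicate step; the integer $t$ plays no active role beyond having been fixed when the constants $C_{t1}, C_{t2}$ of Theorem \ref{numericalintegrationerrortheorem} were introduced.
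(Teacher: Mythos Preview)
Your proposal is correct and essentially identical to the paper's own proof: both argue by contradiction, invoke Theorem \ref{reducingtosinglevariabletheorem} to pass to a single-variable polynomial $g_{*}$, rescale to $g_2(x)=g_{*}(x/\Delta+1)$, and then chain Theorems \ref{sumtointegraltheorem} and \ref{approximatestatementtheorem} to obtain $\frac{3}{2}\Delta^2 g_2^2(-\Delta) \le \cdots < \Delta^2 g_2^2(-\Delta)$. Your explicit check that condition 1 forces $d_2 \ge 2d$ and your handling of the degenerate case $g_{*}(0)=0$ are slightly more careful than the paper, but the structure and all the ingredients are the same.
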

\begin{proof}
We recall the following results.
\begin{enumerate}
\item By Theorem \ref{reducingtosinglevariabletheorem},
since $2d \leq d_2 \leq n$, if there is a polynomial $g$ of degree at most $\frac{d}{2}$ such that $\tilde{E}_{2n}[g^2] < 0$ then there is a polynomial $g_{*}: \mathbb{R} \to \mathbb{R}$ of degree at most $d$ such that $E_{\Omega_{n,d_2}}[(u-1)g_{*}(u)^2] < 0$. Equivalently,
\[
\sum_{k=0}^{n - d_2 - 1}{\frac{\binom{n-k-2}{d_2-1}}{\binom{n-1}{d_2-1}}{k}g^2_{*}(k+1)}
< g^2_{*}(0)
\]
Taking $g_2(x) = g_{*}\left(\frac{x}{\Delta}+1\right)$, 
\[
\Delta\sum_{k=0}^{n - d_2 - 1}{\frac{\binom{n-k-2}{d_2-1}}{\binom{n-1}{d_2-1}}{(k\Delta)}g^2_{2}(k\Delta)}
< {{\Delta}^2}g^2_{2}(-\Delta)
\]
\item By Theorem \ref{sumtointegraltheorem}, under the given conditions,
\[
\Delta\sum_{k = 0}^{n-d_2-1}{(k\Delta)\frac{\binom{n-k-2}{d_2-1}}{\binom{n-1}{d_2-1}}g^2_2(k\Delta)} \geq \frac{3}{20}\int_{x = 0}^{\infty}{g^2_2(x)xe^{-x}dx} 
\]
\item By Theorem \ref{approximatestatementtheorem}, since $10(d+1)^2{{\Delta}^2}e^{2d\Delta} \leq 1$, for any polynomial $g_2$ of degree at most $d$,
\[
\int_{x = 0}^{\infty}{g^2_2(x)xe^{-x}dx} \geq 10{\Delta^2}g^2_2(-\Delta)
\]
\end{enumerate}
Putting everything together, if there is a polynomial $g$ of degree at most $\frac{d}{2}$ such that $\tilde{E}_{2n}[g^2] < 0$ then there is a polynomial $g_{2}: \mathbb{R} \to \mathbb{R}$ of degree at most $d$ such that
\[
\frac{3}{2}{\Delta^2}g^2_2(-\Delta) \leq \Delta\sum_{k=0}^{n - d_2 - 1}{\frac{\binom{n-k-2}{d_2-1}}{\binom{n-1}{d_2-1}}{(k\Delta)}g^2_{2}(k\Delta)} \leq 
\frac{3}{20}\int_{x = 0}^{\infty}{g^2_2(x)xe^{-x}dx} < {{\Delta}^2}g^2_{2}(-\Delta)
\]
which is impossible. Thus,  there is no polynomial $g$ of degree at most $\frac{d}{2}$ such that $\tilde{E}_{2n}[g^2] < 0$.
\end{proof}
\begin{corollary}
For all $\epsilon > 0$, there exists a constant $C_{\epsilon}$ such that for all $n \in \mathbb{N}$, degree $C_{\epsilon}n^{\frac{1}{2} - \epsilon}$ sum of squares cannot prove the ordering principle on $n$ elements.
\end{corollary}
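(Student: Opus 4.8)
The plan is to apply Theorem~\ref{mainlowerbound} with parameters $d,d_2,t$ chosen as explicit functions of $\epsilon$ and $n$, verify its three hypotheses for all large $n$, and then translate its conclusion into an SOS degree lower bound. First I would fix $d=\lceil n^{\frac12-\epsilon}\rceil$, set $t=\lceil 1/(4\epsilon)\rceil$, set $d_2=\lceil 12\,d\ln(d+2)\rceil$, and put $\Delta=\frac{2d_2}{n}$. With these choices $d_2=\Theta(n^{\frac12-\epsilon}\log n)$ and $d\Delta=\Theta\!\left(d^2(\log d)/n\right)=\Theta(n^{-2\epsilon}\log n)\to 0$ as $n\to\infty$.

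The first task is to check the three conditions of Theorem~\ref{mainlowerbound} for all $n$ past some threshold $n_0(\epsilon)$. Condition~1 asks that $(4d+2)\ln d_2+2\ln 20\le d_2\le\frac{\sqrt n}{4}$: the left inequality holds eventually since $\ln d_2\le 2\ln d$ for large $d$, whence $(4d+2)\ln d_2\le 10\,d\ln d< 12\,d\ln(d+2)-2\ln 20$, while $d_2=\Theta(n^{\frac12-\epsilon}\log n)=o(\sqrt n)$ because $\epsilon>0$. (In particular this also secures $2d\le d_2\le n$, which is what the proof of Theorem~\ref{mainlowerbound} uses when it invokes Theorem~\ref{reducingtosinglevariabletheorem}.) Condition~2, $10(d+1)^2\Delta^2e^{2d\Delta}\le 1$, holds because $(d\Delta)^2=\Theta(n^{-4\epsilon}(\log n)^2)\to 0$ and $e^{2d\Delta}\to 1$. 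For Condition~3 I would use the constants $C_{t1},C_{t2}$ supplied by Theorem~\ref{numericalintegrationerrortheorem}: the term $C_{t1}(d\Delta)^2e^{2td\Delta}\to 0$, and $C_{t2}\,d(d\Delta)^{t+1}=\Theta\!\left(n^{\frac12-\epsilon-2\epsilon(t+1)}(\log n)^{t+1}\right)$, which tends to $0$ precisely because the choice $t=\lceil 1/(4\epsilon)\rceil$ forces $2\epsilon(t+1)>\tfrac12-\epsilon$. Hence all three conditions hold for $n\ge n_0(\epsilon)$, so Theorem~\ref{mainlowerbound} gives that no polynomial $g$ of degree at most $\frac{d}{2}$ satisfies $\tilde{E}_{2n}[g^2]<0$.

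Next I would deduce that $\tilde{E}_{2n}$ is a genuine pseudo-expectation. For any $A\subseteq[2n]$ and any polynomial $g_A$ in the variables $\{x_{ij}\}$ with $|A|+\deg(g_A)\le\frac{d}{2}$, apply the conclusion just obtained to $g=\bigl(\prod_{j\in A}z_j\bigr)g_A$, which has degree at most $\frac{d}{2}$; since $\bigl(\prod_{j\in A}z_j\bigr)^2g_A^2=\bigl(\prod_{j\in A}w_j\bigr)g_A^2$, this gives $\tilde{E}_{2n}\!\bigl[\bigl(\prod_{j\in A}w_j\bigr)g_A^2\bigr]=\tilde{E}_{2n}[g^2]\ge 0$, which is exactly the hypothesis of Lemma~\ref{decompositionlemma}. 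Therefore $\tilde{E}_{2n}$ gives valid degree-$d$ pseudo-expectation values for the ordering principle equations on $2n$ elements, and by the proposition that a system of polynomial equations cannot simultaneously admit degree-$d$ pseudo-expectation values and a degree-$d$ SOS proof of infeasibility, there is no degree-$d$ SOS proof of the ordering principle on $2n$ elements once $n\ge n_0(\epsilon)$. Since $d=\Theta(n^{\frac12-\epsilon})=\Theta\!\bigl((2n)^{\frac12-\epsilon}\bigr)$, this is a degree $\Omega\!\bigl(m^{\frac12-\epsilon}\bigr)$ lower bound for the ordering principle on $m$ elements, valid for all even $m\ge 2n_0(\epsilon)$.

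Finally I would package this for every $m\in\mathbb N$. Odd $m$ reduces to the even case $m-1$: substituting $x_{im}=1$ and $x_{mi}=0$ for all $i<m$, together with $z_m=\sqrt{m-2}$, into a degree-$d$ SOS refutation of the ordering principle on $m$ elements turns every axiom into either $0$ or an axiom of the ordering principle on $m-1$ elements and every square into a square, producing a degree-$d$ SOS refutation on $m-1$ elements; hence a lower bound for $m-1$ elements forces the same lower bound for $m$ elements. Choosing $C_\epsilon$ small enough absorbs the constant hidden in $\Omega(\cdot)$ for $m\ge 2n_0(\epsilon)$, and shrinking $C_\epsilon$ further covers the finitely many smaller $m$, using that each such infeasible system needs some fixed positive SOS degree to refute. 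The genuine work is the parameter bookkeeping of the second paragraph; I expect the one slightly delicate point to be threading the tension between Conditions~1 and~3 of Theorem~\ref{mainlowerbound} --- taking $t$ large relative to $1/\epsilon$ and $d_2\asymp d\log d$ (not $d_2\asymp d$) so that the numerical-integration error term $d(d\Delta)^{t+1}$ is negligible while $d_2$ still stays below $\sqrt n/4$.
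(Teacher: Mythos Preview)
Your proposal is correct and follows exactly the approach the paper intends: the paper states this corollary immediately after Theorem~\ref{mainlowerbound} without a written-out proof, leaving the parameter choices implicit, and you have supplied them (with $t\asymp 1/\epsilon$, $d_2\asymp d\log d$, $d\asymp n^{1/2-\epsilon}$) together with the routine check that the three hypotheses hold. Your route through Lemma~\ref{decompositionlemma} and the odd-$m$ substitution are the natural completions of what the paper leaves to the reader; the only pedantic remark is that $(\prod_{j\in A} z_j)^2 g_A^2$ and $(\prod_{j\in A} w_j)g_A^2$ are not equal as polynomials but have equal $\tilde{E}_{2n}$-values by Definition~\ref{candidateEdefinition}(3), which is all you need.
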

\section{Conclusion}
In this paper, we analyzed the performance of SOS for proving the ordering principle, showing that SOS requires degree roughly $\sqrt{n}$ to prove the ordering principle on $n$ elements. This shows that in terms of degree, SOS is more powerful than resolution, polynomial caluclus, and the Sherali-Adams hierarchy, but SOS still requires high degree to prove the ordering principle. While this mostly resolves the question of how powerful SOS is for proving the ordering principle, there are several open questions remaining including the following:
\begin{enumerate}
\item Can we find a tight example for the size/degree trade-off for SOS which was recently shown by Atserias and Hakoniemi \cite{AH18}?
\item Can we prove SOS lower bounds for the graph ordering principle on expanders?
\end{enumerate}

\begin{appendix}
\section{Analyzing the Ordering Principle with Boolean Variables}\label{booleanvariableappendix}
In this appendix, we describe how to modify the ordering principle equations so that they only have Boolean variables. We then describe how to modify the pseudo-expectation values and the SOS lower bound proof for these equations.
\subsection{Equations for the ordering principle with Boolean auxiliary variables}
To encode the negation of the ordering principle using only Boolean variables, we simply replace each $z_j^2$ with a sum of squares of Boolean auxiliary variables. This gives us the following equations for the negation of the ordering principle:
\begin{enumerate}
\item We have variables $x_{ij}$ where we want that $x_{ij} = 1$ if $a_i < a_j$ and $x_{ij} = 0$ if $a_i > a_j$. We also have auxiliary variables $\{z_{jk}: j \in [n], k \in [m]\}$ where $m \geq n-2$.
\item $\forall i \neq j, x^2_{ij} = x_{ij}$ and $\forall j \in [n], \forall k \in [m], z_{jk}^2 = z_{jk}$ (variables are Boolean)
\item $\forall i \neq j, x_{ij} = 1 - x_{ji}$ (ordering)
\item For all distinct $i,j,k$, $x_{ij}x_{jk}(1 - x_{ik}) = 0$ (transitivity)
\item $\forall j, \sum_{i \neq j}{x_{ij}} = 1 + \sum_{k=1}^{m}{z_{jk}^2}$ (for all $j \in [n]$, $a_j$ is not the minimum element of $\{a_1,\dots,a_n\}$) 
\end{enumerate}
\subsection{Pseudo-expectation values with Boolean auxiliary variables}
In order to give pseudo-expectation values for these equations, we need to give pseudo-expectation values for polynomials involving the auxiliary variables. The idea for this is as follows. Letting $w_j = \left(\sum_{i \neq j}{x_{ij}}\right) - 1$, we want that $w_j$ of the auxiliary variables $\{z_{jk}: k \in [m]\}$ are $1$. If $w_j \in [0,m] \cap \mathbb{Z}$, if we choose which of these auxiliary variables are $1$ at random, 
\begin{enumerate}
\item $Pr(z_{j1} = 1) = \frac{w_j}{m}$, 
\item $Pr(z_{j1} = 1,z_{j2} = 1) = \frac{w_j(w_j - 1)}{m(m-1)}$
\item More generally, for all $K \subseteq [m]$, $Pr(\forall k \in K, z_{jk} = 1) = \frac{\prod_{a=0}^{|K|-1}{(w_j - a)}}{\prod_{a=0}^{|K|-1}{(m - a)}}$
\end{enumerate}  
Note that these expressions are still defined for other $w_j$ including $w_j = -1$ (though in this case they aren't actual probabilities over a distribution of solutions). Based on this, we have the following candidate pseudo-expectation values:
\begin{definition}[Candidate pseudo-expectation values with Boolean auxiliary varialbes]\label{booleancandidateEdefinition} \ 
\begin{enumerate}
\item For all polynomials $p(\{x_{ij}: i,j \in [n], i \neq j\})$, we take $\tilde{E}_{n}[p] = E_{U_n}[p]$
\item For all $j \in [n]$, for all $K \subseteq [m]$ and all polynomials $p$ which do not contain any of the auxiliary variables $\{z_{jk}: k \in [m]\}$, we take 
\[
\tilde{E}\left[\left(\prod_{k \in K}{z_{jk}}\right)p\right] = \frac{\tilde{E}\left[\left(\prod_{a=0}^{|K|-1}{(w_j - a)}\right)p\right]}{\prod_{a=0}^{|K|-1}{(m - a)}}
\]
\end{enumerate}
\end{definition}
\subsection{Reducing to one variable with Boolean auxiliary variables}
Unfortunately, our lower bound for the ordering principle equations in Definition \ref{def:OPequations} does not directly imply a lower bound for the ordering principle equations with Boolean auxiliary variables. That said, we can still reduce the problem to one variable by using the same techniques we used to prove Theorem \ref{reducingtosinglevariabletheorem}. The resulting theorem is similar but not quite the same as Theorem \ref{reducingtosinglevariabletheorem}.
\begin{theorem}\label{modifiedreducingtosinglevariabletheorem}
For all $d,d_2,n,m \in \mathbb{N}$ such that $2d \leq d_2 \leq n$ and $m \geq 15nd$, if there is a polynomial $g$ of degree at most $\frac{d}{2}$ such that $\tilde{E}_{2n}[g^2] < 0$ then there is a polynomial $g_{*}: \mathbb{R} \to \mathbb{R}$ of degree at most $d$ and a $j \in [d]$ such that 
\[
\sum_{u=1}^{n-d_2}{\frac{\binom{n-u-1}{d_2-1}}{\binom{2n-1}{d_2-1}}\left(\frac{\prod_{a=1}^{j}{(u-a)}}{j!}\right)g^2_{*}(u)} < 1.2g^2_{*}(0)
\]
\end{theorem}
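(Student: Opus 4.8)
The plan is to run the proof of Theorem~\ref{reducingtosinglevariabletheorem} essentially verbatim, changing only the step that handles the auxiliary variables. Starting from a polynomial $g$ of degree $\le \frac{d}{2}$ with $\tilde{E}_{2n}[g^2] < 0$, I would first symmetrize: $\tilde{E}_{2n}$ is invariant under $S_{2n}$ acting on $[2n]$ and, for each $j$, under permutations of the auxiliary index $k \in [m]$. Applying the square-reduction theorem (Theorem~\ref{squarereductiontheorem}), now also with respect to the $S_m$-symmetry for each $j$, I may assume $g$ is symmetric under permutations of $[2n]\setminus I'$ for a distinguished set $I'$ of at most $2d$ indices, and it decomposes $\tilde{E}_{2n}[g^2]$ as an exact sum of squares $\sum \tilde{E}_{2n}[g_K^2]$ in which, in the auxiliary variables $\{z_{1k}\}$, each summand is symmetric on $[m]\setminus K$ and orthogonal to symmetric functions along $K$ (the ``zero-sum'' condition). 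As in Sections~\ref{decomposinggsubsection}--\ref{changingvariablessubsection} I would pad $I'$ to a set of size $d_2$, reorder so it is $[d_2]$, and fix the ordering $x_1 < \cdots < x_{d_2}$ on it.

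Next I would locate the negativity. A summand that is symmetric in $\{z_{1k}\}$ contributes non-negatively: substituting $w_1 = \bigl(\sum_{i\neq 1} x_{i1}\bigr)-1$ for $W_1 = \sum_k z_{1k}$ turns $\tilde{E}_{2n}[g_K^2]$ into an honest expectation over $U_{2n}$, using the identity $\tilde{E}_{2n}[W_1^s q] = \tilde{E}_{2n}[w_1^s q]$ that follows from Definition~\ref{booleancandidateEdefinition} together with the Stirling expansion $x^s = \sum_r S(s,r)\prod_{a=0}^{r-1}(x-a)$. Likewise, for every distinguished $j \ge 2$ we have $w_j \ge j-2 \ge 0$ under the ordering, so $\prod_{a=0}^{|K|-1}(w_j-a) = |K|!\binom{w_j}{|K|}\ge 0$ and the $\{z_{jk}\}$-summands also contribute non-negatively. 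Hence some non-symmetric summand in the $\{z_{1k}\}$ variables is negative; fix such a summand, set $j := |K| \in [d]$, and let $g_K$ denote its $x$-part, which by symmetry and the ordering depends only on the gap variables $u_0,\dots,u_{d_2}$. Since $\binom{w_1}{|K|}\ge 0$ for $w_1 \ge 0$ while $\binom{-1}{|K|} = (-1)^{|K|}$, a negative summand forces $j$ to be odd.

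Then comes the key computation, which replaces the factor $(u-1)$ of Theorem~\ref{reducingtosinglevariabletheorem} by $\binom{u-1}{j}$. Using Definition~\ref{booleancandidateEdefinition} and $z_{1k}^2 = z_{1k}$, the ``main part'' of the negative summand is
\[
\frac{1}{\prod_{a=0}^{j-1}(m-a)}\,\tilde{E}_{2n}\!\left[\Bigl(\prod_{i=1}^{d_2-1} x_{i(i+1)}\Bigr)\Bigl(\prod_{a=0}^{j-1}(w_1-a)\Bigr) g_K^2\right],
\]
and under the ordering $w_1 = u_0-1$, so $\prod_{a=0}^{j-1}(w_1-a) = \prod_{a=1}^{j}(u_0-a) = j!\binom{u_0-1}{j}$; as in Section~\ref{changingvariablessubsection} this is a positive multiple of $E_{u_0,\dots,u_{d_2}:\,\sum_{i=0}^{d_2} u_i = 2n-d_2}\!\bigl[\binom{u_0-1}{j}\,g_K^2\bigr]$. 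Defining $g_*(u) = E_{u_1,\dots,u_{d_2}:\,\sum_{i=1}^{d_2} u_i = 2n-d_2-u}[g_K^2(u,u_1,\dots,u_{d_2})]$ --- a polynomial of degree $\le d$ by the lemma on expectations over compositions, and nonnegative on $[0,2n-d_2]\cap\ZZ$ --- I would conclude $E_{\Omega_{2n,d_2}}[\binom{u-1}{j}\,g_*(u)] < 0$ up to an error $\varepsilon\,g_*(0)$ controlled by $m$. Because $j$ is odd the $u=0$ term is $-\frac{\binom{2n-1}{d_2-1}}{\binom{2n}{d_2}}g_*(0)$ and the terms $u\in[1,j]$ vanish, so $\sum_{u\ge j+1}\frac{\binom{2n-u-1}{d_2-1}}{\binom{2n-1}{d_2-1}}\binom{u-1}{j}\frac{g_*(u)}{g_*(0)} < 1+\varepsilon$. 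Squaring this termwise (valid since all terms are $\ge 0$), using $\binom{u-1}{j}^2 \ge \binom{u-1}{j}$ for $u\ge j+1$, then Lemma~\ref{2ntonlemma} to bound $\bigl(\binom{2n-u-1}{d_2-1}/\binom{2n-1}{d_2-1}\bigr)^2 \ge \binom{n-u-1}{d_2-1}/\binom{n-1}{d_2-1} \ge \binom{n-u-1}{d_2-1}/\binom{2n-1}{d_2-1}$, and discarding the terms with $u > n-d_2$, I would obtain
\[
\sum_{u=1}^{n-d_2} \frac{\binom{n-u-1}{d_2-1}}{\binom{2n-1}{d_2-1}}\left(\frac{\prod_{a=1}^{j}(u-a)}{j!}\right)g_*^2(u) < (1+\varepsilon)^2\,g_*^2(0),
\]
which is the claimed inequality provided $\varepsilon$ is small enough that $(1+\varepsilon)^2 \le 1.2$.

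The main obstacle --- and the reason for the hypothesis $m \ge 15nd$ and for the weaker constant $1.2$ in place of $1$ --- is making that last step precise. Unlike the non-Boolean case, where the $z_j$-monomials are exactly $\tilde{E}$-orthogonal and the reduction to a single variable is lossless, a zero-sum summand $g_K$ in the Boolean variables $\{z_{1k}\}$ is not of the clean form $(\prod_{k\in K}z_{1k})\cdot(\text{polynomial in the }u_i)$ but a centered combination (roughly $\prod_{k\in K}(z_{1k}-W_1/m)$ times a polynomial in $W_1$ and $x$), so $\tilde{E}_{2n}[g_K^2]$ equals $\frac{1}{\prod_{a=0}^{j-1}(m-a)}E[\binom{u_0-1}{j}(\text{square})]$ only up to further terms carrying extra powers of $1/m$. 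Bounding these error terms and checking that $m \ge 15nd$ makes their total at most a $0.1$-fraction of the main term (so the $\varepsilon$ above satisfies $(1+\varepsilon)^2 \le 1.2$) is the only genuinely new work; everything else is a transcription of the proof of Theorem~\ref{reducingtosinglevariabletheorem}.
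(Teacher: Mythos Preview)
Your broad outline matches the paper's: symmetrize, pad to $d_2$ distinguished indices, fix their ordering, isolate a bad piece that singles out the index $1$, push everything into the gap variables $u_0,\dots,u_{d_2}$, and then reduce to a single variable and squeeze from $2n$ down to $n$ via Lemma~\ref{2ntonlemma}. The genuine divergence is in how you deal with the Boolean auxiliary variables $\{z_{1k}\}$, and there the paper's device is both different and simpler than yours.

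You propose to take a zero--sum summand $g_K$ from the $S_m$--square--reduction, identify a ``main term'' $\frac{1}{\prod_{a=0}^{j-1}(m-a)}\tilde E_{2n}\bigl[\prod_{a=0}^{j-1}(w_1-a)\,(\text{$x$--part})^2\bigr]$, and treat the rest as an error to be bounded by $m\geq 15nd$. But a zero--sum summand is not literally a product of a $z$--factor and an $x$--factor, so what you call ``the $x$--part of $g_K$'' is not well--defined, and the promised error analysis is exactly the piece you leave undone. The paper sidesteps this entirely. After the $S_m$--reduction to a distinguished subset $K\subseteq[m]$, it does \emph{not} work with zero--sum pieces; instead it splits by the idempotent decomposition
\[
1=\sum_{K_1\subseteq K}\Bigl(\prod_{k\in K_1}z_{1k}\Bigr)\Bigl(\prod_{k\in K\setminus K_1}(1-z_{1k})\Bigr),
\]
so that one may assume $g_{\{1\}}=(\prod_{k\in K_1}z_{1k})(\prod_{k\in K_2}(1-z_{1k}))\,p_{\{1\}}$ with $p_{\{1\}}$ free of the $z_{1k}$ for $k\in K$. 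Using Definition~\ref{booleancandidateEdefinition}, $\tilde E_{2n}$ of the square is then computed \emph{exactly}: one gets a factor $\prod_{a=1}^{|K_1|}(u-a)$ from the $K_1$--part and a second factor $\prod_{a=1}^{|K_2|}(m+2-u-a)$ from the $K_2$--part. So in the paper the parameter $j$ is $|K_1|$ (not $|K|$), there is a second parameter $j_2=|K_2|$, and the only place $m\geq 15nd$ enters is the elementary bound
\[
\Bigl(\prod_{a=1}^{j_2}\tfrac{m+2-u-a}{m+2-a}\Bigr)^2\geq \tfrac{12}{14}\quad\text{for }u\leq n-d_2,
\]
which is what produces the constant $1.2$. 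There is no ``main term plus $1/m$--error'' computation at all.

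In short: the end game (squaring termwise, using $\binom{u-1}{j}^2\geq\binom{u-1}{j}$, invoking Lemma~\ref{2ntonlemma}, replacing $\binom{n-1}{d_2-1}$ by $\binom{2n-1}{d_2-1}$) is the same, but your treatment of the auxiliary variables is vaguer and leaves the hardest step unproved, whereas the paper's idempotent trick plus the centered variables $y_{jK}=\prod_{k\in K}z_{jk}-\prod_{a=0}^{|K|-1}(w_j-a)/\prod_{a=0}^{|K|-1}(m-a)$ makes every identity exact and reduces the role of $m\geq 15nd$ to a one--line estimate on the explicit $K_2$--factor.
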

\begin{proof}[Proof sketch]
Having Boolean auxiliary variables affects each part of the proof of Theorem \ref{reducingtosinglevariabletheorem} as follows:
\begin{enumerate}
\item Since the equations and pseudo-expectation values are still symmetric under permutations of $[2n]$, the argument in Section \ref{distinguishedindicessubsection} that we can reduce to the case when $g$ is symmetric under permutations of $[2n] \setminus I$ for some subset $I \subseteq [2n]$ where $|I| \leq d$ still applies.
\item In Section \ref{decomposinggsubsection}, we decomposed $\tilde{E}_{2n}[g^2]$ as 
\[
\tilde{E}_{2n}[g^2] = \sum_{A \subseteq [2n]}{\tilde{E}_{2n}\left[\left(\prod_{j \in A}z^2_j\right)g^2_{A}\right]}
\]
Here we can do a similar decomposition but it is somewhat more complicated.
\begin{definition}
Given a $j \in [n]$ and a nonempty $K \subseteq [m]$, define 
\[
y_{jK} = \prod_{k \in K}{z_{jk}} - \frac{\prod_{a=0}^{|K|-1}{(w_j - a)}}{\prod_{a=0}^{|K|-1}{(m - a)}}
\]
\end{definition}
\begin{proposition}
For any $j \in [2n]$, any nonempty $K \subseteq [m]$, and any polynomial $p$ which does not depend on the auxiliary variables $\{z_{jk}: k \in [m]\}$, 
$\tilde{E}_{2n}[y_{jK}p] = 0$
\end{proposition}
With this proposition in mind, we decompose $g$ as $g = \sum_{A \subseteq [2n]}{g_A}$ where 
\[
g_A = \sum_{\{K_j: j \in A\}}{\left(\prod_{j \in A}{y_{jK_{j}}}\right)p_{\{K_j: j \in A\}}} (\text{where each } K_j \text{ is a nonempty subset of } [m])
\]
where each $K_j$  is a nonempty subset of $[m]$. With this decomposition, we have that 
\[
\tilde{E}_{2n}[g^2] = \sum_{A \subseteq [2n]}{\tilde{E}_{2n}[g_A^2]}
\]
Note that unlike before, here we have the auxiliary variables be part of $g_A$. That said, this allows us to assume that there are no auxiliary variables $z_{jk}$ where $j \notin A$ and that everything is symmetric under permutations of $[2n] \setminus I'$ where $|I'| \leq 2d$.
\item In Section \ref{changingvariablessubsection}, we restricted ourselves to a single ordering for the distinguished indices and expressed everything in terms of the new variables $u_0,\dots,u_{d_2}$. We can still do this with Boolean auxiliary variables, but this no longer removes all of the auxiliary variables. What we get is a polynomial $g_{\{1\}}(u_0,\ldots,u_{d_2},\{z_{j'k}: j' \in [d_2]\})$ of degree at most $\frac{d}{2}$ such that 
\begin{align*}
&\tilde{E}_{2n}\left[\left(\prod_{i=1}^{d_2-1}{x_{i(i+1)}}\right)g^2_{\{1\}}\right] \\
&= \frac{1}{{d_2}!}E_{u_0,\ldots,u_{d_2} \in \mathbb{N} \cup \{0\}: \sum_{j=0}^{d_2}{u_j} = 2n-d_2}\left[\tilde{E'}_{u_0,\ldots,u_{d_2}}[g^2_{\{1\}}(u_0,\ldots,u_{d_2},\{z_{j'k}: j' \in [d_2]\})]\right] < 0
\end{align*}
where $\tilde{E'}_{u_0,\ldots,u_{d_2}}$ gives the pseudo-expectation values of the auxiliary variables for given values of $u_0,\ldots,u_{d_2}$.
\item In Section \ref{reductingtosinglevariablesubsection}, we took 
\[
g_{*}(u_0) = E_{u_1,\ldots,u_{d_2} \in \mathbb{N} \cup \{0\}: \sum_{j=1}^{d_2}{u_j} = 2n - d_2 - u_0}[g_{\{1\}}(u_0,\ldots,u_{d_2})^2]
\]
Before we can do this here, we need to remove the auxiliary variables $\{z_{1k}:k \in [m]\}$. We can do this as follows:
\begin{enumerate}
\item Observe that looking at the auxiliary variables $\{z_{1k}:k \in [m]\}$, $\tilde{E'}_{u_0,\ldots,u_{d_2}}$ (and thus $\tilde{E}_{2n}$) is symmetric under permutations of $[m]$. Using Theorem \ref{squarereductiontheorem}, we can assume that $g_{\{1\}}$ is symmetric (as far as the auxiliary variables $\{z_{1k}:k \in [m]\}$ are concerned) under permutations of $[m] \setminus K$ for some $K \subseteq [m]$ where $|K| \leq d$.
\item Breaking things into cases based on the values of the auxiliary variables $\{z_{1k}: k \in K\}$, we can assume that 
\begin{align*}
&g_{\{1\}}(u_0,\ldots,u_{d_2},\{z_{j'k}: j' \in [d_2]\}) = \\
&\left(\prod_{k \in K_1}{z_{1k}}\right)\left(\prod_{k \in K_2}{(1 - z_{1k})}\right)p_{\{1\}}(u_0,\ldots,u_{d_2},\{z_{j'k}: j' \in [2,d_2]\})
\end{align*}
for some $K_1,K_2 \subseteq [m]$ such that $K_1 \cap K_2 = \emptyset$ and $|K_1 \cup K_2| \leq d$.
\end{enumerate}
We now take 
\[
g_{*} = E_{u_1,\ldots,u_{d_2} \in \mathbb{N} \cup \{0\}: \sum_{j=1}^{d_2}{u_j} = 2n - d_2 - u_0}\left[\tilde{E'}_{u_0,\ldots,u_{d_2}}
[p^2_{\{1\}}(u_0,\ldots,u_{d_2},\{z_{j'k}: j' \in [2,d_2]\})]\right]
\]
and we have that 
\begin{enumerate}
\item $g_{*}(u_0)$ is a polynomial of degree at most $d$ in $u_0$.
\item For all $u_0 \in [0,2n-d_2] \cap \mathbb{Z}$, $g_{*}(u_0) \geq 0$.
\item $E_{\Omega_{2n,d_2}}\left[\left(\prod_{a=1}^{|K_1|}{(u-a)}\right)\left(\prod_{a=1}^{|K_2|}{(m+2-u-a)}\right)g_{*}(u)\right] < 0$
\end{enumerate}
Equivalently, taking $j = |K_1|$ and $j_2 = |K_2|$,
\[
\sum_{u=0}^{2n-d_2}{\frac{\binom{2n-u-1}{d_2-1}}{\binom{2n}{d_2}}\left(\prod_{a=1}^{j}{(u-a)}\right)\left(\prod_{a=1}^{j_2}{(m+2-u-a)}\right)g_{*}(u)} < 0
\]
Manipulating this gives
\[
\sum_{u=1}^{2n-d_2}{\frac{\binom{2n-u-1}{d_2-1}}{\binom{2n-1}{d_2-1}}\left(\frac{\prod_{a=1}^{j}{(u-a)}}{j!}\right)\left(\prod_{a=1}^{j_2}{\frac{m+2-u-a}{m+2-a}}\right)\left(\frac{g_{*}(u)}{g_{*}(0)}\right)} < 1
\]
which implies that
\[
\sum_{u=1}^{2n-d_2}{\left(\frac{\binom{2n-u-1}{d_2-1}}{\binom{2n-1}{d_2-1}}\right)^2\left(\frac{\prod_{a=1}^{j}{(u-a)}}{j!}\right)^2\left(\prod_{a=1}^{j_2}{\frac{m+2-u-a}{m+2-a}}\right)^2
\left(\frac{g_{*}(u)}{g_{*}(0)}\right)^2} < 1
\]
We now make the following observations:
\begin{enumerate}
\item By Lemma \ref{2ntonlemma}, for all $u \in \mathbb{N}$ such that $u \leq n - d_2$, $\left(\frac{\binom{2n-u-1}{d_2-1}}{\binom{2n-1}{d_2-1}}\right)^2 \geq \frac{\binom{n-u-1}{d_2-1}}{\binom{n-1}{d_2-1}}$.
\item For all $u \in \mathbb{N}$, $\left(\frac{\prod_{a=1}^{j}{(u-a)}}{j!}\right)^2 \geq \frac{\prod_{a=1}^{j}{(u-a)}}{j!}$.
\item Since $m \geq 15nd$, for all $u \in \mathbb{N}$ such that $u \leq n - d_2$, 
\[
\left(\prod_{a=1}^{j_2}{\frac{m+2-u-a}{m+2-a}}\right)^2 \geq \left(\frac{14nd - n}{14nd}\right)^{2d} \geq \frac{12}{14}
\]
Putting everything together, 
\[
\sum_{u=1}^{n-d_2}{\frac{\binom{n-u-1}{d_2-1}}{\binom{2n-1}{d_2-1}}\left(\frac{\prod_{a=1}^{j}{(u-a)}}{j!}\right)g^2_{*}(u)} < 1.2g^2_{*}(0)
\]
as needed.
\end{enumerate}
\end{enumerate}
\end{proof}
\subsection{SOS lower bound with Boolean auxiliary variables}
When we have Boolean auxiliary variables, our SOS lower bound is modified as follows:
\begin{theorem}\label{modifiedlowerbound}
For all $d,d_2,t,n,m \in \mathbb{N}$ such that $m \geq 15nd$, if the following conditions hold for all $j \in [d]$
\begin{enumerate}
\item $(4d+2)ln(d_2) + 2ln(20) \leq d_2 \leq \frac{\sqrt{n'}}{4}$
\item $\frac{(n-1)!(n'-d_2)!}{(n'-1)!(n-d_2)!\binom{2j-1}{j}}{{\Delta}^2}(d+1)^{2}e^{2d(2j-1)\Delta} \leq \frac{1}{10}$
\item Letting $C_{t1}$ and $C_{t2}$ be the constants given by Theorem \ref{numericalintegrationerrortheorem}, 
\[
C_{t1}(d\Delta)^{2}e^{2td\Delta} + C_{t2}d(d\Delta)^{t+1} \leq \frac{1}{2}
\]
\end{enumerate}
where $n' = n - 2d + 2$ and $\Delta = \frac{2d_2}{n'}$ then there is no polynomial $g$ of degree at most $\frac{d}{2}$ such that $\tilde{E}_{2n}[g^2] < 0$.
\end{theorem}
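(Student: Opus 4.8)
The plan is to run the proof of Theorem~\ref{mainlowerbound} with Theorem~\ref{reducingtosinglevariabletheorem} replaced by Theorem~\ref{modifiedreducingtosinglevariabletheorem}; the only genuinely new work is to get rid of the extra binomial factor $\binom{u-1}{j}$ that the latter produces.

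\textbf{Step 1 (reduce to one variable).} Suppose for contradiction that some polynomial $g$ of degree at most $\tfrac d2$ has $\tilde E_{2n}[g^2]<0$. Hypothesis~1 forces $2d\le d_2\le n$ (from $(4d+2)\ln d_2\le d_2$ and $d_2\le\tfrac{\sqrt{n'}}4<n$) and $m\ge 15nd$ is assumed, so Theorem~\ref{modifiedreducingtosinglevariabletheorem} gives a polynomial $g_*$ of degree at most $d$ and an index $j\in[d]$ satisfying the single-variable inequality of that theorem; it has the same shape as the inequality of Theorem~\ref{reducingtosinglevariabletheorem} except for the extra weight $\binom{u-1}{j}$ inside the sum and the constant $1.2$ on the right. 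I would put $n'=n-2d+2$, $\Delta=\tfrac{2d_2}{n'}$, $g_2(x)=g_*\!\left(\tfrac x\Delta+1\right)$, so that $\deg g_2\le d$ and $g_2(-\Delta)=g_*(0)$.

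\textbf{Step 2 (remove $\binom{u-1}{j}$).} For integers $u\ge 2j$ one has $(u-2)(u-3)\cdots(u-j)\ge(2j-2)(2j-3)\cdots j$, hence $\binom{u-1}{j}\ge\tfrac{\binom{2j-1}{j}}{2j-1}(u-1)$; discarding the nonnegative terms with $u<2j$ then bounds the left side of the Step~1 inequality below by $\tfrac{\binom{2j-1}{j}}{2j-1}$ times
\[
\sum_{u=1}^{n-d_2}\frac{\binom{n-u-1}{d_2-1}}{\binom{n-1}{d_2-1}}(u-1)g_*^2(u)\ -\ \sum_{u=1}^{2j-1}\frac{\binom{n-u-1}{d_2-1}}{\binom{n-1}{d_2-1}}(u-1)g_*^2(u).
\]
After the rescaling $g_*(u)=g_2((u-1)\Delta)$, the substitution $n\mapsto n'$ (which only enlarges the first sum, the binomial ratios being monotone the right way), and an application of Theorems~\ref{sumtointegraltheorem} and~\ref{approximatestatementtheorem}, the first sum is $\ge\tfrac{3}{20\Delta^2}\int_0^\infty g_2^2(x)\,xe^{-x}\,dx\ge\tfrac32 g_*^2(0)$, using hypotheses~1 and~3 and hypothesis~2 at $j=1$. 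The second sum touches only the $2j-1$ points $x=(u-1)\Delta\in[0,(2j-2)\Delta]$; writing $g_2=\sum_k a_k h_k$ and invoking Lemma~\ref{nearzerobound} bounds $g_2^2$ there by $(d+1)e^{2d(2j-1)\Delta}\int_0^\infty g_2^2(x)\,xe^{-x}\,dx$, and hypothesis~2 (for the present $j$) is precisely the statement that this correction, after multiplication by the weight sum $\sum_{u=1}^{2j-1}\tfrac{\binom{n-u-1}{d_2-1}}{\binom{n-1}{d_2-1}}(u-1)$ and division by $\tfrac{3}{20\Delta^2}\int_0^\infty g_2^2(x)\,xe^{-x}\,dx$, stays below $\tfrac15$. (Equivalently one could skip the peeling and redo the continuous analysis of Section~\ref{continuousanalysissection} directly for the density $x^je^{-x}$, with normalized generalized Laguerre polynomials as the orthonormal basis; the resulting hypotheses are the same.)

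\textbf{Step 3 (contradiction).} Combining the two parts of Step~2 with $\tfrac{\binom{2j-1}{j}}{2j-1}\ge1$, the left side of the Step~1 inequality is at least $(1-\tfrac15)\cdot\tfrac32\,g_*^2(0)>1.2\,g_*^2(0)$, contradicting Step~1. Hence no polynomial $g$ of degree at most $\tfrac d2$ has $\tilde E_{2n}[g^2]<0$, which is the theorem; the corresponding $\Omega(n^{1/2-\epsilon})$-degree SOS lower bound for the Boolean ordering principle equations then follows as the corollary does in the non-Boolean case.

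\textbf{Main obstacle.} Everything above is routine except the bookkeeping of Step~2, which is the real content. I expect the hard part to be checking that the three single-variable estimates of Sections~\ref{continuousanalysissection}--\ref{continuouserror} --- the approximate statement, the numerical-integration error of Theorem~\ref{numericalintegrationerrortheorem}, and the distribution-difference bound of Theorem~\ref{differencetheorem} --- all survive the replacement $n\mapsto n'=n-2d+2$; that the $\binom{u-1}{j}$-peeling together with its small-$u$ correction is absorbed by hypothesis~2 \emph{uniformly over all $j\in[d]$}; and that the particular combination of factors appearing in hypothesis~2 (the $\binom{2j-1}{j}$, the $e^{2d(2j-1)\Delta}$, and the binomial ratio $\tfrac{(n-1)!\,(n'-d_2)!}{(n'-1)!\,(n-d_2)!}$) is exactly what makes this go through. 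Running through this also requires re-tracing the auxiliary-variable reduction behind Theorem~\ref{modifiedreducingtosinglevariabletheorem} --- which used $m\ge15nd$ and an application of Theorem~\ref{squarereductiontheorem} to the variables $\{z_{1k}\}$ --- and verifying it stays consistent with the constants chosen here. No individual step is conceptually hard; keeping all the constants aligned is where the work lies.
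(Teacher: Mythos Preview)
Your overall strategy --- reduce via Theorem~\ref{modifiedreducingtosinglevariabletheorem}, strip the binomial weight $\binom{u-1}{j}$, and feed the result into Theorems~\ref{sumtointegraltheorem} and~\ref{approximatestatementtheorem} --- is the right shape, but Step~2 does not match hypothesis~2 and this is a genuine gap, not just bookkeeping.

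The paper does \emph{not} split into a main sum plus a small-$u$ correction. Instead it proves the pointwise inequality (Lemma~\ref{shiftingcomparisonlemma})
\[
\frac{1}{j!}\prod_{a=1}^{j}(u-a)\ \ge\ \binom{2j-1}{j}\,(u-2j+1)\qquad\text{for all }u\in\mathbb{N},
\]
and then performs the \emph{shift} $k=u-2j+1$. This shift is what manufactures every factor in hypothesis~2: the re-indexed sum has parameter $n'=n-2j+2$, and the renormalisation $\binom{n-1}{d_2-1}\to\binom{n'-1}{d_2-1}$ produces exactly the factorial ratio $\frac{(n-1)!(n'-d_2)!}{(n'-1)!(n-d_2)!}$; the rescaling $g_2(x)=g_*\bigl(\tfrac{x}{\Delta}+2j-1\bigr)$ sends $g_*(0)$ to $g_2\bigl(-(2j-1)\Delta\bigr)$, whose bound via Lemma~\ref{nearzerobound} produces $e^{2d(2j-1)\Delta}$; and the $\binom{2j-1}{j}$ comes straight from the lemma. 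Theorem~\ref{sumtointegraltheorem} then applies to the shifted sum without any correction term, and one reads off the contradiction $0.15<0.12$ directly from hypothesis~2.

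Your peeling approach, by contrast, needs the correction $\sum_{u=1}^{2j-1}\frac{\binom{n-u-1}{d_2-1}}{\binom{n-1}{d_2-1}}(u-1)g_*^2(u)$ to be small, and bounding it via Lemma~\ref{nearzerobound} leads to a condition of the form $(j-1)(2j-1)(d+1)^2\Delta^2 e^{2d(2j-1)\Delta}\lesssim 1$. This is \emph{not} implied by hypothesis~2: hypothesis~2 divides by $\binom{2j-1}{j}$, so for large $j$ it permits $(d+1)^2\Delta^2 e^{2d(2j-1)\Delta}$ to be as large as a constant times $\binom{2j-1}{j}$, which is exponential in $j$; your condition requires it to be $O\bigl(\tfrac{1}{j^2}\bigr)$. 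Already at $j=2$ one checks that hypothesis~2 can hold while your correction bound fails. So your assertion that ``hypothesis~2 is precisely the statement'' needed is false, and the argument as written does not prove the stated theorem. (Your parenthetical alternative of redoing the continuous analysis for the density $x^{j}e^{-x}$ is also not what the paper does; the shift reduces everything back to the $xe^{-x}$ case already analysed.)
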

\begin{remark}
We believe the condition on $m$ is an artefact of the proof and that we should have essentially the same lower bound as long as $m \geq n-2$, though proving this would require modifying the analysis further.
\end{remark}
\begin{proof}
Assume there is a polynomial $g$ of degree at most $\frac{d}{2}$ such that $\tilde{E}_{2n}[g^2] < 0$. By Theorem \ref{modifiedreducingtosinglevariabletheorem}, since $2d \leq d_2 \leq n$, there is a polynomial $g_{*}: \mathbb{R} \to \mathbb{R}$ of degree at most $d$ and a $j \in [d]$ such that 
\[
\sum_{u=1}^{n-d_2}{\frac{\binom{n-u-1}{d_2-1}}{\binom{2n-1}{d_2-1}}\left(\frac{\prod_{a=1}^{j}{(u-a)}}{j!}\right)g^2_{*}(u)} < 1.2g^2_{*}(0)
\]
We transform this left side of this equation into the same form as the left hand side of Theorem \ref{sumtointegraltheorem} using the following lemma.
\begin{lemma}\label{shiftingcomparisonlemma}
For all $j,u \in \mathbb{N}$, $\frac{1}{j!}\left(\prod_{a=1}^{j}{(u-a)}\right) \geq \binom{2j-1}{j}(u-2j+1)$
\end{lemma}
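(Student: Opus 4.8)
The plan is to recognize that the left-hand side is just a binomial coefficient: $\frac{1}{j!}\prod_{a=1}^{j}(u-a) = \binom{u-1}{j}$, where $\binom{u-1}{j}$ is interpreted via the polynomial $\frac{(u-1)(u-2)\cdots(u-j)}{j!}$ (so that it equals $0$ for $u \in \{1,\dots,j\}$). With this, the claim becomes $\binom{u-1}{j} \geq \binom{2j-1}{j}\,(u - 2j + 1)$ for all integers $u \geq 1$, and I would split into the range $u \leq 2j-1$ and the range $u \geq 2j$.

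First I would dispose of the case $1 \leq u \leq 2j - 1$, which is immediate: here $u - 2j + 1 \leq 0$, so the right-hand side is $\leq 0$, while $\binom{u-1}{j} \geq 0$ (it equals $0$ when $u \leq j$ and is a genuine positive binomial coefficient when $j+1 \leq u \leq 2j-1$). Hence the inequality holds with room to spare.

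For the main range $u \geq 2j$, where both sides are nonnegative, I would use the telescoping identity $\binom{u-1}{j} - \binom{2j-1}{j} = \sum_{v=2j}^{u-1}\left(\binom{v}{j} - \binom{v-1}{j}\right) = \sum_{v=2j}^{u-1}\binom{v-1}{j-1}$, which follows from Pascal's rule; equivalently one runs an induction on $u$ with base case $u = 2j$, where both sides equal $\binom{2j-1}{j}$ since $\binom{2j}{j}-\binom{2j-1}{j}=\binom{2j-1}{j-1}=\binom{2j-1}{j}$. For each $v$ with $2j \leq v \leq u-1$ one has $\binom{v-1}{j-1} \geq \binom{2j-1}{j-1} = \binom{2j-1}{j}$ because $\binom{v-1}{j-1}$ is non-decreasing in $v$ once $v-1 \geq j-1$. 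Since the sum has $u - 2j$ terms, this yields $\binom{u-1}{j} \geq \binom{2j-1}{j} + (u-2j)\binom{2j-1}{j} = (u-2j+1)\binom{2j-1}{j}$, which is exactly what is needed.

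There is essentially no hard step here; the whole argument is elementary. The only points requiring care are the convention for $\binom{u-1}{j}$ when $u \leq j$ (so that the polynomial identity with $\frac{1}{j!}\prod_{a=1}^{j}(u-a)$ is literally correct, including the vanishing small cases) and the identity $\binom{2j-1}{j-1} = \binom{2j-1}{j}$, which makes the base case of the induction (equivalently, the first term of the telescoping sum) an exact equality rather than a strict inequality.
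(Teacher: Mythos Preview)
Your argument is correct. Both you and the paper split off the trivial range $u \le 2j-1$ and then induct from the base case $u=2j$, so the overall skeleton is the same. The difference is in the inductive step: the paper uses a \emph{multiplicative} comparison, writing $\frac{1}{j!}\prod_{a=1}^{j}((k+1)-a)=\frac{k}{k-j}\cdot\frac{1}{j!}\prod_{a=1}^{j}(k-a)$ and then checking the ratio inequality $\frac{k}{k-j}\ge\frac{k-2j+2}{k-2j+1}$ for $k\ge 2j$, whereas you use an \emph{additive} comparison, recognizing the left side as $\binom{u-1}{j}$ and telescoping via Pascal's rule so that each increment is $\binom{v-1}{j-1}\ge\binom{2j-1}{j-1}=\binom{2j-1}{j}$. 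Your route is arguably a touch cleaner, since the monotonicity of $\binom{v-1}{j-1}$ in $v$ is immediate, while the paper's ratio inequality requires a short algebraic verification; on the other hand, the paper's version works directly with the product form without ever naming the binomial coefficient. Either way the content is the same elementary induction.
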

\begin{proof}
We prove this lemma by induction. When $u \leq 2j-1$, the result is trivial. When $u = 2j$,
\[
\frac{1}{j!}\left(\prod_{a=1}^{j}{(u-a)}\right) = \binom{2j-1}{j} = \binom{2j-1}{j}(u-2j+1)
\] 
Now assume the result is true for $u = k$ where $k \geq 2j$ and consider the case when $u = k+1$. By the inductive hypothesis,
\begin{align*}
\frac{1}{j!}\left(\prod_{a=1}^{j}{((k+1)-a)}\right) &= \frac{k}{k-j}\left(\frac{1}{j!}\prod_{a=1}^{j}{(k-a)}\right) 
\geq \frac{k}{k-j}\left(\binom{2j-1}{j-1}(k-2j+1)\right) \\
&\geq \frac{k-2j+2}{k-2j+1}\left(\binom{2j-1}{j-1}(k-2j+1)\right) = \binom{2j-1}{j}(k-2j+2)
\end{align*}
\end{proof}
Applying Lemma \ref{shiftingcomparisonlemma}, we have that 
\begin{align*}
1.2g^2_{*}(0) > \sum_{u=1}^{n - d_2}{\frac{\binom{n-u-1}{d_2-1}}{\binom{n-1}{d_2-1}}{\left(\frac{\prod_{a=1}^{j}{(u-a)}}{j!}\right)}g^2_{*}(u)} &\geq 
\sum_{u=2j-1}^{n - d_2}{\frac{\binom{n-u-1}{d_2-1}}{\binom{n-1}{d_2-1}}{\left(\frac{\prod_{a=1}^{j}{(u-a)}}{j!}\right)}g^2_{*}(u)} \\
&\geq \sum_{u=2j-1}^{n - d_2}{\frac{\binom{n-u-1}{d_2-1}}{\binom{n-1}{d_2-1}}{\binom{2j-1}{j}(u-2j+1)}g^2_{*}(u)}
\end{align*}
Taking $k = u-2j+1$, $n' = n - 2j+2$, $\Delta = \frac{2d_2}{n'}$, and $g_2(x) = g_{*}\left(\frac{x}{\Delta} + 2j - 1\right)$,
\[
\frac{1.2(n-1)!(n'-d_2)!}{(n'-1)!(n-d_2)!\binom{2j-1}{j}}{{\Delta}^2}g^2_{2}(-(2j-1)\Delta) > 
{\Delta}\sum_{k=0}^{n' - d_2 - 1}{\frac{\binom{n'-k-2}{d_2-1}}{\binom{n'-1}{d_2-1}}(k\Delta)g^2_{2}(k\Delta)}
\]
By Theorem \ref{sumtointegraltheorem}, under the given conditions,
\[
{\Delta}\sum_{k=0}^{n' - d_2 - 1}{\frac{\binom{n'-k-2}{d_2-1}}{\binom{n'-1}{d_2-1}}(k\Delta)g^2_{2}(k\Delta)} \geq 
\frac{3}{20}\int_{x = 0}^{\infty}{g^2_2(x)xe^{-x}dx}
\]
Thus,
\[
\frac{3}{20}\int_{x = 0}^{\infty}{g^2_2(x)xe^{-x}dx}
< \frac{1.2(n-1)!(n'-d_2)!}{(n'-1)!(n-d_2)!\binom{2j-1}{j}}{{\Delta}^2}g^2_{2}(-(2j-1)\Delta)
\]
Decomposing $g_2$ as $g_2 = \sum_{i=0}^{d}{{c_i}h_i}$, observe that 
\begin{enumerate}
\item $\frac{3}{20}\int_{x = 0}^{\infty}{g^2_2(x)xe^{-x}dx} = \frac{3}{20}\left(\sum_{i=0}^{d}{c^2_i}\right)$.
\item By Cauchy-Schwarz,
\[
g^2_{2}(-(2j-1)\Delta) = \left(\sum_{i=0}^{d}{{c_i}h_i(-(2j-1)\Delta)}\right)^2 \leq 
\left(\sum_{i=0}^{d}{c^2_i}\right)\left(\sum_{i=0}^{d}{h_i(-(2j-1)\Delta)^2}\right)
\]
\end{enumerate}
By Lemma \ref{nearzerobound}, for all $i \in \mathbb{N}$ and all $x \in \mathbb{R}$, $|h_i(x)| \leq \sqrt{i+1}e^{i|x|}$. Thus,
\[
g^2_{2}(-(2j-1)\Delta) \leq (d+1)^{2}e^{2d(2j-1)\Delta}
\]
Putting these pieces together,
\[
\frac{3}{20} < \frac{1.2(n-1)!(n'-d_2)!}{(n'-1)!(n-d_2)!\binom{2j-1}{j}}{{\Delta}^2}(d+1)^{2}e^{2d(2j-1)\Delta}
\]
However, $\frac{(n-1)!(n'-d_2)!}{(n'-1)!(n-d_2)!\binom{2j-1}{j}}{{\Delta}^2}(d+1)^{2}e^{2d(2j-1)\Delta} \leq \frac{1}{10}$ so this gives $\frac{3}{20} = .15 < .12$, which is a contradiction.
\end{proof}
\end{appendix}
\end{document}